\documentclass{article}

\usepackage{etoolbox}
\usepackage{refcount}
\usepackage{titlecaps}
\usepackage{xstring}
\usepackage[x11names]{xcolor}


\usepackage{amsmath, mathtools, stmaryrd}

\usepackage{bbm}
\usepackage{extarrows}
\usepackage{scalerel}
\usepackage{upgreek}
\usepackage{wasysym}


\usepackage{amsthm}

\newtheoremstyle{default} 
{.75\baselineskip} 
{.75\baselineskip} 
{} 
{} 
{} 
{} 
{.5em minus 2pt} 
{\textsb{\thmname{#1}}\textbf{\thmnumber{ #2.}}\thmnote{\hspace{.5em minus 2pt}\emph{#3}.}} 



\usepackage{tikz-cd}
\usetikzlibrary{arrows.meta}
\usetikzlibrary{decorations.markings}
\usetikzlibrary{nfold}

\usepackage{graphicx}
\usepackage[htt]{hyphenat} 
\usepackage{microtype}


\usepackage[T1]{fontenc}
\usepackage[p,osf]{cochineal}

\usepackage[cloname=fs,fontsize=11]{fontsize}
\generateclofile{fs}{11}

\newcommand*{\crimsonsb}{\fontfamily{Crimson-TLF}\fontseries{sb}\selectfont} 
\DeclareTextFontCommand{\textsb}{\crimsonsb}

\usepackage[scale=0.88,zerostyle=d]{newtxtt}

\usepackage[libertine,varbb]{newtxmath} 
\usepackage[cal=boondoxo,frak=euler,bb=dsserif]{mathalpha} 
\DeclareMathAlphabet{\mathsf}{T1}{LinuxBiolinumT-OsF}{m}{n} 

\usepackage{adforn}
\usepackage{pifont}


\usepackage{geometry}
\geometry{
    twoside,
    paper=a4paper,
    layout=letterpaper,
    includeheadfoot,
    layouthoffset=-3mm,
    layoutvoffset=9mm,
    top=2.2cm, bottom=1.2cm, left=4.25cm, right=4.25cm,
    headsep=1.5\baselineskip,
    footnotesep=1.5\baselineskip plus 2pt minus 4pt,
}


\usepackage[indent=1em, skip=0pt]{parskip}


\usepackage[inline]{enumitem}
\setlist{
  topsep=\itemsep,
  parsep=0pt,
  leftmargin=2em,
}
\setlist[enumerate]{
  labelsep=1ex,
}
\setlist[itemize]{
  label={\scriptsize\ding{98}},
  labelsep=.75em,
}



\usepackage{titling}
\pretitle{\vspace{-4\baselineskip}\begin{center}\crimsonsb\Largerr}
\posttitle{\end{center}\vspace{0.25\baselineskip}}
\predate{}
\postdate{}

\renewenvironment{abstract}
  {\begin{quotation}\noindent\textsb{\small Abstract\hspace{.75ex}\textbullet\hspace{.75ex}}\small}
  {\end{quotation}}

\makeatletter
\def\titlebreak{\\}
\makeatother

\usepackage{titlesec}


\titleformat{\section}[block]
  {\centering}{\bfseries\large\thesection.}
  {1ex}{\crimsonsb\larger}

\titleformat{\subsection}
  {}{\bfseries\normalsize\thesubsection.}
  {1ex}{\crimsonsb\normalsizer}

\titlespacing*{\section}{0pt}{*5}{*2}
\titlespacing*{\subsection}{0pt}{*3}{*1}

\titleformat{\paragraph}[runin]
  {}{\bfseries\theparagraph}
  {1ex}{\crimsonsb}[.]

\titlespacing*{\paragraph}{0ex}{\baselineskip}{.5em}


\usepackage{fancyhdr}

\fancypagestyle{plain}{\fancyhead{}\fancyfoot{}}
\fancyfoot{}

\fancyhead[EL,OR]{\thepage}
\fancyhead[ER,OL]{}
\fancyhead[EC]{\textsc{\MakeLowercase{\theauthor}}}
\makeatletter
\fancyhead[OC]{\def\titlebreak{}\textsc{\MakeLowercase{\@title}}}
\makeatother

\pagestyle{fancy}


\usepackage{etoc}


\usepackage{float}
\usepackage{subcaption}


\usepackage[hyphens]{url}
\usepackage[pdfusetitle,breaklinks,colorlinks,allcolors=Blue3]{hyperref}


\usepackage[
  citestyle=alphabetic,
  bibstyle=alphabetic,
  maxnames=6,
  maxalphanames=4
  ]{biblatex}

\setlength\biblabelsep{1ex}

\providecommand*{\mkibid}[1]{#1}
\newbool{cbx:loccit}

\DeclareBibliographyOption{ibidpage}[true]{%
    \ifstrequal{#1}{true}
    {\ExecuteBibliographyOptions{loccittracker=constrict}}
    {\ExecuteBibliographyOptions{loccittracker=false}}}

\DeclareCiteCommand{\cite}[\mkbibbrackets]
    {\usebibmacro{prenote}}
    {\usebibmacro{citeindex}%
        \usebibmacro{cite}}
    {\multicitedelim}
    {\usebibmacro{cite:postnote}}

\renewbibmacro*{cite}{%
    \global\boolfalse{cbx:loccit}%
    \ifthenelse{\ifciteibid\AND\NOT\iffirstonpage}
      {\usebibmacro{cite:ibid}}
      {\printtext[bibhyperref]{%
        \printfield{prefixnumber}%
        \printfield{labelalpha}%
        \printfield{extraalpha}}}}

\newbibmacro*{cite:ibid}{%
    \printtext[bibhyperref]{\bibstring[\mkibid]{ibidem}}%
    \ifloccit
      {\global\booltrue{cbx:loccit}}
      {}}

\newbibmacro*{cite:postnote}{%
    \ifbool{cbx:loccit}
      {}
      {\usebibmacro{postnote}}}


\usepackage[capitalize]{cleveref}
\crefformat{footnote}{#2\footnotemark[#1]#3} 



\newcommand{\textmathcal}[1]{\(\mathcal{#1}\)}
\newcommand{\textmathfrak}[1]{\(\mathfrak{#1}\)}

\newcommand{\eqntextspace}{\hspace{1.5ex}}
\newcommand{\eqntext}[1]{\eqntextspace\text{#1}\eqntextspace}
\newcommand{\eqntextall}{\eqntext{for all}}

\newcommand{\eqnequalspace}{\hspace{1ex}}
\newcommand{\eqnequal}[1]{\eqnequalspace\xlongequal{#1}\eqnequalspace}
\newcommand{\eqnequiv}{\eqnequalspace\equiv\eqnequalspace}

\newcommand{\cC}{\mathcal{C}}

\newcommand{\cE}{\mathcal{E}}

\newcommand{\cU}{\mathcal{U}}

\newcommand{\fI}{\mathfrak{I}}
\newcommand{\fP}{\mathfrak{P}}
\newcommand{\fQ}{\mathfrak{Q}}

\newcommand{\fS}{\mathfrak{S}}
\newcommand{\fX}{\mathfrak{X}}
\newcommand{\fc}{\mathfrak{c}}

\newcommand{\fp}{\mathfrak{p}}
\newcommand{\fq}{\mathfrak{q}}

\newcommand{\Greek}[1]{\mathrm{#1}} 

\newcommand{\Beta}{\Greek{B}}
\newcommand{\Epsilon}{\Greek{E}}

\newcommand{\Zeta}{\Greek{Z}}

\DeclareFontFamily{U}{mathb}{}
\DeclareFontShape{U}{mathb}{m}{n}{
  <-5.5> mathb5 <5.5-6.5> mathb6
  <6.5-7.5> mathb7 <7.5-8.5> mathb8 <8.5-9.5> mathb9 <9.5-11> mathb10
  <11-> mathb12}{}
\DeclareFontSubstitution{U}{mathb}{m}{n}
\DeclareRobustCommand{\blackdiamond}{\text{\usefont{U}{mathb}{m}{n}\symbol{"0C}}}

\makeatletter
\newcommand{\fdsy@scale}{1.0}
\newcommand\fdsy@mweight@normal{Book}
\newcommand\fdsy@mweight@small{Book}
\newcommand\fdsy@bweight@normal{Medium}
\newcommand\fdsy@bweight@small{Medium}
\DeclareFontFamily{U}{FdSymbolA}{}
\DeclareFontShape{U}{FdSymbolA}{m}{n}{
    <-7.1> s * [\fdsy@scale] FdSymbolA-\fdsy@mweight@small
    <7.1-> s * [\fdsy@scale] FdSymbolA-\fdsy@mweight@normal}{}
\DeclareFontShape{U}{FdSymbolA}{b}{n}{
    <-7.1> s * [\fdsy@scale] FdSymbolA-\fdsy@bweight@small
    <7.1-> s * [\fdsy@scale] FdSymbolA-\fdsy@bweight@normal}{}
\makeatother

\DeclareSymbolFont{fdsymbols}{U}{FdSymbolA}{m}{n}
\SetSymbolFont{fdsymbols}{bold}{U}{FdSymbolA}{b}{n}

\DeclareMathSymbol{\oequal}{\mathbin}{fdsymbols}{110}


\newcommand{\ZF}{ZF}

\newcommand{\blank}{\_}
\newcommand{\midblank}{\textnormal{--}}

\newcommand{\equivalent}{\simeq}
\newcommand{\isomorphic}{\cong}

\newcommand{\defeq}{\mathbin{\vcentcolon\equiv}}
\newcommand{\eqndefeq}{\eqnequalspace\defeq\eqnequalspace}

\newcommand{\primed}[1]{{#1}^\prime}
\newcommand{\pprimed}[1]{{#1}^{\prime\hspace{-1pt}\prime}}

\newcommand{\comp}{\mathbin{\circ}}

\newcommand{\transpose}[1]{{#1}^\mathrm{T}}


\RequirePackage{adjustbox}

\def\tytfont{\mathrm}
\def\tytsffont{\mathsf} 
\def\tytcalfont{\mathcal} 

\newcommand{\lawbeta}{\upbeta}
\newcommand{\laweta}{\upeta}

\newcommand{\oftype}{\mathrel{:}}

\newcommand{\MLTT}{MLTT}
\newcommand{\HoTT}{HoTT}
\newcommand{\twoLTT}{\textlf{2}LTT{}}

\newcommand{\UniverseType}{\tytcalfont{U}}
\NewDocumentCommand{\El}{s}{\tytfont{El}\IfBooleanTF{#1}{\,}{}}
\newcommand{\UniverseSuc}[1]{{#1}^{+}}

\newcommand{\SetType}{\tytfont{Set}}

\newcommand{\UnitType}{\mathbb{1}}

\newcommand{\EmptyType}{\mathbb{0}}
\NewDocumentCommand{\EmptyTypeelim}{s}{\EmptyType\text{-}\tytfont{elim}\IfBooleanTF{#1}{\,}{}}

\NewDocumentCommand{\Pitype}{sO{}sO{}}{
  \IfBooleanTF{#1}%
    {\raisebox{-0.12\baselineskip}{\Large\(\Pi\)}%
      \ifblank{#2}
        {}
        {\,{\raisebox{-1pt}{\footnotesize\(#2\)}} 
          \ifblank{#4}{}{\IfBooleanTF{#3}{\hspace{1pt},\,{#4}}{\;{#4}}}}}%
    {\Pi\ifblank{#2}{}{\,{#2}\ifblank{#4}{}{\,{#4}}}}
}

\NewDocumentCommand{\Sigmatype}{sO{}sO{}}{
  \IfBooleanTF{#1}%
    {\raisebox{-0.12\baselineskip}{\Large\(\Sigma\)}%
      \ifblank{#2}
        {}
        {\,{\raisebox{-1pt}{\footnotesize\(#2\)}}%
          \ifblank{#4}{}{\IfBooleanTF{#3}{\hspace{1pt},}{}{\,{#4}}}}}%
    {\Sigma%
      \ifblank{#2}
        {}
        {\,{#2}
          \ifblank{#4}{}{\IfBooleanTF{#3}{,}{}{\,{#4}}}}}%
}

\NewDocumentCommand{\fst}{s}{\tytfont{fst}\IfBooleanTF{#1}{\,}{}}
\NewDocumentCommand{\snd}{s}{\tytfont{snd}\IfBooleanTF{#1}{\,}{}}

\NewDocumentCommand{\equivto}{O{}}{\xrightarrow[#1]{\sim}}
\newcommand{\isequiv}{\tytfont{is\text{-}equiv}}

\newcommand{\refl}{\tytfont{refl}}
\newcommand{\pathinv}[1]{{#1}^{-1}} 
\newcommand{\pathcomp}{\mathbin{\cdot}} 
\NewDocumentCommand{\transpover}{O{}m}
  {{}_{\,\downarrow}{\ifblank{#1}{_{\,{#2}}}{^{#1}_{\,{#2}}}}}
\NewDocumentCommand{\ap}{O{}O{}}
  {\tytfont{ap}\ifblank{#1}{}{\,{#1}\ifblank{#2}{}{\,{#2}}}}

\newcommand{\fiber}[1]{{#1}^{-1}}

\NewDocumentCommand{\Prop}{O{}}{\tytfont{Prop}_{\ifblank{#1}{}{#1}}}

\newcommand{\Sigmatypeeq}{\tytfont{pair^{=}}}


\def\ctfont{\mathrm}

\newcommand{\Ob}[1]{{#1}_0}
\NewDocumentCommand{\Hom}{sO{}mm}{
  \ifblank{#2}{\ctfont{hom}}{#2}%
  \IfBooleanTF{#1}%
    {{#3}\,{#4}}%
    {({#3},\hspace{1.5pt}{#4})}
}

\NewDocumentCommand{\id}{O{}}
  {\ctfont{id}\ifblank{#1}{}{_{#1}}}

\newcommand{\opcat}[1]{{#1}^\mathrm{op}}
\newcommand{\arrowcat}[1]{{#1}^{\shortrightarrow}}

\newcommand{\Set}{\ctfont{Set}}

\newcommand{\whisker}{\mathbin{\ast}}

\newcommand{\associator}{\alpha}
\newcommand{\invassociator}{\pathinv{\associator}}
\newcommand{\lunitor}{\lambda}
\newcommand{\runitor}{\rho}



\newcommand{\Con}{\tytsffont{Con}}

\NewDocumentCommand{\Sub}{O{}O{}}
  {\tytsffont{Sub}\ifblank{#1}{}{\,{#1}\ifblank{#2}{}{\,{#2}}}}

\newcommand{\subcomp}{}

\newcommand{\emptycon}{\blackdiamond}

\NewDocumentCommand{\Ty}{s}
  {\tytsffont{Ty}\IfBooleanTF{#1}{\,}{}}

\NewDocumentCommand{\Tm}{O{}s}
  {\tytsffont{Tm}\ifblank{#1}{}{_{\,{#1}}}\IfBooleanTF{#2}{\,}{}}

\newcommand{\subst}[1]{\tytsffont{[}{#1}\tytsffont{]}}
\newcommand{\substT}[1]{{\subst{#1}_\tytsffont{T}}}
\newcommand{\substt}[1]{{\subst{#1}_\tytsffont{t}}}

\newcommand{\cwfproj}{\tytsffont{p}}
\newcommand{\cwfvar}{\tytsffont{q}}
\newcommand{\ctxext}{\mathord{.}\hspace{1pt}}
\newcommand{\subext}{\hspace{1.5pt}\mathord{,}\hspace{2pt}}

\newcommand{\substTid}{{\substT{\tytsffont{id}}}}
\newcommand{\substtid}{{\substt{\tytsffont{id}}}}
\newcommand{\substTcomp}{{\substT\subcomp}}
\newcommand{\substtcomp}{\substt\subcomp}
\newcommand{\cwfprojbeta}{\cwfproj\upbeta}
\newcommand{\cwfvarbeta}{\cwfvar\upbeta}
\newcommand{\subexteta}{\subext\upeta}
\newcommand{\subextcomp}{\subext\subcomp}


\newcommand{\emphb}[1]{\textsb{#1}}
\newcommand{\C}{\cC}

\newcommand{\catcomp}{\mathbin{\diamond}}
\newcommand{\UniverseCat}{\cU}

\newcommand{\Sect}{\tytfont{Sect}}
\newcommand{\Retr}{\tytfont{Retr}}

\newcommand{\wildequiv}[3]{{#2} \equivalent_{#1} {#3}}
\newcommand{\wildiso}[3]{{#2} \isomorphic_{#1} {#3}}
\newcommand{\wildisotoequiv}[1]{\ctfont{isotoeqv}_{#1}}
\newcommand{\idd}{\ctfont{idd}}
\newcommand{\idtowildequiv}[1]{\ctfont{idtoeqv}_{#1}}
\newcommand{\wildequivtoid}[1]{\ctfont{eqvtoid}_{#1}}

\newcommand{\trianglecoh}[2]{\triangle_{{#1},\,{#2}}}
\newcommand{\pentagoncoh}[4]{\pentagon_{{#1},\,{#2},\,{#3},\,{#4}}}

\newcommand{\Cospan}{\mathrm{Cospan}}
\newcommand{\CommSq}{\mathrm{CommSq}}
\newcommand{\sqcomp}{\mathbin{\raisebox{0.08\baselineskip}{\(\scriptscriptstyle\boxempty\)}}}

\newcommand{\squarehpaste}[2]{{#1}\mid{#2}}
\newcommand{\squarevpaste}[2]{\frac{#1}{#2}}
\newcommand{\squarevpasteP}{\squarevpaste\midblank\fP}
\newcommand{\Psqcomp}[1]{\fP \sqcomp_{#1} \blank}
\newcommand{\ispb}{\tytfont{is\text{-}pullback}}
\newcommand{\CommSqV}{\CommSq}

\newcommand{\Pullback}{\tytfont{Pullback}}

\newcommand{\idpb}[1]{\fI_{#1}}

\newcommand{\thecwf}{\C}

\newcommand{\eqsubsubstT}[1]{{{}\substT{^= {#1}}}}
\newcommand{\eqsubsubstt}[1]{{{}\substt{^= {#1}}}}

\newcommand{\eqtypesubstT}[2]{{#1}\substT{#2}}
\newcommand{\eqtermsubstt}[2]{{#1}\substt{#2}}

\newcommand{\UniverseCwf}{\UniverseCat}
\newcommand{\SetCwf}{\SetType}

\newcommand{\subeq}{\tytfont{sub}^{=}}
\newcommand{\subeqaux}{\subeq_0}

\newcommand{\subetaequality}[1]{\laweta^{\tytfont{sub}}_{#1}}

\newcommand{\conjsubstTcomp}[1]{\pathinv\substTcomp \pathcomp {#1} \pathcomp \substTcomp}

\newcommand{\liftT}[1]{{}^{\,.\,{#1}}}
\newcommand{\subliftT}[2]{{#1}\liftT{#2}}
\newcommand{\substTpb}[2]{\fP_{#1,\,{#2}}}
\newcommand{\substTpbtwocell}{\pathinv\cwfprojbeta}
\newcommand{\thegapmap}{\mu}
\newcommand{\gapmap}[2]{\thegapmap_{#1, #2}}

\newcommand{\thecleaving}{\mathfrak{cl}}
\newcommand{\cleaving}[4]{\thecleaving_{{#1}, {#2}}({#3}, {#4})}
\newcommand{\cleavinglift}[2]{\mathcal{l}_{{#1}, {#2}}}
\newcommand{\cleavingcomm}[2]{\fp_{{#1}, {#2}}}

\NewDocumentCommand{\cwfPitype}{O{}O{}}{\hat{\Uppi}\ifblank{#1}{}{\,{#1}\ifblank{#2}{}{\,{#2}}}}
\NewDocumentCommand{\cwflam}{s}{\uplambda\IfBooleanTF{#1}{\,}{}}
\NewDocumentCommand{\cwfapp}{s}{\tytfont{app}\IfBooleanTF{#1}{\,}{}}

\NewDocumentCommand{\cwfSigmatype}{O{}O{}}{\hat{\Upsigma}\ifblank{#1}{}{\,{#1}\ifblank{#2}{}{\,{#2}}}}

\newcommand{\cwfUtype}{\mathbb{U}}
\NewDocumentCommand{\cwfEl}{s}{\tytfont{El}\IfBooleanTF{#1}{\,}{}}

\NewDocumentCommand{\cwfEmptyUtype}{O{}}{\cwfUtype^\EmptyType\ifblank{#1}{}{_{#1}}}
\NewDocumentCommand{\cwfEmptyEl}{O{}s}{\cwfEl^\EmptyType\ifblank{#1}{}{_{#1}}\IfBooleanTF{#2}{\,}{}}
\NewDocumentCommand{\cwfUnitUtype}{O{}}{\cwfUtype^\UnitType\ifblank{#1}{}{_{#1}}}
\NewDocumentCommand{\cwfUnitEl}{O{}s}{\cwfEl^\UnitType\ifblank{#1}{}{_{#1}}\IfBooleanTF{#2}{\,}{}}

\NewDocumentCommand{\Tel}{s}
  {\tytsffont{Tel}\IfBooleanTF{#1}{\,}{}}

\NewDocumentCommand{\emptytel}{O{}}{\scaleobj{.8}{\bullet}_{\ifblank{#1}{}{#1}}}

\theoremstyle{default}
\newtheorem{definition}{Definition}[subsection]
\newtheorem{theorem}[definition]{Theorem}
\newtheorem{lemma}[definition]{Lemma}
\newtheorem{proposition}[definition]{Proposition}
\newtheorem{corollary}[definition]{Corollary}
\newtheorem{example}[definition]{Example}
\newtheorem{examples}[definition]{Examples}
\newtheorem{nonexample}[definition]{Non-example}

\newtheorem{remark}[definition]{Remark}
\newtheorem{conjecture}[definition]{Conjecture}

\AddToHook{env/theorem/begin}{\crefalias{definition}{theorem}}
\AddToHook{env/lemma/begin}{\crefalias{definition}{lemma}}
\AddToHook{env/proposition/begin}{\crefalias{definition}{proposition}}
\AddToHook{env/corollary/begin}{\crefalias{definition}{corollary}}
\AddToHook{env/example/begin}{\crefalias{definition}{example}}
\AddToHook{env/examples/begin}{\crefalias{definition}{examples}}
\AddToHook{env/nonexample/begin}{\crefalias{definition}{nonexample}}
\AddToHook{env/remark/begin}{\crefalias{definition}{remark}}
\AddToHook{env/conjecture/begin}{\crefalias{definition}{conjecture}}


\newcommand{\orcidsymbol}{\raisebox{-.5pt}{\includegraphics[height=9pt, keepaspectratio]{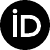}}}
\makeatletter
\newcommand{\@authdata}[3]{{#1} \href{#2}{\orcidsymbol}\\{\small{#3}}}
\newcommand{\authordata}[3]{
  \author{\@authdata{#1}{#2}{#3}}
  \def\theauthor{#1}
  \date{}}
\makeatother


\newfloat{diag}{ht}{diag}[definition]
\floatname{diag}{Diagram}
\DeclareCaptionLabelFormat{diag}{\textsb{#1} \textbf{#2}}
\captionsetup[diag]{labelformat=diag,labelsep=period}

\DeclareCaptionSubType{diag}

\crefname{diag}{diagram}{diagrams}

\tikzcdset{
  path/.style={
    equal,
    decoration={
      markings,
      mark=at position 0.58 with {\arrow[scale=1,line width=0.63pt]{>}}
    },
    postaction={decorate}
  }
}
\tikzcdset{paths/.style={arrows=path, labels={inner sep=1.3ex}}}

\tikzcdset{
  pathnearend/.style={
    equal,
    decoration={
      markings,
      mark=at position 0.85 with {\arrow[scale=1,line width=0.63pt]{>}}
    },
    postaction={decorate}
  }
}

\tikzset{commutative diagrams/.cd,
commutesstyle/.style={start anchor=center,end anchor=center,draw=none}
}
\newcommand\markcomm[2][\circ]{%
  \arrow[commutesstyle]{#2}[description]{#1}}


\addbibresource{2-coh-internal-models.bib}


\title{2-Coherent Internal Models \titlebreak of Homotopical Type Theory}
\authordata{Joshua Chen}{https://orcid.org/0000-0001-5041-0794}{School of Computer Science, University of Nottingham}


\begin{document}

\maketitle

\begin{abstract}
  The program of \emph{internal type theory} seeks to develop the categorical model theory of dependent type theory using the language of dependent type theory itself.
  In the present work we study internal homotopical type theory by relaxing the notion of a category with families (cwf) to that of a \emph{wild}, or \emph{precoherent higher} cwf, and determine coherence conditions that suffice to recover properties expected of models of dependent type theory.
  The result is a definition of a \emph{split 2-coherent wild cwf}, which admits as instances both the syntax and the 
  ``standard model'' given by a universe type.
  This will allow us to give a straightforward internalization of the notion of a 2-coherent reflection of homotopical type theory in itself---namely as a 2-coherent wild cwf morphism from the syntax to the standard model.
  Our theory also easily specializes to give definitions of ``low-dimensional'' higher cwfs, and conjecturally includes the container higher model as a further instance.
\end{abstract}
\section{Introduction}\label{sec:intro}

\subsection{Internal type theory}

Given a sufficiently expressive logical system \textmathfrak{L}, it is interesting and productive to ask
\begin{center}\emph{To what extent does \textmathfrak{L} internalize itself?}\end{center}
In more detail, one seeks to develop a suitable notion of \emph{interpreting structure}, aka \emph{model}, of \textmathfrak{L}, and to study the theory of 
such models, entirely within the language and logic of \textmathfrak{L} itself.%
\footnote{Immediate disclaimer: in this paper the term ``model'' always refers to this logical notion, and not to homotopy theoretic \emph{model structures}.
There are, of course, models of homotopy type theory in model structures.}
The techniques and perspectives granted by 
the study of these \emph{inner models} have historically been used to surprising effect, e.g.\ to prove \ZF-relative consistency of the axiom of choice and the continuum hypothesis~\cite{goedel:38:consistency-ac-gch}, or to show independence of the Whitehead problem---a statement in homological algebra about short exact sequences of abelian groups---from the traditionally accepted set theoretic foundations of mathematics~\cite{shelah:74:whitehead}.

The program of \emph{internal type theory}, first articulated by Dybjer~\cite{dybjer:96:internal-tt}, seeks to develop the same paradigm in the setting of dependent type theory%
\footnote{For conciseness, henceforth simply ``type theory''.}
by studying the categorical model theory of intensional Martin-L\"{o}f type theory (\MLTT{}) using \MLTT{}. 
Among the type theory community, this is also known as \emph{internal model theory} of type theory.

The foundational task of internal type theory is then to give a type theoretic definition of the notion of a categorical ``model'' of type theory, ensuring that it 
captures all the examples we care about.
In early work, Buisse and Dybjer~\cite{buisse-dybjer:08:towards-formalizing} formalize the type of \emph{1-categories with families} (\emph{1-cwfs})~\cite{dybjer:96:internal-tt,hofmann:97:syntax-semantics} in \MLTT{}, and discuss approaches to constructing term models and the initial cwf.
Later, Ahrens, Lumsdaine and Voevodsky~\cite{alv:18:categorical-structures} consider a number of other kinds of model in the same setting,%
\footnote{On occasion, assuming univalence.}
and show the equivalence of 1-cwfs with
(split) type categories~\cites[Definition 2.2.1]{vandenberg-garner:12:models-id-types}{pitts:95:cat-logic,pitts:01:cat-logic}
(aka categories with attributes~\cites[\S6]{moggi:91:categorical-account}{hofmann:97:syntax-semantics})
and representable natural transformations
(aka natural models~\cite{awodey:18:natural-models}),
via relative universes.

Two structures that we would particularly like to exhibit as internal models of a type theory are its \emph{syntax} and, if present, its \emph{universe type}.
Altenkirch and Kaposi~\cite{altenkirch-kaposi:16:tt-in-tt} show that if \MLTT{} with uniqueness of identity proofs (UIP) is extended with \emph{quotient inductive-inductive types} (\emph{QIITs})~\cite{alt+:18:qiits,kka:19:qiits} then the strongly typed term-cwf sketched by Buisse and Dybjer~\cite[\S6.1]{buisse-dybjer:08:towards-formalizing} can be internally constructed, thus yielding an internal syntactic model.
Even better, from this inductive construction it immediately follows that internal 1-cwfs satisfy the \emph{initiality principle}, which posits that syntax should give rise to an initial object in an appropriate (higher) category of models. 
Assuming UIP also allows any universe type in \MLTT{} to be equipped with a canonical 1-cwf structure, giving rise to an internal model known in the literature as the \emph{standard model}~\cites[\S4]{altenkirch-kaposi:16:tt-in-tt}[Example 3]{kraus:21:internal}.

Now, for any internal definition of model 
of \MLTT{} which includes as instances 
both the syntax \textmathcal{S} and a universe \textmathcal{U}, we may ask if the type of model morphisms 
from \textmathcal{S} to \textmathcal{U} is definable and inhabited.
Elements of such a type may be viewed as \emph{self-interpretations}, or \emph{reflections} of \MLTT{} in itself.%
\footnote{The self-interpretation---as opposed to just the internal \emph{representation}---of type theory is a fascinating topic of study; Rendel, Ostermann and Hofer~\cite[\S2]{roh:09:typed-self-representation} give an excellent account of the subtle distinctions involved, albeit in the context of typed lambda calculi.}
By the preceding discussion, in \MLTT{}+\(\UniverseType\)+UIP+QIITs the type of 1-cwf morphisms from the syntax to the standard universe model is definable, and moreover inhabited by eliminating the syntax into the universe.
We may colloquially summarize all of this by saying that any model%
\footnote{Taking 1-cwfs, or any of the equivalent notions mentioned previously.}
of \MLTT{}+\(\UniverseType\)+UIP+QIITs has an inner syntactic model \textmathcal{S} of its \MLTT{} fragment, as well as an internalized reflection function \(\mathcal{S} \to \UniverseType\).%

\subsection{Internal homotopical type theory}

Can we tell a similar story about inner models of \emph{homotopical} \MLTT{}, i.e.\ without assuming UIP?
Unfortunately, the account given in the previous section depends heavily on the uniqueness of identity proofs---both for the semantics of QIITs, but also for the interpretation of the universe as an internal model.
In particular, it is well known that types with their higher identities possess \emph{\(\infty\)-groupoidal} structure in the absence of UIP~\cite{lumsdaine:10:weak-omega-categories,vandenberg-garner:11:weak-omega-groupoids,kapulkin-lumsdaine:21:simplicial-model}, which means that the canonical standard model on the universe \(\UniverseType\) is no longer a 1-cwf since its substitutions \(\UniverseType(A, B) \defeq A \to B\) are not generally sets.

We must therefore return to the foundational task of internal type theory, and ask what a good notion of internal model of homotopical type theory might look like.
This question is considered by Kraus~\cite{kraus:21:internal} in the setting of a two-level type theory (\twoLTT{})~\cite{acks:23:2ltt} that extends homotopical \MLTT{} with an additional layer of \emph{strict}, aka \emph{non-fibrant}, types, including an equality type former satisfying UIP.
With this extra structure it is now possible to define \(\infty\)-categories as semi-Segal types~\cite{capriotti-kraus:18:semisegal} having idempotent equivalences~\cite[\S{}III]{kraus:21:internal}; to answer the question of models by defining \emph{\(\infty\)-categories with families}; and to show that the syntax QIIT, the universe standard model, and any slice of an \(\infty\)-cwf are all \(\infty\)-cwfs.


However, models of \twoLTT{} are somewhat strong extensions of models of \MLTT{}, and it might therefore be argued that \(\infty\)-cwfs should not be considered to be inner models of homotopical \MLTT{} in itself.%
\footnote{Of course, even the account we gave of inner models of \MLTT{}+UIP in the previous section does not, strictly speaking, provide models of \MLTT{}+UIP in itself.
One source of tension in inner model theory is to make the gap between the theory of the outer and inner models (the ``host'' theory and the ``object'' theory) as small as possible.
How large this gap is allowed to be is typically a matter of some subjective judgment.}
Indeed, Kraus conjectures~\cite[\S{}VI]{kraus:21:internal} that plain homotopy type theory (\HoTT{})~\cite{hott-book} does not internalize itself, but instead considers the definition of \(\infty\)-cwfs to be a step towards showing that \twoLTT{} internalizes \HoTT{} and also itself.

Another potential approach is to work in the axiomatic variation of \emph{simplicial homotopy type theory}~\cite{riehl-shulman:17:shott,riehl-shulman:23:shott} developed by Gratzer, Weinberger and Buchholtz~\cite{gwb:24:directed-univalence-shott}, in which it is possible to give a straightforward definition of (\(\infty\),\,1)-categories as Segal or Rezk types.
However, in this theory representable presheaves are defined using additional modalities~\cite{gwb:25:yoneda-embedding-stt}, again making it a rather strong extension of plain \HoTT{}.


\subsection{Contributions}

Thus, in the present work we 
stick with the original question 
and fix homotopical \MLTT{} to be the theory of our outer models (i.e.\ the theory in which we perform our constructions).
Since defining a type of \(\infty\)-categories and, a fortiori, \(\infty\)-categorical models in this theory remains an open problem, we take the notion of a \emph{wild}, or \emph{precoherent higher} category with families as our starting point for ``internal model of homotopical \MLTT{}'' (\Cref{sec:wild-cwfs}).
We then determine coherence conditions such that sufficiently coherent internal cwfs satisfy many good properties expected of any model of type theory (\Cref{sec:2-coh-ctx-ext}).
In particular, we show that requiring \emph{2-coherence} suffices to equip any wild cwf \(\C\) with the expected cloven fibrational structure, which furthermore satisfies a coherent ``splitting'' property when the category of contexts of \(\C\) is either set-level or univalent (\Cref{sec:ctx-comprehension}).
This generalizes the well known result that 1-cwfs are equivalent to full split comprehension categories.
In order to do this, we first have to develop some wild category theory (\Cref{sec:wild-categories}), as well as a theory of pullbacks in wild categories (\Cref{sec:pullbacks}).

The benefit of our approach is that we are able to capture, in a single internally definable notion, both set-level models such as the syntax as well as untruncated higher models such as the universe.
We are then able to prove results simultaneously for all such models by stating them as generally as possible in terms of coherence, without resorting to truncation assumptions or restricting to some homotopy \(n\)-level.
Our theory is still easily applicable to models with ``low-dimensional'' higher homotopy: in particular, the simple generalization of the theory of 1-cwfs to allow the type presheaf to be valued in 1-types is (almost trivially) an instance of our theory.

\subsection{Assumptions and conventions}

\paragraph{Logical setting} Our definitions and constructions are in the setting of 
\MLTT{} with \(\Pitype\), \(\Sigmatype\) and intensional identity types, without assuming uniqueness of identity proofs.
We assume function extensionality as well as \(\laweta\) for \(\Pitype\)-types throughout.
We do not globally assume univalence, but instead take it to be a property that may or may not hold for a given universe type.

\paragraph{Univalent 1-category theory} Of the reader, we assume familiarity with the basics of homotopy type theory as well as the theory of 1-categories in univalent foundations, standard references for which are \cite{hott-book} and \cite{aks:15:univalent-cats}.

\paragraph{Transport} We frequently work with explicit transports over complicated path concatenations, and use the notation
\[ a \transpover[P] e \]
to denote the transport of an element \(a \oftype P(x)\) along an equality \(e \oftype x = y\).
We will also use, without explicit comment, the equation
\[ a \transpover[P] e \transpover[P] {\primed{e}} = a \transpover[P] {e \pathcomp \primed{e}} \,,\]
which holds by \cite[Lemma 2.3.9]{hott-book}.

\section[Wild categories]{Wild Categories}\label{sec:wild-categories}

To internalize categorical models of homotopical type theory we first have to define some type of suitable category-like structure that captures both the 1-categorical syntax as well as the \(\infty\)-categorical universe.
Until the problem of defining general \(\infty\)-categorical structures using only features of homotopical \MLTT{} is solved,
we consider the approximation given by \emph{wild}, or \emph{precoherent higher}, categories.
Variations on this notion appear a number of times in the literature and formalization libraries~\cites{capriotti-kraus:18:semisegal}[\S3.1]{hart-hou:24:coslice-colimits}{agda-unimath-wild-category-theory}{coq-hott-wild-cat}, typically as an intermediate stage toward defining more coherent structures.
We use a particularly simple definition: they are almost precategories~\cite[Definition 3.1]{aks:15:univalent-cats}, except that morphisms form \emph{types} instead of sets.

\begin{definition}[Wild categories]\label{def:wild-categories}
  A \emph{wild} or \emph{precoherent higher category} \(\C\) consists of:
  \begin{itemize}
  \item
    A type \(\Ob\C\) of \emphb{objects}.
  \item
    For all \(x, y \oftype \Ob\C\), a \emph{type} \(\Hom[\C]{x}{y}\) of \emphb{morphisms} from \(x\) to \(y\).
  \item
    An \emphb{composition} operation \(\catcomp\) of compatible morphisms, together with an \emphb{associator} \(\associator\) witnessing associativity of composition
    \[ (h \catcomp g) \catcomp f \xlongequal{\associator_{f,g,h}} h \catcomp g \catcomp f \]
    for all morphisms \(f\), \(g\), \(h\).
  \item
    \emphb{Identity morphisms} \(\id_x\) for all \(x \oftype \Ob\C\), together with \emphb{unitors} \(\lunitor\) and \(\runitor\) witnessing the left and right identity equations
    \begin{gather*}
      \id_y \catcomp f \xlongequal{\lunitor_f} f \\
      f \catcomp \id_x \xlongequal{\runitor_f} f
    \end{gather*}
    for all \(f \oftype \Hom[\C]xy\).%
  \end{itemize}
\end{definition}

Crucially, no further coherence laws on hom-types are required in the definition of a wild category.
We frequently leave the indices on the associator, unitors and identity morphisms implicit.

\begin{examples}[Wild categories]\label{eg:wild-categories}
  We are primarily interested in two particular kinds of wild category, which are in fact completely coherent.
  These are distinguished by the behavior of their hom-types: the ``maximally truncated'' and the ``nontrivially fully coherent''.

  The first kind consists of the wild categories whose morphisms form sets: any precategory---and hence, set-level or univalent 1-category---is immediately a wild category.

  The second kind consists of the type theoretic (sub-) universes.
  Any universe type \(\UniverseType\) gives rise to a wild category, also denoted \(\UniverseCat\), with objects \(\Ob\UniverseCat \defeq \UniverseType\), and whose hom-types \(\Hom[\UniverseCat]AB\) are the function types \(A \to B\).
  Composition is given by function composition and identity morphisms by identity functions.
  The associativity and unit laws hold definitionally, i.e.\ \(\associator\), \(\lunitor\) and \(\runitor\) are families of trivial equations.
  This definition applies equally well to any reflective subuniverse~\cite[Definition 7.7.1]{hott-book},
  \footnote{This should also straightforwardly yield a notion of ``wild reflective subcategory''; we do not develop this in this paper.}
  and in this way the \(m\)-modal types for any modality \(m\) on a universe \(\UniverseType\) \cites[\S7.7]{hott-book}[\S1]{rss:20:modalities} form a wild category.
  In particular, we have wild categories whose objects are the \(n\)-types in \(\UniverseType\), which might be seen as prototypical examples of wild \(\text{(\(n\),\,1)}\)-categories.
\end{examples}

We draw diagrams in wild categories after the familiar notation, with one important difference.
Since morphisms in a wild category can form arbitrary types, the commutativity of a given face of a diagram is no longer property, but \emph{data}.
Hence we denote commuting faces with the notation
\[\begin{tikzcd}
  x \rar["f",out=50,in=130,""{name=U,below}]
    \rar["g"{swap, yshift=-1pt},out=-55,in=-135,end anchor={[xshift=4pt]},""{name=D,above}]
  & y \arrow["\text{\footnotesize\(\,\gamma\)}",path,from=U,to=D]
\end{tikzcd}\]
to make explicit the commutativity witness \(\gamma \oftype f = g\).
As this notation suggests, equalities between morphisms in a wild category are considered to be \emph{2-cells}, and higher equalities, \emph{higher cells}.
We stress that these cells are not explicitly axiomatized as part of the generalized algebraic theory of a wild category, but instead arise out of the ambient homotopical type theory.

\begin{remark}[On the semantics of wild categories]
  The fact that a wild category is a proper generalization of a precategory therefore relies crucially on the homotopical nature of the identity type.
  This simple observation suggests that we distinguish the \emph{categorical} and \emph{typal} directions of a wild category: the 1-cells belong to the categorical direction, and the higher cells to the typal.
  This recalls the point of view---implicit in Rezk~\cite{rezk:01:model} and explicated by Joyal and Tierney~\cite{joyal-taylor:07:quasicats-segalspaces}---that Segal spaces (i.e.\ bisimplicial sets \(\opcat\Delta \times \opcat\Delta \to \Set\) satisfying the Segal condition) have categorical and \emph{spatial} directions.
  Indeed, Capriotti and Kraus~\cite[Theorem 4.9]{capriotti-kraus:18:semisegal} show (a version of) this to hold.%
  \footnote{By interpreting their type theoretic proof in the simplicial model of HoTT~\cite{kapulkin-lumsdaine:21:simplicial-model}.}

  Alternatively, following Lumsdaine~\cite{lumsdaine:10:weak-omega-categories} and van den Berg and Garner~\cite{vandenberg-garner:11:weak-omega-groupoids} in viewing types as Batanin-Leinster weak \(\omega\)-groupoids, it should also be possible to make precise the point of view that wild categories are categories ``wildly enriched'' in weak \(\omega\)-groupoids.%
  \footnote{That is, weakly enriched, but without requiring any coherences on \(n\)-cells for \(n \ge 2\).}
\end{remark}

\subsection{Common concepts}

Expectedly, a large number of elementary concepts from univalent 1-category theory~\cite{aks:15:univalent-cats} and bicategory theory~\cite{ahrens+:21:bicat-univalent-foundations} straightforwardly transfer into, and are subsumed by, the wild categorical setting.
We record them here, as well as in \Cref{subsec:2-coh,subsec:equivalence-univalence}, for completeness and future reference.

\begin{definition}[Terminal objects]\label{def:terminal-obj}
  An object \(y\) in a wild category \(\C\) is \emph{terminal} if \(\Hom[\C]xy\) is contractible for all objects \(x \oftype \Ob\C\).
\end{definition}

\begin{definition}[Sections and retractions]\label{def:section-retraction}
  The type of \emph{sections} of a morphism \(f \oftype \Hom[\C]xy\) in a wild category \(\C\) is
  \[ \Sect(f) \defeq \Sigmatype*[(s \oftype \Hom[\C]yx)]*[f \catcomp s = \id]. \]
  Similarly, the type of \emph{retractions} of \(f\) is
  \[ \Retr(f) \defeq \Sigmatype*[(r \oftype \Hom[\C]yx)]*[r \catcomp f = \id]. \]
\end{definition}

In contrast to 1-categories, sections and retractions in wild categories are not solely determined by their morphism component in general, but also by the identification of the section-retraction composite with the identity.

\begin{definition}[Whiskering]\label{def:whiskering}
  Given an equality \(\gamma \oftype g = \primed{g}\) of morphisms \(g, \primed{g} \oftype \Hom[\C]xy\), for any morphism \(f \oftype \Hom[\C]wx\) the \emph{right whiskering} \((\gamma \whisker f)\) of \(\gamma\) with \(f\) is the canonical equality
  \[\ap[(\blank \catcomp f)][\gamma] \oftype g \catcomp f = \primed{g} \catcomp f,\]
  and for any \(h \oftype \Hom[\C]yz\) the \emph{left whiskering} \((h \whisker \gamma)\) of \(\gamma\) with \(h\) is the equality
  \[ \ap[(h \catcomp \blank)][\gamma] \oftype h \catcomp g = h \catcomp \primed{g}.\]
\end{definition}

\begin{proposition}[Properties of whiskering]\label{prop:whiskering-properties}
  By induction, the following equations hold for right whiskering:
  \begin{gather*}
    \refl \whisker f \equiv \refl, \\
    \gamma \whisker \id = \runitor \pathcomp \gamma \pathcomp \pathinv\runitor, \\
    \pathinv{(\gamma \whisker f)} = \pathinv{\gamma} \whisker f, \\
    (\gamma \pathcomp \delta) \whisker f = (\gamma \whisker f) \pathcomp (\delta \whisker f),
  \end{gather*}
  as well as the analogous equations for left whiskering.
  We also have the following associativity laws expressing ``naturality'' of \(\associator\),
  \begin{gather*}
    g \whisker (f \whisker \gamma) = \invassociator \pathcomp ((g \catcomp f) \whisker \gamma) \pathcomp \associator, \\
    (\gamma \whisker g) \whisker f = \associator \pathcomp (\gamma \whisker (g \catcomp f)) \pathcomp \invassociator, \\
    (g \whisker \gamma) \whisker f = \associator \pathcomp (g \whisker (\gamma \whisker f)) \pathcomp \invassociator,
  \end{gather*}
  and the interchange law
  \[ (g \whisker \gamma) \pathcomp (\delta \whisker \primed{f}) = (\delta \whisker f) \pathcomp (\primed{g} \whisker \gamma) \]
  for all morphisms \(f\), \(\primed{f}\), \(g\), \(\primed{g}\) and equalities \(\gamma\), \(\delta\) as in
  \[\begin{tikzcd}
    x \rar["f",out=50,in=130,""{name=U,below}]
      \rar["\primed{f}"{swap, yshift=-1pt},out=-55,in=-135,end anchor={[xshift=4pt]},""{name=C,above}]
      \arrow["\text{\footnotesize\(\,\gamma\)}",path,from=U,to=C]
    & y \rar["g",out=50,in=130,""{name=V,below}]
        \rar["\primed{g}"{swap, xshift=-4pt},out=-50,in=-130,start anchor={[xshift=-2pt]},end anchor={[xshift=1pt]},""{name=D,above}]
        \arrow["\text{\footnotesize\(\,\delta\)}",path,from=V,to=D]
    & z
  \end{tikzcd}.\]
\end{proposition}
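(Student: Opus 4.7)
The plan is to prove each equation by path induction on its 2-cell arguments, using the fact that whiskering is defined via \(\ap\), which by the computation rule for path induction satisfies \(\ap[F][\refl] \equiv \refl\). After this reduction, every claim becomes a routine identity in the path groupoid of the ambient type theory, involving only associativity, identity and inverse laws of path concatenation.

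First I would dispatch the four basic right-whiskering equations. The equation \(\refl \whisker f \equiv \refl\) holds definitionally by unfolding the definition of \(\whisker\) as \(\ap\). For the unit law \(\gamma \whisker \id = \runitor \pathcomp \gamma \pathcomp \pathinv\runitor\), path induction on \(\gamma\) reduces the claim to \(\refl = \runitor \pathcomp \pathinv\runitor\), which holds by the right-inverse law for path concatenation. Compatibility with inversion reduces, again by induction on \(\gamma\), to \(\pathinv\refl = \refl\), and compatibility with concatenation reduces (by induction on, say, \(\delta\)) to \((\gamma \whisker f) = (\gamma \whisker f) \pathcomp \refl\). The analogous left-whiskering equations are proved by symmetric arguments, interchanging the roles of pre- and postcomposition in the definition of \(\ap\).

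For the three naturality laws of the associator, path induction on \(\gamma\) reduces each equation to a statement of the shape \(\refl = \invassociator \pathcomp \refl \pathcomp \associator\) (or its analogue); since \(\invassociator \equiv \pathinv\associator\), this holds by the left-inverse law for path concatenation. The interchange law is handled analogously, via double induction on \(\gamma\) and \(\delta\), which collapses both sides to \(\refl\).

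The main obstacle here is really only bookkeeping: each statement has carefully matched endpoints (for example, the unitors \(\runitor\) on either side of \(\gamma\) in the unit law are indexed at the two distinct endpoints of \(\gamma\)), so one must confirm that path induction aligns these correctly. Once \(\gamma \equiv \refl_g\) after induction, all such indices collapse to a single morphism and every equation becomes a pure identity of paths, invariably provable by the groupoid structure of the identity type. In particular, no coherences on hom-types of \(\C\) beyond those already present in \Cref{def:wild-categories} are needed — which explains why these laws remain true even in the wild setting.
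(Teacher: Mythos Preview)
Your proposal is correct and matches the paper's approach exactly: the paper's proof is entirely contained in the phrase ``By induction'' in the proposition statement, and you have faithfully unpacked what that induction amounts to for each clause. There is nothing to add.
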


\subsection{2-coherence}\label{subsec:2-coh}

The following two coherence conditions are familiar from bicategory theory.

\begin{definition}[Triangle coherators]\label{def:triangle-coherators-unitors}
  A wild category \(\C\) has \emph{triangle coherators} if for all morphisms
  \[ x \xrightarrow{f} y \xrightarrow{g} z \]
  in \(\C\), there is an equality
  \[ \trianglecoh{f}g \oftype \associator \pathcomp (g \whisker \lunitor) = \runitor \whisker f \]
  filling the triangle
  \[\begin{tikzcd}[paths, column sep=0ex]
    (g \catcomp \id) \catcomp f
      \ar[rr, "\associator"]
      \ar[dr, "\runitor \whisker f" swap]
    && g \catcomp \id \catcomp f
      \ar[dl, "g \whisker \lunitor"]
      \\
    & g \catcomp f &
  \end{tikzcd}\]
\end{definition}

\begin{definition}[Pentagonators]\label{def:pentagon-coherators-associators}
  \(\C\) has \emph{pentagon coherators for associators}, or \emph{(\(\associator\)-) pentagonators}, if for all composable chains
  \[ v \xrightarrow{f} w \xrightarrow{g} x \xrightarrow{h} y \xrightarrow{k} z \]
  of morphisms in \(\C\), there is an equality
  \[ \pentagoncoh{f}ghk \oftype
    (\associator \whisker f) \pathcomp \associator \pathcomp (k \whisker \associator)
    = \associator \pathcomp \associator \]
  filling the usual pentagon
  \[\begin{tikzcd}[
    paths, 
    column sep=-5ex,
    nodes={
      row 1/.style={row sep=4ex},
      row 2/.style={row sep=5ex}}
    ]
    & ((k \catcomp h) \catcomp g) \catcomp f
      \ar[dl, "\associator \whisker f"{swap, yshift=-1.75ex, xshift=-0.75ex},
        start anchor={[xshift=-1ex]}, end anchor={[xshift=1ex]}]
      \ar[dr, "\associator"{yshift=-1.25ex, xshift=0.5ex}, ""{name=U, below},
        start anchor={[xshift=1ex]}, end anchor={[xshift=-1ex]},] &
    \\
    (k \catcomp h \catcomp g) \catcomp f
      \dar["\associator"{swap, yshift=1ex, xshift=-0.5ex},
        start anchor={[xshift=3ex]}, end anchor={[xshift=4.5ex]}]
    &&
    (k \catcomp h) \catcomp g \catcomp f
      \dar["\associator"{yshift=1.25ex, xshift=0.5ex},
        start anchor={[xshift=-2.75ex]}, end anchor={[xshift=-4.25ex]}]
    \\
    |[xshift=1.5ex]| k \catcomp (h \catcomp g) \catcomp f
      \ar[rr, "k \whisker \associator" swap]
    &&
    |[xshift=-1.5ex]| k \catcomp h \catcomp g \catcomp f
  \end{tikzcd}\]
\end{definition}

\begin{definition}[2-coherent wild categories]\label{def:2-coherent-wild-cat}
  We call a wild category \emph{2-coherent} if it has triangle and pentagon coherators.
\end{definition}

In the literature, Hart and Hou~\cite{hart-hou:24:coslice-colimits} call 2-coherent wild categories \emph{bicategories}; we do not use this terminology as 2-coherent wild categories always have invertible 2-cells (and higher cells). 
2-coherent wild categories are also essentially the \emph{wild 2-precategories} of Capriotti and Kraus~\cite{capriotti-kraus:18:semisegal}; the difference is that their definition also includes the other two triangle coherators of \Cref{prop:more-coherences-2-coh-wild-cats} in the data of the type.
We also have the following link to the univalent bicategory theory of Ahrens et al.~\cite{ahrens+:21:bicat-univalent-foundations}:

\begin{proposition}\label{prop:2-coh-wild-cat-prebicat}
  Any 2-coherent wild category is also a \emph{prebicategory} in the sense of Ahrens et al.~\cite[Definition 2.1]{ahrens+:21:bicat-univalent-foundations}, by taking the type of 2-cells from \(f\) to \(g\) to be the equality type \(f = g\), and using \Cref{prop:whiskering-properties}.
\end{proposition}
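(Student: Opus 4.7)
The plan is to pattern-match directly against the prebicategory structure of Ahrens et al.\ using only the wild-categorical data of $\C$ and the results collected in \Cref{prop:whiskering-properties}. I would take the objects, 1-cells, and horizontal composition from $\C$ itself; the 2-cells between $f, g \oftype \Hom[\C]{x}{y}$ to be the identity type $f = g$; vertical composition, identity 2-cells, and vertical inverses to be $\pathcomp$, $\refl$, and path inversion; horizontal whiskering to be the operations from \Cref{def:whiskering}; and the associator and left and right unitor 2-cells to be $\associator$, $\lunitor$, $\runitor$ from \Cref{def:wild-categories}. Under this assignment, the (co-, invertibility, right/left) identity 2-cell data is automatic, since each hom-type of 2-cells is a groupoid under path concatenation purely by standard properties of the identity type; in particular, every associator and unitor 2-cell is invertible with no extra work.

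Next, I would verify the axioms governing whiskering and its compatibility with horizontal composition of 1-cells. That whiskering preserves $\refl$ and vertical composition, and is compatible with path inversion, is given by the first four equations of \Cref{prop:whiskering-properties}. The three ``naturality of $\associator$'' equations in the same proposition provide exactly the compatibilities between whiskering and horizontal composition of 1-cells required to make $\associator$ a natural family of 2-cells. Naturality of $\lunitor$ and $\runitor$ with respect to 2-cells is just the $\ap$-naturality of the identity type, applied to $\lunitor$ and $\runitor$ viewed as homotopies. The interchange law at the end of \Cref{prop:whiskering-properties} ensures that horizontal composition of 2-cells defined by whiskering either way is unambiguous.

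Finally, the triangle and pentagon coherence axioms of a prebicategory are literally $\trianglecoh{f}g$ and $\pentagoncoh{f}ghk$ from \Cref{def:triangle-coherators-unitors,def:pentagon-coherators-associators}, which are provided by the 2-coherence hypothesis on $\C$. The construction then assembles into a prebicategory with no residual data to supply.

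In short, the proof is essentially bookkeeping: every piece is available from the wild-categorical structure of $\C$, \Cref{prop:whiskering-properties}, and the 2-coherence hypothesis. The only real step that requires attention will be translating whatever precise formulation Ahrens et al.\ use for the naturality and coherence equations into the shape in which we have stated them: equations may be stated with $\associator$ or $\invassociator$ on a different side, or with path inverses distributed differently. Such rearrangements are purely formal manipulations with $\associator$, $\invassociator$ and the groupoid laws for $\pathcomp$, and I do not foresee any genuine obstacle.
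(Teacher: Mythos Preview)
Your proposal is correct and follows precisely the approach indicated in the paper: the paper gives no separate proof, as the statement itself already contains the argument (take 2-cells to be identity types and invoke \Cref{prop:whiskering-properties}), and your write-up is simply a careful unpacking of that sketch. The only additional ingredients beyond \Cref{prop:whiskering-properties} are the triangle and pentagon coherators supplied by the 2-coherence hypothesis, which you correctly identify.
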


\begin{examples}[2-coherent higher categories]\label{eg:2-coherent-wild-cat}
  Any precategory trivially satisfies all higher equalities between equalities of morphisms.
  The universe wild categories \(\UniverseCat\) have definitionally unital and associative composition of morphisms, and thus have trivial triangle and pentagon coherators.
\end{examples}

Many coherences
involving \(\lunitor\), \(\runitor\) and \(\associator\) that hold in all bicategories
also hold in 2-coherent wild categories---namely, those that do not rely on uniqueness of equality of 2-cells.
In particular,

\begin{proposition}\label{prop:more-coherences-2-coh-wild-cats}
  In a 2-coherent wild category \(\C\), there are witnesses (not necessarily unique) that
  \begin{itemize}
    \item \(\lunitor_{\id_x} = \runitor_{\id_x}\) for all \(x \oftype \Ob\C\), and
    \item the diagrams of equalities
      \[
      \begin{tikzcd}[paths, column sep=0pt]
        (\id \catcomp g) \catcomp f \ar[rr,"\associator"] \ar[dr,"\lunitor \whisker f" swap]
        && \id \catcomp g \catcomp f \ar[dl,"\lunitor"] \\
        & g \catcomp f &
      \end{tikzcd} \eqntext{and}
      \begin{tikzcd}[paths, column sep=0pt]
        (g \catcomp f) \catcomp \id \ar[rr,"\associator"] \ar[dr,"\runitor" swap]
        &&  g \catcomp f \catcomp \id \ar[dl,"g \whisker \runitor"] \\
        & g \catcomp f &
      \end{tikzcd}
      \]
      commute for all \(f \oftype \Hom[\C]xy\) and \(g \oftype \Hom[\C]yz\).
  \end{itemize}
  We refer to \cite[Propositions 2.2.4 and 2.2.6]{johnson-yau:21:2-dim-cats} for proofs of these facts.
\end{proposition}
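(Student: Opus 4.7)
The plan is to follow the standard bicategorical arguments of Kelly and Mac Lane that derive these additional coherences from the pentagon and the given middle triangle. The three ingredients I would use are: (i) the triangle coherator $\trianglecoh{f}{g}$ of \Cref{def:triangle-coherators-unitors}; (ii) the pentagonator $\pentagoncoh{f}{g}{h}{k}$ of \Cref{def:pentagon-coherators-associators}; and (iii) the whiskering properties of \Cref{prop:whiskering-properties}, especially the naturality equations for $\associator$ and the unit-whiskering formulas $\gamma \whisker \id = \runitor \pathcomp \gamma \pathcomp \pathinv\runitor$ and $\id \whisker \gamma = \lunitor \pathcomp \gamma \pathcomp \pathinv\lunitor$. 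All of these manipulations are purely equational and never require uniqueness of 2-cells, so they go through in the wild setting once the paths are assembled carefully.

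For the left triangle $\associator_{f,g,\id} \pathcomp \lunitor_{g \catcomp f} = \lunitor_g \whisker f$, I would instantiate $\pentagoncoh{f}{g}{\id}{\id}$ (or the mirror-image variant $\pentagoncoh{\id}{\id}{f}{g}$ for the right triangle) and rewrite four of its five edges using the given triangle $\trianglecoh{\blank}{\blank}$ at the appropriate sub-chains, plus the $\associator$-naturality laws $(\gamma \whisker g) \whisker f = \associator \pathcomp (\gamma \whisker (g \catcomp f)) \pathcomp \invassociator$ and its cousins. After cancelling associators using their formal inverses, the pentagon collapses to an equation between the remaining unknown edge and a composite of already-known terms; solving yields the desired triangle witness. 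The right-hand triangle $\associator_{\id,f,g} \pathcomp (g \whisker \runitor_f) = \runitor_{g \catcomp f}$ is proved by the symmetric argument.

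For $\lunitor_{\id_x} = \runitor_{\id_x}$, I would specialize the given triangle coherator to $f = g = \id_x$, which produces $\associator_{\id,\id,\id} \pathcomp (\id \whisker \lunitor_{\id_x}) = \runitor_{\id_x} \whisker \id$. Using the unit-whiskering formulas from \Cref{prop:whiskering-properties}, the left side rewrites (modulo a path-cancellation $\lunitor_{\id_x} \pathcomp \pathinv{\lunitor_{\id_x}} = \refl$) to $\associator_{\id,\id,\id} \pathcomp \lunitor_{\id \catcomp \id}$ and the right side to $\runitor_{\id \catcomp \id}$. One then either appeals to the left or right ``outer'' triangle just proved, or to the pentagon $\pentagoncoh{\id}{\id}{\id}{\id}$ together with the triangles already derived, to relate $\lunitor_{\id \catcomp \id}$ and $\runitor_{\id \catcomp \id}$; the unit-whiskering formulas then identify the remaining terms with $\lunitor_{\id_x}$ and $\runitor_{\id_x}$.

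The main obstacle is the bookkeeping rather than any conceptual difficulty. In a strict 1-category these identities are trivial, and in a bicategory the standard proofs rely only on invertibility of 2-cells—which we have for free since all our witnesses are paths. However, without UIP we cannot silently identify distinct composites of paths, so each step must explicitly produce a path between the specific composites at hand, threading through the whiskering equations of \Cref{prop:whiskering-properties}. This is exactly the level of care taken in \cite[Propositions 2.2.4 and 2.2.6]{johnson-yau:21:2-dim-cats}, and those proofs can be transcribed into the wild setting essentially verbatim; in practice I would check the resulting composite paths mechanically with a proof assistant.
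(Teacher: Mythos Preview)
Your proposal is correct and matches the paper's own treatment: the paper does not give a proof but simply refers to \cite[Propositions 2.2.4 and 2.2.6]{johnson-yau:21:2-dim-cats}, exactly the source you invoke, and your sketch of the Kelly--Mac Lane style derivation (pentagon plus the given middle triangle, using only invertibility of 2-cells via path algebra) is the standard argument found there.
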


\begin{nonexample}[A wild category lacking 2-coherators]
  In \cite[Lemma 8]{kraus:21:internal-extended}, Kraus uses the circle higher inductive type to construct a wild category \(\cE\) with an object \(x \oftype\Ob\cE\) that refutes \(\lunitor_{\id_x} = \runitor_{\id_x}\).
  By \Cref{prop:more-coherences-2-coh-wild-cats}, \(\cE\) must therefore also fail to have triangle or pentagon coherators.
  In fact, one can see directly by \cite[Lemma 6.4.2]{hott-book} that the existence of triangle coherators for \(\cE\) is a \HoTT{} taboo.
\end{nonexample}

We will also use the following coherence.

\begin{proposition}\label{prop:a-coh-for-lunitor}
  Suppose that \(\gamma \oftype g = \id \catcomp f\) in a 2-coherent wild category.
  Then
  \[\begin{tikzcd}[paths]
    \id \catcomp g \dar["\lunitor" swap] \rar["\id \whisker \gamma"]
    & \id \catcomp \id \catcomp f \dar["\id \whisker \lunitor"]
    \\
    g \rar["\gamma" swap]
    & \id \catcomp f
  \end{tikzcd}\]
  commutes, i.e.\
  \(\lunitor_g \pathcomp \gamma =_{(\id\,\catcomp\,g\,=\,\id\,\catcomp\,f)} (\id \whisker \gamma) \pathcomp (\id \whisker \lunitor_f)\).
\end{proposition}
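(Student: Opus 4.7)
The plan is to reduce by path induction on $\gamma$ to an isolated coherence about $\lunitor$, and then to derive that coherence from the naturality of $\lunitor$ with respect to paths of morphisms, applied to the path $\lunitor_f$ itself. Explicitly, after path induction on $\gamma \oftype g = \id \catcomp f$ we may assume $g \equiv \id \catcomp f$ and $\gamma \equiv \refl$; the LHS then reduces to $\lunitor_{\id\,\catcomp\,f}$, while the RHS reduces to $\id \whisker \lunitor_f$ by the definitional equation $\id \whisker \refl \equiv \refl$, which holds analogously to the first whiskering equation in \Cref{prop:whiskering-properties}. So it suffices to produce an equation $\lunitor_{\id \catcomp f} = \id \whisker \lunitor_f$.

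For this, I would invoke the naturality of $\lunitor$ under paths: for any $\beta \oftype \primed{g} = \primed{h}$ between parallel morphisms, path induction on $\beta$ yields $\lunitor_{\primed{g}} \pathcomp \beta = (\id \whisker \beta) \pathcomp \lunitor_{\primed{h}}$, as both sides reduce to $\lunitor_{\primed{g}}$ when $\beta \equiv \refl$. Instantiating with $\beta \defeq \lunitor_f \oftype \id \catcomp f = f$ gives $\lunitor_{\id \catcomp f} \pathcomp \lunitor_f = (\id \whisker \lunitor_f) \pathcomp \lunitor_f$, and postcomposing with $\pathinv{\lunitor_f}$ to cancel $\lunitor_f$ on the right delivers the desired coherence.

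The entire argument is routine path algebra, and in fact the triangle and pentagon coherators of 2-coherence are not required -- the proposition already holds in an arbitrary wild category. The only mild subtlety is recognising that applying naturality of $\lunitor$ to $\lunitor_f$ itself, followed by cancellation, produces the nontrivial equation $\lunitor_{\id \catcomp f} = \id \whisker \lunitor_f$ in one stroke.
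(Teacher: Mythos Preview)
Your proof is correct and takes a genuinely different, more elementary route than the paper's. Both arguments reduce the claim to the single identity \(\lunitor_{\id \catcomp f} = \id \whisker \lunitor_f\); the difference lies in how that identity is obtained. The paper fills \Cref{diag:idl-coh} using the triangle coherator \(\trianglecoh{f}{\id}\) together with the derived coherences of \Cref{prop:more-coherences-2-coh-wild-cats} (in particular \(\lunitor_{\id} = \runitor_{\id}\) and the left-identity triangle), so it genuinely consumes the 2-coherence hypothesis. You instead observe that \(\lunitor\) is a homotopy from \((\id \catcomp \blank)\) to the identity, instantiate its naturality square at the path \(\lunitor_f\) itself, and cancel \(\lunitor_f\) on the right. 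This is the standard trick of applying a homotopy's naturality to the homotopy itself, and it needs nothing beyond path induction; as you correctly point out, the proposition therefore already holds in an arbitrary wild category, and the 2-coherence hypothesis in the statement is superfluous. The paper's argument, while less economical here, stays within the bicategorical idiom of the surrounding section and exercises the coherators it has just introduced; yours is shorter and strictly more general.
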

\begin{proof}
  We equivalently prove that \(\id \whisker \gamma = \lunitor_g \pathcomp \gamma \pathcomp \pathinv{(\id \whisker \lunitor_f)}\).
  Since \(\id \whisker \gamma = \lunitor_g \pathcomp \gamma \pathcomp \pathinv\lunitor_{\id \catcomp f}\) by properties of whiskering (\Cref{prop:whiskering-properties}), the result follows if \(\id \whisker \lunitor_f = \lunitor_{\id \catcomp f}\).
  \Cref{diag:idl-coh} shows how to construct such an equality: its interior is divided into two triangles and a bigon, which commute by the triangle coherator \(\trianglecoh{f}\id\) and the equalities in \Cref{prop:more-coherences-2-coh-wild-cats}.
  \begin{diag}
  \caption{}\label{diag:idl-coh}
  \[\begin{tikzcd}[paths, column sep=1.5em]
    \id \catcomp \id \catcomp f
    \ar[rr, "\id \whisker \lunitor_f", in=135, out=45, end anchor=north]
    \ar[rr, "\lunitor_{\id \catcomp f}" swap, in=-135, out=-45, end anchor=south]
    & (\id \catcomp \id) \catcomp f
    \lar["\associator"']
    \rar["\runitor_{\id} \whisker f", bend left=20, start anchor=north, end anchor=north west]
    \rar["\lunitor_{\id} \whisker f", swap, bend right=20, start anchor=south, end anchor=south west]
    & [2em] \id \catcomp f
  \end{tikzcd}\]
\end{diag}
\end{proof}

\subsection{Equivalence and univalence}\label{subsec:equivalence-univalence}

\begin{definition}[Wild equivalences]
  A morphism \(f \oftype \Hom[\C]xy\) in a wild category \(\C\) is a \emph{wild equivalence} if it has both a section and a retraction (i.e.\ is biinvertible).
  The type of wild equivalences from \(x\) to \(y\) in \(\C\) is denoted \(\wildequiv{\C}xy\).
  Its elements are also called \emph{\(\C\)-equivalences} to avoid confusion with type theoretic equivalences.
\end{definition}

The following observation appears as a basic result in the classical theory of \(\infty\)-categories~\cites[Proposition 1.2.4.1]{lurie:09:htt}[\S1.4]{harpaz:15:quasi-unital}.
To our knowledge, its type theoretic incarnation is due to Capriotti and Kraus, who also observe the immediate corollary that being a wild equivalence is a proposition~\cite[Lemma 5.3]{capriotti-kraus:18:semisegal}.

\begin{proposition}[The universal property of wild equivalences]\label{prop:wild-equiv-univ-prop}
  A morphism \(f \oftype \Hom[\C]xy\) in a wild category \(\C\) is a \(\C\)-equivalence if and only if it is \emph{neutral}---i.e.\ if and only if for any \(w, z \oftype \Ob\C\), the maps
  \begin{align*}
    f \catcomp \blank & \oftype \Hom[\C]wx \to \Hom[\C]wy \\
    \text{and} \quad \blank \catcomp f & \oftype \Hom[\C]yz \to \Hom[\C]xz
  \end{align*}
  are equivalences of hom-types.
\end{proposition}
\begin{proof}
  Let \(s\) and \(r\) be, respectively, a section and retraction of \(f\) in \(\C\).
  Then \((s \catcomp \blank)\) and \((r \catcomp \blank)\) are, respectively, sections and retractions of \((f \catcomp \blank)\), while \((\blank \catcomp r)\) and \((\blank \catcomp s)\) are, respectively, sections and retractions of \((\blank \catcomp f)\).

  Conversely, the inverse equivalences of \((f \catcomp \blank) \oftype \Hom[\C]yx \equivto \Hom[\C]yy\) and \((\blank \catcomp f) \oftype \Hom[\C]yx \equivto \Hom[\C]xx\) map identity morphisms to a section and retraction of \(f\), respectively.
\end{proof}

\begin{definition}[Wild isomorphisms]
  We also consider the type of wild \emph{isomorphisms} \(\wildiso{\C}xy\) in a wild category \(\C\), i.e.\ the type of morphisms \(f \oftype \Hom[\C]xy\) having a two-sided inverse \(g \oftype \Hom[\C]yx\).
\end{definition}

Any two-sided inverse is both a section and a retraction, so there is a canonical map
\[ \wildisotoequiv\C \oftype (\wildiso{\C}xy) \to (\wildequiv{\C}xy) \]
for any objects \(x, y \oftype \Ob\C\).
If \(\C\) is a precategory then \(\wildisotoequiv\C\) is an equivalence: its inverse sends a section-retraction pair \((s,r)\) to the two-sided inverse \((r \catcomp f \catcomp s)\) of \(f\).

\begin{definition}[Dependent identity morphisms]\label{def:idd}
  If \(x, y \oftype \Ob\C\) are objects of a wild category such that \(e \oftype x = y\), there is a morphism%
  \footnote{\(\idd\) for ``\emph{id}entity morphism \emph{d}ependent over an equality''.}%
  \[ \idd(e) \defeq \id_{x} \transpover[{\Hom[\C]x\blank}]e \oftype \Hom[\C]xy. \]
  By induction on \(e\), \(\idd(e)\) is an isomorphism, with inverse \(\idd(\pathinv{e})\).
  We thus get a map
  \[ \idd \oftype x = y \to \wildiso{\C}xy. \]
\end{definition}

In precategories, \(\idd\) is essentially \(\ctfont{idtoiso}\)~\cites[Definition 9.1.4]{hott-book}[Lemma 3.4]{aks:15:univalent-cats}.

\begin{definition}[Wild univalence]\label{def:idtowildequiv}
  A wild category \(\C\) is \emph{univalent} if, for all \(x, y \oftype \Ob\C\), the map
  \[ \idtowildequiv\C \defeq \wildisotoequiv\C \comp \idd \oftype x = y \to \wildequiv{\C}xy \]
  is an equivalence.
  Its inverse equivalence is denoted \(\wildequivtoid\C\).
\end{definition}

Wild univalence subsumes both 1-categorical univalence (when \(\C\) is a precategory) and the univalence axiom (when \(\C \equiv \UniverseCat\) is a universe).
\section[Pullbacks in 2-coherent wild categories]{Pullbacks in 2-Coherent Wild Categories}\label{sec:pullbacks}

In this section we develop the theory of pullbacks in 2-coherent wild categories.
These jointly generalize comma objects in (2,\,1)-categories (including 1-categorical pullbacks) as well as type theoretic homotopy pullbacks~\cite{akl:15:hlimits}.

\subsection{Commuting squares}

\begin{definition}[Cospans]\label{def:cospans}
  A \emph{cospan} in a wild category \(\C\) is an element of
  \[ \Cospan(\C) \defeq \Sigmatype*[(A, B, C \oftype \Ob\C)][(f \oftype \Hom[\C]AC)\,(g \oftype \Hom[\C]BC)]. \]
\end{definition}

We denote cospans graphically as \(\begin{tikzcd}[cramped,sep=1.2em]A \rar["f\,"] & C & \lar["\hspace{1ex}g",swap] B\end{tikzcd}\),
or simply by \((f,g)\) when the vertices are understood.

\begin{definition}[Commuting squares]\label{def:commuting-squares}
  A \emph{commuting square} in a wild category \(\C\) consists of
  \begin{itemize}
    \item a \emphb{cospan} \(\begin{tikzcd}[cramped,sep=1.2em]A \rar["f\,"] & C & \lar["\hspace{1ex}g",swap] B\end{tikzcd}\),
    \item a \emphb{source} object \(X \oftype \Ob\C\),
    \item two \emphb{legs} \(m_A \oftype \Hom[\C]XA\) and \(m_B \oftype \Hom[\C]XB\),
    \item and an \emphb{equality 2-cell} \(\gamma \oftype f \catcomp m_A = g \catcomp m_B\)
  \end{itemize}
  as in the following diagram in \(\C\),
  \[\begin{tikzcd}
    X \dar["m_A" swap] \rar["m_B"]
    & B \dar["g"] \\
    A \rar["f",swap] \ar[ur, "\gamma" swap, path] & C
  \end{tikzcd}.\]
\end{definition}

First, we consider the type of commuting squares on a cospan with fixed source.

\begin{definition}[Commuting squares, with fixed source and cospan]\label{def:commsq-fixed-cospan-source}
  Indexing over cospans
  \[\fc \defeq \begin{tikzcd}[cramped,sep=1.2em]A \rar["f\,"] & C & \lar["\hspace{1ex}g",swap] B\end{tikzcd}\]
  and source objects \(X \oftype \Ob\C\),
  we define the type
  \[ \CommSq_{\fc}(X) \defeq
  \Sigmatype*[(m_A \oftype \Hom[\C]XA)\,(m_B \oftype \Hom[\C]XB)][(\gamma \oftype f \catcomp m_A = g \catcomp m_B)]
  \]
  of \emph{commuting squares on \(\fc\) with source \(X\)}.
\end{definition}

Avigad, Kapulkin and Lumsdaine~\cite{akl:15:hlimits} also call \(\CommSq_{\fc}(X)\) (instantiated in universes) the type of \emph{cones} over \(\fc\) with vertex \(X\).

Characterizing the equality of \(\CommSq_{\fc}(X)\) is routine.

\begin{proposition}[Equality of \(\CommSq_{\fc}(X)\)]\label{prop:commsq-fixed-cospan-source-equality}
  Let
  \(\fc \equiv \begin{tikzcd}[cramped,sep=1.2em]A \rar["f\,"] & C & \lar["\hspace{1ex}g",swap] B\end{tikzcd}\)
  be a cospan, \(X \oftype \Ob\C\) an object, and
  \[\fS \equiv (m_A, m_B, \gamma) \eqntext{and} \primed\fS \equiv (\primed{m_A}, \primed{m_B}, \primed\gamma)\]
  be elements of \(\CommSq_{\fc}(X)\) in a wild category \(\C\).
  Then the equality type \(\fS = \primed\fS\) is equivalent to
  \begin{align*}
    \Sigmatype*[
      (e_A \oftype m_A = \primed{m_A})\,(e_B \oftype m_B = \primed{m_B})
      ]*[
      \gamma = (f \whisker e_A) \pathcomp \primed\gamma \pathcomp \pathinv{(g \whisker e_B)}
      ].
  \end{align*}
\end{proposition}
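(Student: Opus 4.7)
The plan is to apply the standard characterization of equality in iterated $\Sigma$-types. Since $\CommSq_{\fc}(X)$ is definitionally $\Sigmatype*[(m_A \oftype \Hom[\C]XA)\,(m_B \oftype \Hom[\C]XB)][(\gamma \oftype f \catcomp m_A = g \catcomp m_B)]$, the equivalence
\[ (\fS = \primed\fS) \equivalent \Sigmatype*[(e_A \oftype m_A = \primed{m_A})\,(e_B \oftype m_B = \primed{m_B})][({\gamma \transpover[P]{(e_A, e_B)}} = \primed\gamma)], \]
where $P(m_A, m_B) \defeq (f \catcomp m_A = g \catcomp m_B)$, follows by iterating \cite[Theorem 2.7.2]{hott-book} twice (once to split off $m_A$ and the remaining pair, and then again on the inner $\Sigma$-type of $m_B$ and $\gamma$).

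It then remains to reduce the dependent equality $\gamma \transpover[P]{(e_A, e_B)} = \primed\gamma$ to the stated form. Here the main computation is to describe transport in an identity type over a dependent pair of equalities: since $P(m_A, m_B) \equiv (F(m_A) = G(m_B))$ where $F(\blank) \defeq f \catcomp \blank$ and $G(\blank) \defeq g \catcomp \blank$, by \cite[Theorem 2.11.3]{hott-book} (applied once in each coordinate) we obtain
\[ \gamma \transpover[P]{(e_A, e_B)} = \pathinv{\ap[F][e_A]} \pathcomp \gamma \pathcomp \ap[G][e_B]. \]
By \Cref{def:whiskering}, $\ap[F][e_A] \equiv f \whisker e_A$ and $\ap[G][e_B] \equiv g \whisker e_B$, so the equation $\gamma \transpover[P]{(e_A, e_B)} = \primed\gamma$ becomes $\pathinv{(f \whisker e_A)} \pathcomp \gamma \pathcomp (g \whisker e_B) = \primed\gamma$. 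Rearranging by path composition gives precisely $\gamma = (f \whisker e_A) \pathcomp \primed\gamma \pathcomp \pathinv{(g \whisker e_B)}$, as required.

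The entire argument is a routine application of standard lemmas; the only mild subtlety is keeping track of the direction of the induced paths under transport and identifying them with left/right whiskering. There is no genuine obstacle, and in particular nothing about the proof requires 2-coherence or any other categorical structure beyond the definitions of \Cref{def:wild-categories,def:whiskering}.
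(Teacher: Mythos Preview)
Your proof is correct and matches the paper's in spirit: the paper also calls this a routine application, citing the fundamental theorem of identity types, the algebra of \(\Sigma\)-types, and \Cref{prop:whiskering-properties}. The only cosmetic difference is that the paper packages the argument via Rijke's fundamental theorem (i.e.\ by exhibiting the candidate characterization and showing its total space is contractible) rather than via the direct \(\Sigma\)-type equality and transport lemmas you cite; that contractibility statement is then reused verbatim in the proof of \Cref{lem:commsq-fixed-vertices-equality}, which is a small organizational advantage of the paper's phrasing.
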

\begin{proof}
  By a routine application of the fundamental theorem of identity types~\cite[Theorem 11.2.2]{rijke:22:intro-hott}, the algebra of \(\Sigmatype\)-types, and \Cref{prop:whiskering-properties}.
\end{proof}

We have the following operations on commuting squares.

\begin{definition}[Transpose]\label{def:transpose}
  For any cospan
  \(\begin{tikzcd}[cramped,sep=1.2em]A \rar["f\,"] & C & \lar["\hspace{1ex}g",swap] B\end{tikzcd}\)
  and \(X \oftype \Ob\C\), the \emph{transpose} map
  \[ \transpose{\blank} \oftype \CommSq_{(f,g)}(X) \to \CommSq_{(g,f)}(X) \]
  is given by
  \[ \transpose{(m_A, m_B, \gamma)} \defeq (m_B, m_A, \pathinv\gamma). \]
\end{definition}

\begin{proposition}[Transpose is an equivalence]\label{prop:transpose-equivalence}
  Transpose is involutive: \((\fS^\mathrm{T})^\mathrm{T} = \fS\) for all \(\fS\), and so in particular \(\transpose\blank\) is an equivalence.
\end{proposition}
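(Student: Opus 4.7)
The plan is straightforward: unpack the definition of \(\transpose\blank\) applied twice, use that path inversion is involutive, and then invoke the fact that a self-inverse map is automatically an equivalence.

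First, for any \(\fS \equiv (m_A, m_B, \gamma) \oftype \CommSq_{(f,g)}(X)\), unfolding \Cref{def:transpose} twice gives
\[ \transpose{(\transpose\fS)} \equiv \transpose{(m_B, m_A, \pathinv\gamma)} \equiv (m_A, m_B, \pathinv{\pathinv\gamma}) \,. \]
Thus the first two components of \(\transpose{(\transpose\fS)}\) agree with those of \(\fS\) definitionally, and we are reduced to identifying \(\pathinv{\pathinv\gamma}\) with \(\gamma\).

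Second, to produce the identification \(\transpose{(\transpose\fS)} = \fS\) I apply \Cref{prop:commsq-fixed-cospan-source-equality}, taking \(e_A \defeq \refl\) and \(e_B \defeq \refl\). By the first equation of \Cref{prop:whiskering-properties} we have \(f \whisker \refl \equiv \refl\) and \(g \whisker \refl \equiv \refl\), so the required third component reduces to an equality \(\gamma = \pathinv{\pathinv\gamma}\). This last holds by path induction on \(\gamma\) (the standard involutivity of path inversion).

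Finally, having exhibited \(\transpose{(\transpose\fS)} = \fS\) for all \(\fS\), the map \(\transpose\blank \oftype \CommSq_{(f,g)}(X) \to \CommSq_{(g,f)}(X)\) is its own quasi-inverse (applying the involutivity witness in either direction, with the roles of \((f,g)\) and \((g,f)\) swapped), and therefore an equivalence. There is no real obstacle here; the only slightly delicate point is matching the whiskering data in \Cref{prop:commsq-fixed-cospan-source-equality} against trivial \(e_A, e_B\), which is immediate once one uses \(\refl \whisker f \equiv \refl\) from \Cref{prop:whiskering-properties}.
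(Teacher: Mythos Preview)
Your proof is correct; the paper states this proposition without proof, treating the involutivity $(\transpose\fS)^{\mathrm{T}} = \fS$ as immediate from the definitions, and your explicit verification via \Cref{prop:commsq-fixed-cospan-source-equality} and the involutivity of path inversion is exactly the expected unpacking.
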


\begin{definition}[Horizontal and vertical pasting]\label{def:pasting}
  From a diagram
  \[\begin{tikzcd}
    \primed{A} \dar["i" swap] \rar["\primed{f}"]
    & \primed{B} \dar["j" swap] \rar["\primed{g}"]
    & \primed{C} \dar["k"]
    \\
    A \rar["f" swap] \ar[ur, "\fq" swap, path]
    & B \rar["g" swap] \ar[ur, "\fp" swap, path]
    & C
  \end{tikzcd}\]
  of commuting squares \(\fQ \defeq (i, \primed{f}, \fq)\) and \(\fP \defeq (j, \primed{g}, \fp)\) we get the \emph{horizontal pasting}
  \[
    \squarehpaste\fQ\fP \eqndefeq
    \begin{tikzcd}
        \primed{A} \dar["i" swap] \rar["\primed{g} \catcomp \primed{f}"]
        & \primed{C} \dar["k"]
        \\
        A \rar["g \catcomp f" swap] \ar[ur, path, "\fq\mid\fp" swap]
        & C
    \end{tikzcd}
  \]
  where
  \(
    \fq\mid\fp \eqndefeq
    \associator \pathcomp (g \whisker \fq) \pathcomp \invassociator \pathcomp (\fp \whisker \primed{f}) \pathcomp \associator
  \).
  Similarly, from a diagram
  \[\begin{tikzcd}
    \primed{A} \dar["\primed{f}" swap] \rar["i"] & A \dar["f"] \\
    \primed{B} \dar["\primed{g}" swap] \rar["j"] \ar[ur, "\fq" swap, path]
    & B \dar["g"] \\
    \primed{C} \rar["k" swap] \ar[ur, "\fp" swap, path]
    & C
  \end{tikzcd}\]
  of commuting squares \(\fQ \defeq (\primed{f}, i, \fq)\) and \(\fP \defeq (\primed{g}, j, \fp)\) we get the \emph{vertical pasting}
  \[ \squarevpaste\fQ\fP \eqndefeq
    \begin{tikzcd}
      \primed{A} \dar["\primed{g} \catcomp \primed{f}" swap] \rar["i"]
      & A \dar["g \catcomp f"] \\
      \primed{C} \rar["k" swap] \ar[ur, "\squarevpaste{\fq}{\fp}" swap, path]
      & C
    \end{tikzcd}
  \]
  where
  \( \squarevpaste\fq\fp \eqndefeq
    \invassociator \pathcomp (\fp \whisker \primed{f}) \pathcomp \associator
    \pathcomp (g \whisker \fq) \pathcomp \invassociator\).
\end{definition}

\Cref{def:pasting} is in some ways redundant: a straightforward calculation shows that
  \[ \squarevpaste\fQ\fP = (\squarehpaste{\transpose\fQ}{\transpose\fP})^\mathrm{T}, \]
  and so one could simply vertically paste squares by horizontally pasting their transposes.
  However, this equality is only propositional, and it turns out to be more convenient later to use the canonical form of the vertical pasting as we have defined it.

\begin{definition}[Vertical pasting map]\label{def:vertical-pasting-map}
  For any \(A \oftype \Ob\C\), \(f \oftype \Hom[\C]AB\) and \(X \oftype \Ob\C\), the vertical pasting with
  \[ \fP \eqndefeq
    \begin{tikzcd}
      \primed{B} \dar["\primed{g}" swap] \rar["j"] & B \dar["g"] \\
      \primed{C} \rar["k" swap] \ar[ur, "\fp" swap, path]
      & C
    \end{tikzcd}
  \]
  yields a map
  \(\CommSq_{(j,f)}(X) \to \CommSq_{(k, g \catcomp f)}(X)\).
  That is, we have the family
  \[ \squarevpasteP \oftype \Pitype*[(A \oftype \Ob\C)\,(f \oftype \Hom[\C]AB)\,(X \oftype \Ob\C)][\CommSq_{(j,f)}(X) \to \CommSq_{(k, g \catcomp f)}(X)]. \]
\end{definition}

Morphisms in \(\C\) act contravariantly on commuting squares by precomposition at their source.

\begin{definition}[Precomposing squares with morphisms]\label{def:commsq-precomp}
  For any cospan \(\fc \equiv \begin{tikzcd}[cramped,sep=1.2em]A \rar["f\,"] & C & \lar["\hspace{1ex}g",swap] B\end{tikzcd}\)
  and \(X, Y \oftype \Ob\C\), there is a \emph{precomposition} map
  \[ \blank \sqcomp \blank \oftype \CommSq_{\fc}(Y) \to \Hom[\C]XY \to \CommSq_{\fc}(X) \]
  defined by
  \((m_A, m_B, \gamma) \sqcomp m \eqndefeq (m_A \catcomp m,\ m_B \catcomp m,\ \pathinv\associator \pathcomp (\gamma \whisker m) \pathcomp \associator)\).
\end{definition}

\begin{lemma}[Right action of morphisms on commuting squares]\label{lem:commsq-right-action-morphisms}
  If \(\fS \oftype \CommSq_{\fc}(X)\) is a commuting square in a 2-coherent wild category then
  \[ \fS \sqcomp \id_X = \fS, \]
  and for all \(f \oftype \Hom[\C]XY\) and \(g \oftype \Hom[\C]YZ\),
  \[ \fS \sqcomp (g \catcomp f) = \fS \sqcomp g \sqcomp f. \]
\end{lemma}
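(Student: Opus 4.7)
The strategy in both cases is to apply \Cref{prop:commsq-fixed-cospan-source-equality}, which reduces each equation to a pair of leg equalities together with a coherence between the 2-cell components modulo the right-whiskerings by the cospan legs $f_c$ and $g_c$ (where $\fc \equiv (f_c, g_c)$).

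For the unit law, write $\fS \equiv (m_A, m_B, \gamma)$. The plan is to take the leg equalities to be the right unitors $\runitor_{m_A}$ and $\runitor_{m_B}$, which reduces the goal to the single coherence
\[ \pathinv\associator \pathcomp (\gamma \whisker \id_X) \pathcomp \associator = (f_c \whisker \runitor_{m_A}) \pathcomp \gamma \pathcomp \pathinv{(g_c \whisker \runitor_{m_B})}. \]
I would first rewrite $\gamma \whisker \id_X = \runitor \pathcomp \gamma \pathcomp \pathinv\runitor$ using \Cref{prop:whiskering-properties}, then apply the second diagram of \Cref{prop:more-coherences-2-coh-wild-cats} (the identity $\associator \pathcomp (g_c \whisker \runitor) = \runitor$, and its analogue for $f_c$) to re-express both $\associator$ and $\pathinv\associator$ in terms of unitors. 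The resulting pairs of unitors then cancel, yielding the desired equality.

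For the composition law, the plan is to take the leg equalities to be the inverse associators $\pathinv\associator_{f, g, m_A}$ and $\pathinv\associator_{f, g, m_B}$, converting $m_A \catcomp (g \catcomp f)$ to $(m_A \catcomp g) \catcomp f$ and likewise for $m_B$. Unfolding both sides of the required coherence, the 2-cell from $(\fS \sqcomp g) \sqcomp f$ involves a doubly nested right-whiskering $(\blank \whisker g) \whisker f$ applied to $\gamma$; the plan is to flatten this using the associator-naturality identity $(\gamma \whisker g) \whisker f = \associator \pathcomp (\gamma \whisker (g \catcomp f)) \pathcomp \invassociator$ from \Cref{prop:whiskering-properties}, so that $\gamma \whisker (g \catcomp f)$ appears as a common factor on both sides.

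The residual obligation, after cancelling the matched $\gamma \whisker (g \catcomp f)$ factors, is a purely associator-level identity relating the various reassociations of the iterated composites built from $f_c, m_A, g, f$ (and symmetrically from $g_c, m_B, g, f$). This is the main obstacle, and I expect it to follow by a suitably arranged application of the pentagon coherator of \Cref{def:pentagon-coherators-associators} to the four-fold composable chains ending in $f_c$ and $g_c$, possibly after minor rearrangement using the naturality equations from \Cref{prop:whiskering-properties}. The reasoning is conceptually routine but requires careful bookkeeping of associator indices.
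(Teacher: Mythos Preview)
Your proposal is correct and follows essentially the same approach as the paper: the paper's proof is the one-line ``by calculation, properties of whiskering (\Cref{prop:whiskering-properties}), and coherence---namely, the right identity triangle coherence (\Cref{prop:more-coherences-2-coh-wild-cats}) for the first claim, and the pentagon coherence for the second,'' and you have spelled out exactly this, using the right unitors and the $\runitor$-triangle of \Cref{prop:more-coherences-2-coh-wild-cats} for the unit law, and the associators together with the pentagonator for the composition law.
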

\begin{proof}
  By calculation, properties of whiskering (\Cref{prop:whiskering-properties}), and coherence---namely, the right identity triangle coherence (\Cref{prop:more-coherences-2-coh-wild-cats}) for the first claim, and the pentagon coherence for the second.
\end{proof}

\begin{corollary}\label{cor:commsq-transport}
  By induction on \(e\) and \Cref{lem:commsq-right-action-morphisms}, 
  \[ \fS \transpover[\CommSq_{\fc}(\blank)]{e} =_{\CommSq_{\fc}(\primed{X})} \fS \sqcomp \idd(\pathinv{e}) \]
  for every \(\fS \oftype \CommSq_{\fc}(X)\) and \(e \oftype X = \primed{X}\).
\end{corollary}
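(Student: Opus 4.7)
The plan is to prove this by path induction on $e$, which reduces the problem to the case $e \equiv \refl_X$, where both sides should coincide with $\fS$ up to the first equation of \Cref{lem:commsq-right-action-morphisms}.

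In more detail, when $e$ is $\refl_X$, the left-hand side computes: transport along $\refl$ is the identity, so $\fS \transpover[\CommSq_{\fc}(\blank)]{\refl_X} \equiv \fS$. For the right-hand side, observe that $\idd(\pathinv{\refl_X}) \equiv \idd(\refl_X)$, and by the definition of $\idd$ (\Cref{def:idd}) this is $\id_X \transpover[{\Hom[\C]X\blank}]{\refl_X} \equiv \id_X$. Hence the right-hand side computes to $\fS \sqcomp \id_X$. By the first equation of \Cref{lem:commsq-right-action-morphisms}, we have $\fS \sqcomp \id_X = \fS$, which provides the required equality in the base case.

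No further step is really needed; the whole proof is a one-line path induction followed by an appeal to \Cref{lem:commsq-right-action-morphisms}. There is no main obstacle, as both definitional reductions of transport along $\refl$ are immediate. The only minor subtlety is recognizing that $\idd(\refl)$ reduces definitionally to $\id_X$, which follows directly from the definition of $\idd$ as a transport of the identity morphism along an equality of objects.
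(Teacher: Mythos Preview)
Your proposal is correct and follows exactly the approach indicated in the paper's statement itself: path induction on \(e\) reduces to the case \(\refl_X\), where the left-hand side is definitionally \(\fS\) and the right-hand side is \(\fS \sqcomp \id_X\), and the first equation of \Cref{lem:commsq-right-action-morphisms} finishes the job. There is nothing to add.
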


Now we can characterize the identity of commuting squares on a cospan, allowing their source objects to vary.

\begin{definition}[Commuting squares on a cospan]\label{def:commsq-fixed-cospan}
  The type of commuting squares on a cospan \(\fc\) is the total space
  \[ \CommSq(\fc) \defeq \Sigmatype*[(X \oftype \Ob\C)][\CommSq_{\fc}(X)]. \]
\end{definition}

\begin{corollary}[Equality of \(\CommSq(\fc)\)]\label{lem:commsq-fixed-cospan-equality}
  Suppose that \((X, \fS)\) and \((\primed{X}, \primed{\fS})\) are two commuting squares on a cospan \(\fc \defeq (f, g)\).
  The equality
  \[ (X, \fS) =_{\CommSq(\fc)} (\primed{X}, \primed{\fS}) \]
  is equivalent to
  \[ \Sigmatype*[(e \oftype X = \primed{X})]*[\fS = \primed{\fS} \sqcomp \idd(e)]. \]
\end{corollary}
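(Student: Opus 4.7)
The plan is to reduce everything to the standard characterization of equality in a $\Sigma$-type, then translate from transport to precomposition by $\idd$ via \Cref{cor:commsq-transport}, and finally clean up a sign discrepancy. In more detail: the fundamental theorem of identity types identifies $(X, \fS) =_{\CommSq(\fc)} (\primed X, \primed\fS)$ with $\Sigmatype(e \oftype X = \primed X)(\fS \transpover[\CommSq_{\fc}(\blank)]{e} = \primed\fS)$ in the usual way. Applying \Cref{cor:commsq-transport}, the fiber condition $\fS \transpover[\CommSq_{\fc}(\blank)]{e} = \primed\fS$ rewrites to $\fS \sqcomp \idd(\pathinv{e}) = \primed\fS$, so we obtain an equivalence between $(X, \fS) = (\primed X, \primed\fS)$ and $\Sigmatype(e \oftype X = \primed X)(\fS \sqcomp \idd(\pathinv{e}) = \primed\fS)$.

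What remains is to bridge from this form to the advertised form $\Sigmatype(e \oftype X = \primed{X})(\fS = \primed{\fS} \sqcomp \idd(e))$. I would handle this by a change of variable along path inversion $e \mapsto \pathinv{e}$: since inversion is an equivalence $(X = \primed X) \equivalent (\primed X = X)$, combined with symmetry of the resulting equality in $\CommSq_{\fc}(X)$, this swaps the roles of $\fS$ and $\primed\fS$ and turns $\idd(\pathinv{e})$ into $\idd(e)$ on the correct side. Equivalently, and perhaps more directly, one performs path induction on $e$: when $e \equiv \refl$, we have $\idd(\refl) \equiv \id$ by definition, so by \Cref{lem:commsq-right-action-morphisms} the condition $\fS = \primed{\fS} \sqcomp \idd(\refl)$ simplifies to $\fS = \primed{\fS}$, and the equivalence to $(X, \fS) = (X, \primed\fS)$ is the usual pairing principle for $\Sigmatype$-types.

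Since all the ingredients -- the $\Sigmatype$-identity characterization, \Cref{cor:commsq-transport}, and \Cref{lem:commsq-right-action-morphisms} -- are already in place, there is no real obstacle; the only point requiring care is the bookkeeping around whether the $\idd$ factor attaches to $\fS$ or $\primed\fS$ and whether it uses $e$ or $\pathinv{e}$. This is ultimately a consequence of the fact that the precomposition right action on $\CommSq_{\fc}$ encodes the transport structure of the family $\CommSq_{\fc}(\blank)$ over $\Ob\C$ up to coherent replacement of identities by $\idd$, which is exactly what \Cref{cor:commsq-transport} records.
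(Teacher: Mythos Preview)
Your proposal is correct and matches the paper's approach exactly: the paper's proof is the one-liner ``By the equality of \(\Sigmatype\)-types and \Cref{cor:commsq-transport},'' and you have spelled out precisely that argument, including the sign bookkeeping (whether \(\idd\) attaches to \(\fS\) or \(\primed\fS\)) that the paper leaves implicit. Your path-induction alternative is the cleanest way to discharge that last step.
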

\begin{proof}
  By the equality of \(\Sigmatype\)-types and \Cref{cor:commsq-transport}.
\end{proof}

Up to this point we have considered commuting squares with respect to a fixed cospan
\((A, B, C, f, g) \oftype \Cospan(\C)\).
We will also need to compare squares on cospans with propositionally equal but definitionally distinct legs \((f,g)\) and \((\primed{f}, \primed{g})\).
To this end, we index the type of commuting squares over their vertices, and consider the following type family.

\begin{definition}[Commuting squares, with fixed vertices]
  For \(X, A, B, C \oftype \Ob\C\), define
  \[ \CommSqV(X, A, B, C) \defeq \Sigmatype*[(f \oftype \Hom[\C]AC)\,(g \oftype \Hom[\C]BC)]*[\CommSq_{(f,g)}(X)]. \]
  That is, \(\CommSqV(X, A, B, C)\) is the total space of \(\CommSq_{(A,B,C,f,g)}(X)\) fibered over \(f \oftype \Hom[\C]AC\) and \(g \oftype \Hom[\C]BC\).
\end{definition}

We characterize the identity type of \(\CommSqV(X, A, B, C)\) using the following version of Rijke's structure identity principle~\cite[Theorem 11.6.2]{rijke:22:intro-hott}, which may be understood as a dependent form of the fundamental theorem of identity types.

\begin{theorem}[Structure identity principle]\label{thm:sip-rijke}
  Suppose that \(A\) is a type pointed at \(a \oftype A\), and \(B \oftype A \to \UniverseType\) is a type family over \(A\), pointed at \(b \oftype B(a)\).
  Then for any type family
  \[ R \oftype \Pitype*[(x \oftype A)][a = x \to B(x) \to \UniverseType] \]
  pointed at \(r \oftype R(a, \refl_a, b)\), the canonical family of maps indexed over \(x\) and \(y\)
  \[ \Pitype*[(x \oftype A)\,(y \oftype B(x))][(a, b) = (x, y) \to \Sigmatype[(p \oftype a = x)]*[R(x, p, y)]] \]
  is a family of equivalences if and only if the total space \(\Sigmatype[(B(a))][R(a, \refl_a)]\) is contractible.
\end{theorem}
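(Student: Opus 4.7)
The plan is to apply the fundamental theorem of identity types~\cite[Theorem 11.2.2]{rijke:22:intro-hott} to the pointed total space $\Sigmatype*[(x \oftype A)][B(x)]$ with basepoint $(a, b)$, equipped with the dependent type family
\[ Q(x, y) \defeq \Sigmatype*[(p \oftype a = x)]*[R(x, p, y)] \]
pointed at $(\refl_a, r) \oftype Q(a, b)$. The canonical family of maps in the statement is precisely the one induced from $(\refl_a, r)$ by path induction on identifications in $\Sigmatype*[(x \oftype A)][B(x)]$, so by the fundamental theorem it is a family of equivalences if and only if the total space
\[ \Sigmatype*[(x \oftype A)\,(y \oftype B(x))\,(p \oftype a = x)]*[R(x, p, y)] \]
is contractible.

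The remaining task is to reduce this to the contractibility of $\Sigmatype*[(y \oftype B(a))]*[R(a, \refl_a, y)]$. Reassociating $\Sigma$-types to swap the order of $(y \oftype B(x))$ and $(p \oftype a = x)$ yields the equivalent type
\[ \Sigmatype*[(x \oftype A)\,(p \oftype a = x)\,(y \oftype B(x))]*[R(x, p, y)]. \]
Since the based path space $\Sigmatype*[(x \oftype A)][a = x]$ is contractible with center $(a, \refl_a)$, contracting it via singleton induction reduces the displayed type further to $\Sigmatype*[(y \oftype B(a))]*[R(a, \refl_a, y)]$, as required.

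I do not anticipate any serious obstacle; the argument is a direct application of the fundamental theorem together with $\Sigma$-type reassociation and based path contraction, all of which are routine in univalent foundations. The one point worth double-checking is the identification of the ``canonical'' map in the statement with the one defined by path induction from $(\refl_a, r)$, but this follows essentially tautologically from what it means for a family of maps out of an identity type to be canonical: both are uniquely determined by their value on $\refl_{(a, b)}$.
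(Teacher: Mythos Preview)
Your proof is correct and is essentially the standard argument (as in Rijke's book, which the paper cites). Note that the paper itself does not give a proof of this theorem: it is stated with a reference to \cite[Theorem 11.6.2]{rijke:22:intro-hott} and then used as a black box in the proof of \Cref{lem:commsq-fixed-vertices-equality}. So there is nothing to compare against beyond the cited source, and your argument matches that source's approach.
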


\begin{lemma}[Equality of \(\CommSqV(X, A, B, C)\)]\label{lem:commsq-fixed-vertices-equality}
  Let
  \[ (f,g,\fS),\ (\primed{f},\primed{g},\primed{\fS}) \oftype \CommSqV(X,A,B,C) \]
  be commuting squares with vertices \(X, A, B, C \oftype \Ob\C\), where \(\fS \equiv (m_A, m_B, \gamma)\) and \(\primed{\fS} \equiv (\primed{m_A}, \primed{m_B}, \primed\gamma)\).
  The equality \((f,g,\fS) = (\primed{f},\primed{g},\primed{\fS})\) is equivalent to
  \begin{gather*}
    \Sigmatype*[(e_f \oftype f = \primed{f})\,(e_g \oftype g = \primed{g})\,(e_A \oftype m_A = \primed{m_A})\,(e_B \oftype m_B = \primed{m_B})]\hspace{1pt},\\
    \gamma = (e_f \whisker m_A) \pathcomp (\primed{f} \whisker e_A) \pathcomp \primed\gamma
      \pathcomp \pathinv{(\primed{g} \whisker e_B)} \pathcomp \pathinv{(e_g \whisker m_B)}.
  \end{gather*}
  This last component is a proof that
  \[\begin{tikzcd}[paths, column sep=0ex]
    & f \catcomp m_A \ar[dl, "e_f \whisker m_A"{swap, yshift=-2pt}] \ar[dr, "\gamma" yshift=-1pt] &
    \\[-10pt]
    \primed{f} \catcomp m_A \dar["\primed{f} \whisker e_A" swap]
    && g \catcomp m_B \dar["e_g \whisker m_B"]
    \\
    \primed{f} \catcomp \primed{m_A} \ar[dr, "\primed\gamma"{swap, yshift=1pt}]
    && \primed{g} \catcomp m_B \ar[dl, "\primed{g} \whisker e_B" yshift=2pt]
    \\[-10pt]
    & \primed{g} \catcomp \primed{m_B} &
  \end{tikzcd}\]
  commutes.
\end{lemma}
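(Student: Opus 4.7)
\emph{Proof plan.} The plan is to apply the structure identity principle (\Cref{thm:sip-rijke}) after reassociating $\CommSqV(X, A, B, C)$ into a two-level $\Sigma$-type. Regard it as the $\Sigma$-type whose base is the product
\[ \Hom[\C]AC \times \Hom[\C]BC \times \Hom[\C]XA \times \Hom[\C]XB, \]
pointed at $(f, g, m_A, m_B)$, and whose fiber over $(\primed f, \primed g, \primed{m_A}, \primed{m_B})$ is the equality type $\primed f \catcomp \primed{m_A} = \primed g \catcomp \primed{m_B}$, pointed at $\gamma$. Using the standard equivalence between an equality in this product and a quadruple of componentwise equalities, we take the SIP relation to be exactly the right-hand component of the stated equivalence:
\[ R\bigl((\primed f, \primed g, \primed{m_A}, \primed{m_B}),\, (e_f, e_g, e_A, e_B),\, \primed\gamma\bigr) \defeq \bigl(\gamma = (e_f \whisker m_A) \pathcomp (\primed f \whisker e_A) \pathcomp \primed\gamma \pathcomp \pathinv{(\primed g \whisker e_B)} \pathcomp \pathinv{(e_g \whisker m_B)}\bigr). \]

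The crucial observation that justifies this choice is that when each of $e_f, e_g, e_A, e_B$ is reflexivity, every whiskering factor in $R$ collapses to reflexivity by \Cref{prop:whiskering-properties}---specifically, the identity $\refl \whisker h \equiv \refl$ and its left-hand analogue. Consequently $R(\blank,\refl,\primed\gamma)$ is equivalent, up to the unit laws of path composition, to $\gamma = \primed\gamma$, and so $R$ is canonically pointed by a trivial equation built from $\refl_\gamma$ and those unit laws.

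By SIP, the desired equivalence will then follow once we verify that
\[ \Sigmatype*[(\primed\gamma \oftype f \catcomp m_A = g \catcomp m_B)]*[R(\blank,\refl,\primed\gamma)] \]
is contractible. By the preceding reduction of $R$ at reflexive arguments, this total space is equivalent to the singleton $\Sigmatype*[(\primed\gamma \oftype f \catcomp m_A = g \catcomp m_B)]*[\gamma = \primed\gamma]$, which is contractible.

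The hard part will be purely cosmetic bookkeeping: we must arrange the whiskering factors in $R$ so that they match the specific triangulation of the hexagon shown beneath the statement. Alternative arrangements of the same data differ only by applications of the associativity, unit, and inverse laws of path composition, so no substantive coherence beyond \Cref{prop:whiskering-properties} is required.
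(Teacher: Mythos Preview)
Your proof is correct and uses the same core tool as the paper---the structure identity principle (\Cref{thm:sip-rijke})---but with a different decomposition of $\CommSqV(X,A,B,C)$. The paper splits it as a $\Sigma$-type with base $\Hom[\C]AC \times \Hom[\C]BC$ and fiber $\CommSq_{(\blank,\blank)}(X)$, so that its relation $R$ absorbs $e_A$ and $e_B$ as $\Sigma$-components; the required contractibility of the total space $\Sigmatype[(\CommSq_{(f,g)}(X))][R(f,g,\refl,\refl)]$ is then exactly what was established (via the fundamental theorem of identity types) in the proof of \Cref{prop:commsq-fixed-cospan-source-equality}, and the paper simply cites that. Your $4+1$ split, with base the four-fold product of hom-types and fiber the bare commutativity witness, makes the contractibility step reduce immediately to a singleton without appeal to the earlier proposition. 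The paper's approach buys reuse of existing work; yours is more self-contained and makes the collapse at $\refl$ more visibly trivial. Either way, the only substantive ingredient beyond SIP is the reduction of whiskerings by $\refl$ from \Cref{prop:whiskering-properties}, which you correctly identify.
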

\begin{proof}
  Use the structure identity principle (\Cref{thm:sip-rijke}) applied to the pointed type
  \((\Hom[\C]AC \times \Hom[\C]BC,\ (f, g))\)
  and the type family
  \[ \CommSq_{(A,B,C,\,\blank\,,\,\blank)}(X) \oftype \Hom[\C]AC \times \Hom[\C]BC \to \UniverseType \]
  pointed at \(\fS \oftype \CommSq_{(f,g)}(X)\).
  We define (in curried form)
  \[ R \oftype \Pitype*[(\primed{f} \oftype \Hom[\C]AC)\,(\primed{g} \oftype \Hom[\C]BC)][
      (f = \primed{f}) \to (g = \primed{g}) \to \CommSq_{(\primed{f},\primed{g})}(X) \to \UniverseType
    ] \]
  such that \(R(\primed{f}, \primed{g}, e_f, e_g)(k_A, k_B, \delta)\) is the type
  \begin{gather*}
    \Sigmatype*[(e_A \oftype m_A = k_A)\,(e_B \oftype m_B = k_B)]\hspace{1pt},\\
      \gamma = (e_f \whisker m_A) \pathcomp (\primed{f} \whisker e_A) \pathcomp \delta
        \pathcomp \pathinv{(\primed{g} \whisker e_B)} \pathcomp \pathinv{(e_g \whisker m_B)},
  \end{gather*}
  pointed at \((\refl_{m_A}, \refl_{m_B}, \refl_\gamma) \oftype R(f,g,\refl_f,\refl_g,\fS)\).
  Then it's enough to show that \(\Sigmatype[(\CommSq_{(f,g)}(X))][R(f,g,\refl_f,\refl_g)]\) is contractible, which we have already done in the proof of \Cref{prop:commsq-fixed-cospan-source-equality}.
\end{proof}

The following somewhat ad hoc-looking technical result is used repeatedly later.

\begin{proposition}\label{prop:equality-comm-sq-id-leg}
  For any commuting square
  \[\begin{tikzcd}
    X \rar["m_B"] \dar["m_A"'] & B \dar["g"] \\
    A \rar["\id"'] \ar[ur,path,"\gamma"'] & A
  \end{tikzcd}\]
  in a 2-coherent wild category, the following equalities in \(\CommSq_{(\id,g)}(X)\) hold:
  \[
    \begin{tikzcd}
      X \rar["m_B"] \dar["m_A"'] & B \dar["g"] \\
      A \rar["\id"'] \ar[ur,path,"\gamma"'] & A
    \end{tikzcd}
    \eqnequal{}
    \begin{tikzcd}
      X \rar["m_B"] \dar["g \catcomp m_B"'] & B \dar["g"] \\
      A \rar["\id"'] \ar[ur,path,"\lunitor"'] & A
    \end{tikzcd}
    \eqnequal{}
    \begin{tikzcd}
      X \rar["\id \catcomp m_B"{xshift=-2pt}] \dar["g \catcomp m_B"'] & B \dar["g"] \\
      A \rar["\id"'] \ar[ur,path,"e"'] & A
    \end{tikzcd}\,,
  \]
  where \(e \defeq \invassociator \pathcomp (\lunitor \pathcomp \pathinv\runitor) \whisker m_B \pathcomp \associator\).
\end{proposition}
\begin{proof}
  Using \Cref{prop:commsq-fixed-cospan-source-equality}, to show that \((m_A, m_B, \gamma) = (g \catcomp m_B, m_B, \lunitor)\) observe that
  \(\pathinv\lunitor \pathcomp \gamma \oftype m_A = g \catcomp m_B\)
  and that
  \[ \gamma = (\id \whisker (\pathinv\lunitor \pathcomp \gamma)) \pathcomp \lunitor \]
  by properties of whiskering and the coherence in \Cref{prop:a-coh-for-lunitor}.
  To show that \((g \catcomp m_B, m_B, \lunitor) = (g \catcomp m_B, \id \catcomp m_B, e)\), take \(\pathinv\lunitor \oftype m_B = \id \catcomp m_B\) and observe that
  \[ \lunitor = e \pathcomp (g \whisker \lunitor) \]
  by properties of whiskering and the triangle coherators.
\end{proof}

\subsection{Pullbacks}

\begin{definition}[Pullbacks]\label{def:pullbacks}
  Let
  \[ \fP \eqndefeq
    \begin{tikzcd}
      P \rar["\pi_B"] \dar["\pi_A" swap] & B \dar["g"] \\
      A \rar["f" swap] \ar[ur, "\fp" swap, path]
      & C
    \end{tikzcd}\]
  be a commuting square on
  \(\fc \defeq \begin{tikzcd}[cramped,sep=1.2em]A \rar["f\,"] & C & \lar["\hspace{1ex}g"'] B\end{tikzcd}\)
  with source \(P\), in a wild category \(\C\).
  By specializing the precomposition map (\Cref{def:commsq-precomp}) at \(\fP\), we obtain the family of maps
  \[ \Psqcomp{\blank} \oftype \Pitype*[(X \oftype \Ob\C)][\Hom[\C]XP \to \CommSq_{\fc}(X)]. \]
  We say that \(\fP\) is a \emph{pullback of \(\fc\)} if \((\Psqcomp{\blank})\) is a family of equivalences, and a \emph{weak pullback of \(\fc\)} if \((\Psqcomp{\blank})\) is a family of retractions, or split surjections.
\end{definition}

\begin{corollary}[Universal property of (weak) pullbacks]\label{cor:pullback-universal-prop}
  By the characterization of equality of \(\CommSq_{\fc}(X)\) (\Cref{prop:commsq-fixed-cospan-source-equality}), for each \(X \oftype \Ob\C\) and commuting square \(\fS \defeq (m_A, m_B, \gamma)\) on \(\fc\) with source \(X\), the fiber of \((\Psqcomp{X})\) at \(\fS\) is equivalent to
  \begin{gather*}
    \Sigmatype*
      [(m \oftype \Hom[\C]XP)\,
      (e_A \oftype \pi_A \catcomp m = m_A)\,
      (e_B \oftype \pi_B \catcomp m = m_B)]\hspace{1pt},\\
    \invassociator \pathcomp (\fp \whisker m)
      \pathcomp \associator = (f \whisker e_A) \pathcomp \gamma \pathcomp \pathinv{(g \whisker e_B)}.
  \end{gather*}
  Thus \(\fP\) is a pullback (respectively, a weak pullback) when this type is contractible (respectively, pointed) for every \(X\) and \(\fS\).
\end{corollary}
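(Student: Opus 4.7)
The plan is to unfold the definition of the fiber of $\Psqcomp{X}$ at $\fS$ and apply the previously established characterization of equality in $\CommSq_{\fc}(X)$. By definition, the fiber consists of pairs $(m, e)$ where $m \oftype \Hom[\C]XP$ and $e$ is an equality $\fP \sqcomp m = \fS$. By the definition of precomposition (\Cref{def:commsq-precomp}), $\fP \sqcomp m$ is the commuting square with legs $\pi_A \catcomp m$ and $\pi_B \catcomp m$ and 2-cell $\pathinv\associator \pathcomp (\fp \whisker m) \pathcomp \associator$.

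Applying \Cref{prop:commsq-fixed-cospan-source-equality} to the equality $\fP \sqcomp m = \fS$ then shows that $e$ is equivalent to a triple $(e_A, e_B, \epsilon)$ consisting of equalities $e_A \oftype \pi_A \catcomp m = m_A$ and $e_B \oftype \pi_B \catcomp m = m_B$ together with a witness $\epsilon$ for the compatibility equation
\[ \invassociator \pathcomp (\fp \whisker m) \pathcomp \associator = (f \whisker e_A) \pathcomp \gamma \pathcomp \pathinv{(g \whisker e_B)}. \]
Reassociating the resulting nested $\Sigmatype$-type yields the displayed type in the statement.

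The second part of the corollary is then immediate from the standard type theoretic facts that a map is an equivalence iff all of its fibers are contractible, and a (split) surjection iff all of its fibers are pointed. There is no substantive obstacle here: the result is a purely mechanical rewriting using \Cref{def:commsq-precomp,prop:commsq-fixed-cospan-source-equality}, included for later reference when verifying pullback properties of specific squares.
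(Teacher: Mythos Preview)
Your proposal is correct and follows exactly the approach indicated by the paper: the corollary is stated with its justification inline (``By the characterization of equality of \(\CommSq_{\fc}(X)\)''), and you have simply spelled out the routine unfolding of \Cref{def:commsq-precomp} followed by the application of \Cref{prop:commsq-fixed-cospan-source-equality} and reassociation of the resulting \(\Sigmatype\)-type. There is no separate proof in the paper beyond this.
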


Being a pullback is evidently a property: for any cospan \(\fc\) and \(P \oftype \Ob\C\), the predicate
\[ \ispb_{\fc,\,P}(\fP) \defeq \Pitype*[(X \oftype \Ob\C)][\isequiv \, (\fP \sqcomp_X \blank)] \]
on \(\CommSq_{\fc}(P)\) is propositional.

\begin{proposition}[Pullbacks are closed under transpose]\label{prop:pullback-transpose}
  \(\transpose\fP\) is a (weak) pullback if \(\fP\) is.
\end{proposition}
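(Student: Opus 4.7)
The plan is to reduce the claim to the already established equivalence property of transpose (\Cref{prop:transpose-equivalence}) together with a compatibility between transpose and precomposition.

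First I would show the key lemma that transpose and precomposition commute: for any commuting square $\fS \equiv (m_A, m_B, \gamma)$ on a cospan $(f, g)$ with source $Y$ and any morphism $m \oftype \Hom[\C]XY$, there is a canonical equality
\[ \transpose{(\fS \sqcomp m)} = \transpose{\fS} \sqcomp m \]
in $\CommSq_{(g,f)}(X)$. The first two components of both sides are definitionally $(m_B \catcomp m, m_A \catcomp m)$, so by \Cref{prop:commsq-fixed-cospan-source-equality} it suffices to show that the two 2-cell components agree. Unfolding, the left-hand 2-cell is $\pathinv{(\pathinv\associator \pathcomp (\gamma \whisker m) \pathcomp \associator)}$ and the right-hand 2-cell is $\pathinv\associator \pathcomp (\pathinv\gamma \whisker m) \pathcomp \associator$; these agree by the usual distributivity of path inversion over concatenation together with the whiskering identity $\pathinv{(\gamma \whisker m)} = \pathinv\gamma \whisker m$ from \Cref{prop:whiskering-properties}.

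Applied to the given pullback square $\fP$, this lemma says that for each $X \oftype \Ob\C$ the map $\transpose\fP \sqcomp_X \blank$ is pointwise equal to the composite
\[ \Hom[\C]XP \xrightarrow{\fP \sqcomp_X \blank} \CommSq_{(f,g)}(X) \xrightarrow{\transpose\blank} \CommSq_{(g,f)}(X). \]
By \Cref{prop:transpose-equivalence} the second factor is always an equivalence, so composition with it preserves both the property of being an equivalence (for the pullback case) and the property of being a retraction/split surjection (for the weak pullback case). Since $\fP$ is a (weak) pullback, $\fP \sqcomp_X \blank$ is a family of equivalences (respectively, retractions), hence so is $\transpose\fP \sqcomp_X \blank$, proving that $\transpose\fP$ is a (weak) pullback of $(g, f)$.

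The main obstacle is essentially bookkeeping: verifying the commutation lemma between transpose and precomposition requires patiently unpacking the definitions and correctly applying the whiskering identities, but no substantial coherence beyond what \Cref{prop:whiskering-properties} already provides. Once that lemma is in hand, the rest is a one-line argument using the two-out-of-three style closure of equivalences (and of retractions under composition with equivalences).
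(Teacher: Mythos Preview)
Your proposal is correct and follows essentially the same approach as the paper: the paper states that \((\transpose{\fP} \sqcomp_{X} \blank) = (\transpose\blank) \comp (\fP \sqcomp_X \blank)\) holds ``by a straightforward calculation'' and then concludes exactly as you do via \Cref{prop:transpose-equivalence}. You have simply spelled out that calculation in full, correctly using the whiskering identities from \Cref{prop:whiskering-properties}.
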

\begin{proof}
  By a straightforward calculation,
  \[ (\transpose{\fP} \sqcomp_{X} \blank) = (\transpose\blank) \comp (\Psqcomp{X}) \]
  for all \(X \oftype \Ob\C\).
  Since \(\transpose\blank\) is an equivalence (\Cref{prop:transpose-equivalence}), \((\transpose{\fP} \sqcomp_{X} \blank)\) is an equivalence (respectively, a retraction) when \((\Psqcomp{X})\) is.
\end{proof}

\begin{proposition}[Identity pullbacks]\label{prop:id-pullback}
  If \(\C\) is a 2-coherent wild category, then for all \(A, B \oftype \Ob\C\) and \(f \oftype \Hom[\C]AB\) the commuting square
  \[ \idpb{f} \defeq
    \begin{tikzcd}
      A \rar["\id"] \dar["f"'] & A \dar["f"] \\
      B \rar["\id"'] \ar[ur,"\lunitor \pathcomp \pathinv\runitor"'{xshift=-6pt,yshift=-2pt},path] & B
    \end{tikzcd}
  \]
  is a pullback.
\end{proposition}
\begin{proof}
  For any \(X \oftype \Ob\C\), a straightforward calculation shows that the map
  \[ \idpb{f} \sqcomp_X \blank \oftype \Hom[\C]XA \to \CommSq_{(\id,f)}(X) \]
  has retraction \(\snd\).
  By \Cref{prop:equality-comm-sq-id-leg}, this retraction is also a section, i.e.
  \[ \idpb{f} \sqcomp_X k \equiv (f \catcomp k, \id \catcomp k, \invassociator \pathcomp ((\lunitor \pathcomp \pathinv\runitor) \whisker k) \pathcomp \associator) = (h,k,\gamma) \]
  for any \((h,k,\gamma) \oftype \CommSq_{(\id,f)}(X)\).
\end{proof}

\begin{remark}\label{rem:comm-sq-id-leg-idpb-sqcomp}
  In the notation of \Cref{prop:id-pullback}, \Cref{prop:equality-comm-sq-id-leg} says that any commuting square \((m_A, m_B, \gamma)\) on a cospan
  \(\begin{tikzcd}[cramped,sep=1.2em]A \rar["\id\hspace{.75ex}"] & A & \lar["\hspace{1ex}g",swap] B\end{tikzcd}\)
  is equal to \(\idpb{g} \sqcomp m_B\).
\end{remark}

The next lemma is inspired by the proof of \cite[Proposition 4.1.11]{akl:15:hlimits} and used in the proof of the pullback pasting lemma (\Cref{lem:vertical-pullback-pasting}).

\begin{lemma}[Pasting maps of (weak) pullbacks]\label{lem:pasting-map-pullbacks}
  A commuting square \(\fP\) in a 2-coherent wild category \(\C\) is a pullback (respectively, a weak pullback) if and only if the vertical pasting map \(\squarevpasteP\) (\Cref{def:vertical-pasting-map}) is a family of equivalences (respectively, retractions).
\end{lemma}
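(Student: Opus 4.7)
The plan is to prove both directions by a fiberwise analysis of $\squarevpasteP$. Let $\fP$ have source $P$, legs $p, q$, cospan $(k, g)$, and witness $\fp \oftype k \catcomp p = g \catcomp q$. For fixed $A$, $f \oftype \Hom[\C]AB$, and $X \oftype \Ob\C$, the map $\squarevpasteP$ preserves the right leg, sending $(n_A, n_B, \delta) \mapsto (p \catcomp n_A, n_B, \squarevpaste{\delta}{\fp})$. Both $\CommSq_{(q,f)}(X)$ and $\CommSq_{(k, g \catcomp f)}(X)$ therefore decompose as dependent sums over the right leg $r \oftype \Hom[\C]XA$, and it suffices to show the induced fiberwise map is an equivalence (respectively, a retraction) for each $r$.

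Fixing $r$ and a target point $(l, \epsilon)$ with $\epsilon \oftype k \catcomp l = (g \catcomp f) \catcomp r$, by \Cref{prop:commsq-fixed-cospan-source-equality} the fiber of the fiberwise map at $(l, \epsilon)$ is equivalent to
\[ \Sigma\, (n_A \oftype \Hom[\C]XP)\,(\delta \oftype q \catcomp n_A = f \catcomp r)\,(e \oftype p \catcomp n_A = l),\ \squarevpaste{\delta}{\fp} = (k \whisker e) \pathcomp \epsilon. \]
Unfolding $\squarevpaste{\delta}{\fp}$ via \Cref{def:pasting} and rearranging using \Cref{prop:whiskering-properties}, the final equation becomes exactly the one characterizing the fiber of $\Psqcomp X$ at $(l, f \catcomp r, \epsilon \pathcomp \associator) \in \CommSq_{(k,g)}(X)$ per \Cref{cor:pullback-universal-prop} (under the correspondence $m \mapsto n_A$, $e_A \mapsto e$, $e_B \mapsto \delta$). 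So the fiber of the fiberwise $\squarevpasteP$ at $(l, \epsilon)$ is equivalent to the fiber of $\Psqcomp X$ at $(l, f \catcomp r, \epsilon \pathcomp \associator)$.

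Both directions now follow. For the forward direction, if $\fP$ is a pullback (resp.\ weak pullback), every fiber of $\Psqcomp X$ is contractible (resp.\ pointed), hence so is every fiber of the fiberwise $\squarevpasteP$. For the reverse direction, given $(l, m, \zeta) \in \CommSq_{(k,g)}(X)$ whose fiber under $\Psqcomp X$ is to be shown contractible (resp.\ pointed), I specialize the hypothesis with $A \defeq X$, $f \defeq m$, $r \defeq \id_X$. Then via \Cref{prop:commsq-fixed-cospan-source-equality} (with $e_A \defeq \refl$ and $e_B \defeq \runitor$), the square $(l, m, \zeta)$ is propositionally equal to $(l, m \catcomp \id_X, \epsilon \pathcomp \associator)$ for the choice $\epsilon \defeq \zeta \pathcomp \pathinv{(g \whisker \runitor)} \pathcomp \pathinv\associator$; the fiber of $\Psqcomp X$ at the latter is, by the fiber correspondence, equivalent to the fiber of the fiberwise $\squarevpasteP$ at $(l, \epsilon)$, which is contractible (resp.\ pointed) by hypothesis. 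The main technical burden is the single equation-shuffling step in the fiber correspondence, a finite calculation guided by \Cref{prop:whiskering-properties} that does not obviously require nontrivial 2-coherences beyond the basic wild-categorical structure.
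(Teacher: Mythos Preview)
Your forward direction coincides with the paper's: both identify the fiber of \(\squarevpasteP\) over a target square with the fiber of \(\Psqcomp X\) over the reparenthesized square \((l, f \catcomp r, \epsilon \pathcomp \associator)\). You phrase this via a fiberwise decomposition over the preserved right leg, the paper via a direct \(\Sigma\)-type calculation followed by contracting the singleton \((i, e_A)\), but the content is identical.

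The reverse direction is where you genuinely diverge. The paper constructs an explicit equivalence \(\varphi \oftype \Hom[\C]{X}{\primed B} \to \CommSq_{(j,\id_B)}(X)\) and a companion equivalence \(\psi\) on the codomain, then shows that \(\Psqcomp X = \psi \circ \squarevpasteP_{B,\id_B,X} \circ \varphi\); establishing that \(\varphi\) is an equivalence and that this triangle commutes uses the triangle coherator (both directly and via \Cref{prop:a-coh-for-lunitor}). You instead reuse the fiber correspondence already established, specialized at \(A \defeq X\), \(f \defeq m\), \(r \defeq \id_X\): every fiber of \(\Psqcomp X\) is, after transporting along the easy equality \((l, m, \zeta) = (l, m \catcomp \id_X, \epsilon \pathcomp \associator)\), a fiber of the fiberwise \(\squarevpasteP\). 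This is correct and more economical---and as you observe, the argument uses only the groupoid laws for paths and \Cref{prop:whiskering-properties}, so it in fact proves the lemma for arbitrary wild categories, not just 2-coherent ones. The paper's route has the minor advantage of exhibiting \(\Psqcomp X\) as an explicit composite factoring through \(\squarevpasteP\), but yours yields a sharper statement.
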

\begin{proof}
  Let \(\fP \defeq (\primed{g}, j, \fp)\) be a commuting square on \((k,g)\) as in \Cref{def:vertical-pasting-map}.
  For any \(A \oftype \Ob\C\), \(f \oftype \Hom[\C]AB\) and \(X \oftype \Ob\C\), the fiber of \(\squarevpasteP_{A,f,X}\) at
  \[ \fX \eqndefeq
    \begin{tikzcd}
      X \dar["m_{\primed{C}}" swap] \rar["m_A"]
      & A \dar["g \catcomp f"]
      \\
      \primed{C} \rar["k" swap] \ar[ur, "\xi" swap, path]
      & C
    \end{tikzcd}
  \]
  is equivalent to the \(\Sigmatype\)-type
  \begin{gather*}
    \Sigmatype*[
      (m \oftype \Hom[\C]{X}{\primed{B}})\,
      (i \oftype \Hom[\C]XA)\,
      (\gamma \oftype j \catcomp m = f \catcomp i)]\hspace{1pt},\\
    (e_{\primed{C}} \oftype \primed{g} \catcomp m = m_{\primed{C}}) \times
    (e_A \oftype i = m_A) \\
    \times\ 
    (\invassociator \pathcomp (\fp \whisker m)
      \pathcomp \associator \pathcomp (g \whisker \gamma)
      \pathcomp \invassociator
    = (k \whisker e_{\primed{C}}) \pathcomp \xi \pathcomp \pathinv{(g \catcomp f \whisker e_A)}),
  \end{gather*}
  by the equality characterization of \Cref{prop:commsq-fixed-cospan-source-equality}.
  Contracting the singleton formed by the components \(i\) and \(e_A\), this is equivalent to
  \begin{gather*}
    \Sigmatype*[
      (m \oftype \Hom[\C]X{\primed{B}})
      (e_{\primed{C}} \oftype \primed{g} \catcomp m = m_{\primed{C}})
      (e_B \oftype j \catcomp m = f \catcomp m_A)]\hspace{1pt},\\
    \invassociator \pathcomp (\fp \whisker m) \pathcomp \associator
    = (k \whisker e_{\primed{C}}) \pathcomp (\xi \pathcomp \associator) \pathcomp \pathinv{(g \whisker e_B)}.
  \end{gather*}
  But this type is also the fiber of the precomposition map \((\Psqcomp{X})\) at the commuting square
  \[\begin{tikzcd}
    X \dar["m_{\primed{C}}" swap] \rar["f \catcomp m_A"{xshift=-2pt}]
    & B \dar["g"]
    \\
    \primed{C} \rar["k" swap] \ar[ur, "\xi \pathcomp \associator" swap, path]
    & C
  \end{tikzcd}\]
  obtained by ``reparenthesizing'' the diagram \(\fX\).
  Thus if \(\fP\) is a pullback (respectively, a weak pullback) then by its universal property (\Cref{cor:pullback-universal-prop}) the fiber \(\fiber{\big(\squarevpasteP_{A,f,X}\big)}(\fX)\) is contractible (respectively, pointed).

  Conversely, for any \(X \oftype \Ob\C\), we claim that the map
  \begin{gather*}
    \varphi \oftype \Hom[\C]X{\primed{B}} \to \CommSq_{(j, \id_B)}(X) \\
    \varphi(m) \defeq (m, j \catcomp m, \pathinv{\lunitor_{j \catcomp m}})
  \end{gather*}
  is an equivalence, and that the diagram of types and functions
  \[\begin{tikzcd}
    & \CommSq_{(j, \id_B)}(X) \dar["\squarevpasteP_{B, \id_B, X}"]
    \\
    \Hom[\C]X{\primed{B}}
      \ar[ur, "\varphi", "\sim" swap]
      \ar[dr, "\Psqcomp{X}" swap]
    & \CommSq_{(k, g \catcomp \id_B)}(X) \dar["\psi", "\sim" swap]
    \\
    & \CommSq_{(k,g)}(X)
  \end{tikzcd}\]
  commutes, where \(\psi\) is the equivalence
  \( (m_{\primed{C}}, m_B, \gamma) \mapsto (m_{\primed{C}}, m_B, \gamma \pathcomp (\runitor \whisker m_B)) \).
  That is, \((\Psqcomp{X})\) is the pre- and post-composition of \(\squarevpasteP_{B,\id_B,X}\) by equivalences.
  Thus, if \(\squarevpasteP\) is a family of equivalences then so is \((\Psqcomp\blank)\), and if \(\squarevpasteP\) is a family of retractions then so is \((\Psqcomp\blank)\).

  Now, the map \(\varphi\) is clearly a section of
  \(\fst \oftype \CommSq_{(j, \id)}(X) \to \Hom[\C]X{\primed{B}}\).
  By \Cref{prop:equality-comm-sq-id-leg} and transposition we see that it's also a retraction of \(\fst\), i.e.\ that
  \[ \varphi(m_{\primed{B}}) \equiv (m_{\primed{B}}, j \catcomp m_{\primed{B}}, \pathinv\lunitor) = (m_{\primed{B}}, m_B, \gamma) \]
  for all \(m_{\primed{B}} \oftype \Hom[\C]X{\primed{B}}\), \(m_B \oftype \Hom[\C]XB\) and \(\gamma \oftype j \catcomp m_{\primed{B}} = \id \catcomp m_B\).

  Finally, given \(m \oftype \Hom[\C]X{\primed{B}}\) we calculate that
  \[ \fS \defeq \textstyle \big(\psi \comp \squarevpasteP_{B,\id_B,X} \comp \varphi\big)(m) \]
  and
  \[ \primed{\fS} \defeq\fP \sqcomp_X m \]
  are commuting squares of type \(\CommSq_{(k,g)(X)}\) with the same morphism components \(\primed{g} \catcomp m\) and \(j \catcomp m\).
  The commutativity witness of \(\fS\) is
  \[ \pathinv\associator \pathcomp (\fp \whisker m) \pathcomp \associator
      \pathcomp (g \whisker \pathinv\lunitor) \pathcomp \pathinv\associator \pathcomp (\runitor \whisker (j \catcomp m)), \]
  while that of \(\primed{\fS}\) is
  \[ \pathinv\associator \pathcomp (\fp \whisker m) \pathcomp \associator, \]
  and these are equal since \((g \whisker \pathinv\lunitor) \pathcomp \pathinv\associator \pathcomp (\runitor \whisker (j \catcomp m)) = \refl\) by the triangle coherator.
\end{proof}

\begin{lemma}[Vertical pullback pasting]\label{lem:vertical-pullback-pasting}
  Suppose we have a diagram
  \[\begin{tikzcd}
    \primed{A} \dar["\primed{f}" swap] \rar["i"] & A \dar["f"] \\
    \primed{B} \dar["\primed{g}" swap] \rar["j"] \ar[ur, "\fq" swap, path]
    & B \dar["g"] \\
    \primed{C} \rar["k" swap] \ar[ur, "\fp" swap, path]
    & C
  \end{tikzcd}\]
  in a 2-coherent wild category \(\C\).
  Then if \(\fP \defeq (\primed{g}, j, \fp)\) is a pullback of \((k,g)\), the commuting square \(\fQ \defeq (\primed{f}, i, \fq)\) is a pullback of \((j,f)\) if and only if the vertical pasting \(\squarevpaste\fQ\fP\) is a pullback of \((k, g \catcomp f)\).
\end{lemma}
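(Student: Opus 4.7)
The plan is to reduce the statement to \Cref{lem:pasting-map-pullbacks} by showing that precomposition is compatible with vertical pasting. Concretely, I would first establish for each $m \oftype \Hom[\C]{X}{\primed{A}}$ an equality
\[ (\squarevpaste\fQ\fP) \sqcomp_X m = \squarevpasteP_{A, f, X}(\fQ \sqcomp_X m) \]
in $\CommSq_{(k, g \catcomp f)}(X)$. The two sides have the same horizontal leg $i \catcomp m$, while the vertical legs $(\primed{g} \catcomp \primed{f}) \catcomp m$ and $\primed{g} \catcomp (\primed{f} \catcomp m)$ differ only by a single associator. By the equality characterization of \Cref{prop:commsq-fixed-cospan-source-equality} this reduces to an equation between path-composites of whiskerings and associators. \Cref{prop:whiskering-properties}---especially the naturality of $\associator$ under whiskering, the interchange law, and the equations relating $\whisker$ with $\pathinv{(\blank)}$ and $\pathcomp$---together with the pentagon coherator should suffice to verify this.

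By function extensionality the compatibility gives a pointwise equality of maps
\[ (\squarevpaste\fQ\fP) \sqcomp_X \blank = \squarevpasteP_{A, f, X} \comp (\fQ \sqcomp_X \blank) \]
for every $X \oftype \Ob\C$. By hypothesis $\fP$ is a pullback, so by \Cref{lem:pasting-map-pullbacks} the map $\squarevpasteP_{A, f, X}$ is an equivalence for every $X$. The 2-out-of-3 property of equivalences then gives that $\fQ \sqcomp_X \blank$ is an equivalence for every $X$ if and only if $(\squarevpaste\fQ\fP) \sqcomp_X \blank$ is, which is exactly the statement that $\fQ$ is a pullback of $(j, f)$ if and only if $\squarevpaste\fQ\fP$ is a pullback of $(k, g \catcomp f)$.

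The main obstacle is the first step: the compatibility equation looks innocuous, but it involves carefully repositioning associators around the whiskered cells $\fp \whisker \primed{f}$ and $g \whisker \fq$ when these are themselves whiskered by $m$. I expect this to reduce to one instance of the pentagon coherator together with repeated applications of the naturality and functoriality equations for $\whisker$, but the bookkeeping will be delicate---as usual in wild 2-categorical calculations, no individual step is hard, and the difficulty lies entirely in arranging the coherence cells in the right order.
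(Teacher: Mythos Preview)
Your proposal is correct and follows essentially the same approach as the paper: reduce to \Cref{lem:pasting-map-pullbacks} via the compatibility $(\squarevpaste\fQ\fP)\sqcomp_X m = \squarevpasteP_{A,f,X}(\fQ\sqcomp_X m)$, then conclude by 2-out-of-3. The only discrepancy is in your estimate of the coherence bookkeeping---the paper's pasting diagram for the compatibility witness actually uses \emph{three} pentagonators (plus two whiskering-naturality squares), not one, so be prepared for slightly more work than you anticipate.
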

\begin{proof}
  We claim that for any \(X \oftype \Ob\C\), the triangle
  \[\begin{tikzcd}[row sep=small]
    & \CommSq_{(j,f)}(X)
      \ar[dd, "\squarevpasteP_{A,f,X}", start anchor={[xshift=-2ex]}, end anchor={[xshift=-2ex]}]
    \\
    \Hom[\C]X{\primed{A}}
      \ar[ur, "\fQ \sqcomp_X \blank", end anchor={[xshift=-1ex]}]
      \ar[dr, "\squarevpaste\fQ\fP \sqcomp_X \blank" swap, end anchor={[xshift=-2ex]}]
      &
    \\
    & \CommSq_{(k, g \catcomp f)}(X)
  \end{tikzcd}\]
  commutes.
  Then since \(\fP\) is a pullback, the map \(\squarevpasteP_{A,f,X}\) is an equivalence (\Cref{lem:pasting-map-pullbacks}), and it follows that \((\fQ \sqcomp \blank)\) is a family of equivalences if and only if \((\squarevpaste\fQ\fP \sqcomp \blank)\) is.

  What remains is to construct a homotopy
  \(\big(\squarevpaste\fQ\fP \sqcomp_X \blank\big) = \big(\squarevpasteP_{A,f,X}\big) \comp (\fQ \sqcomp_X \blank)\)
  for any \(X\), i.e.\ a witness that, for any \(m \oftype \Hom[\C]X{\primed{A}}\), the commuting squares
  \[ \squarevpaste\fQ\fP \sqcomp_X m \eqnequiv
    \left((\primed{g} \catcomp \primed{f}) \catcomp m,\ i \catcomp m,\ 
      \invassociator \pathcomp \big(\textstyle\squarevpaste\fq\fp \whisker m\big) \pathcomp \associator
    \right)
  \]
  and
  \[ \squarevpaste{\fQ \sqcomp_X m}{\fP} \eqnequiv
    \Big(\primed{g} \catcomp \primed{f} \catcomp m,\ i \catcomp m,\ 
      \squarevpaste{\invassociator \pathcomp (\fq \whisker m) \pathcomp \associator}{\fp}
    \Big)
  \]
  are equal.
  By \Cref{prop:commsq-fixed-cospan-source-equality} together with the canonical equalities \(\associator \oftype (\primed{g} \catcomp \primed{f}) \catcomp m = \primed{g} \catcomp \primed{f} \catcomp m \) and \(\refl \oftype i \catcomp m = i \catcomp m\), it's enough to show that
  \[ \invassociator \pathcomp \big(\textstyle\squarevpaste\fq\fp \whisker m\big) \pathcomp \associator
    = (k \whisker \associator) \pathcomp \squarevpaste{\invassociator \pathcomp (\fq \whisker m) \pathcomp \associator}{\fp}. \]

  \begin{diag}
  \caption{Construction of
    \(\invassociator \pathcomp \big(\textstyle\squarevpaste\fq\fp \whisker m\big) \pathcomp \associator
    = (k \whisker \associator) \pathcomp \squarevpaste{\invassociator \pathcomp (\fq \whisker m) \pathcomp \associator}{\fp}\).
  }\label{diag:pullback-pasting}
  \renewcommand{\comp}{\catcomp}
  \newcommand{\pr}{\primed}
  \[\begin{tikzcd}[
	ampersand replacement=\&,
    paths,
    column sep=-2em,
	nodes={row 7/.style={row sep=1ex}}
    ]
    \& [-3.25em] \&\& [-1.25em] \&\& [0.5em] \& |[xshift=1.75em]| {k \comp (\pr{g} \comp \pr{f}) \comp m} \& [2.25em] \&\& [2.25em] \& |[xshift=-1.75em]| {k \comp \pr{g} \comp \pr{f} \comp m} \& [0.5em] \&\& [-1em] \&\& [-2em] \\
	\&\&\&\& {(k \comp \pr{g} \comp \pr{f}) \comp m} \&\&\&\&\&\&\&\& {(k \comp \pr{g}) \comp \pr{f} \comp m} \\
	\&\& {((k \comp \pr{g}) \comp \pr{f}) \comp m} \&\&\&\&\&\&\&\&\&\&\&\& {(g \comp j) \comp \pr{f} \comp m} \\
	{((g \comp j) \comp \pr{f}) \comp m} \&\&\&\&\&\&\&\&\&\&\&\&\&\&\&\& {g \comp j \comp \pr{f} \comp m} \\
	\&\& {(g \comp j \comp \pr{f}) \comp m} \&\&\&\&\&\&\&\&\&\&\&\& {g \comp (j \comp \pr{f}) \comp m} \\
	\&\&\&\& {(g \comp f \comp i) \comp m} \&\&\&\&\&\&\&\& {g \comp (f \comp i) \comp m} \\
	\&\&\&\&\&\& {((g \comp f) \comp i) \comp m} \&\&\&\& {g \comp f \comp i \comp m} \\
	\&\&\&\&\&\&\&\& {(g \comp f) \comp i \comp m}
	\arrow["{k \whisker \associator}", from=1-7, to=1-11]
	\arrow["{\pathinv\associator}"'{yshift=-1.5ex}, from=1-7, to=2-5]
	\arrow["{\pathinv\associator}"{yshift=-1.5ex}, from=1-11, to=2-13]
	\arrow["{\pathinv\associator \whisker m}"'{yshift=-1.5ex,xshift=-2pt}, from=2-5, to=3-3]
	\arrow["{\fp \whisker (\pr{f} \comp m)}"{yshift=-1.75ex,xshift=2pt}, from=2-13, to=3-15]
	\arrow["\associator"{xshift=2em,yshift=0.75ex}, from=3-3, to=2-13]
	\arrow["{(\fp \whisker \pr{f}) \whisker m}"'{yshift=-2ex}, from=3-3, to=4-1]
	\arrow["\associator"{yshift=-1.75ex}, from=3-15, to=4-17]
	\arrow["\associator"{xshift=1.25em,yshift=0.25ex}, from=4-1, to=3-15]
	\arrow["{\associator \whisker m}"'{yshift=1.5ex}, from=4-1, to=5-3]
	\arrow["{g \whisker \pathinv\associator}"{yshift=2.5ex}, from=4-17, to=5-15]
	\arrow["\associator"{xshift=1ex}, from=5-3, to=5-15]
	\arrow["{(g \whisker \fq) \whisker m}"'{yshift=2ex,xshift=-2pt}, from=5-3, to=6-5]
	\arrow["{g \whisker (\fq \whisker m)}"{yshift=2ex,xshift=2pt}, from=5-15, to=6-13]
	\arrow["\associator"{xshift=0.5ex}, from=6-5, to=6-13]
	\arrow["{\pathinv\associator \whisker m}"'{yshift=2ex,xshift=-2pt}, from=6-5, to=7-7]
	\arrow["{g \whisker \associator}"{yshift=1.25ex,xshift=1pt}, from=6-13, to=7-11]
	\arrow["\associator"'{xshift=-2pt}, from=7-7, to=8-9]
	\arrow["{\pathinv\associator}"{yshift=0.5ex,xshift=2pt}, from=7-11, to=8-9]
  \end{tikzcd}\]
\end{diag}

  With a little path algebra (noting \Cref{prop:whiskering-properties}) this amounts to showing commutativity of the outer boundary of \Cref{diag:pullback-pasting}.
  By inserting associators \(\associator\) as shown in the interior of the diagram, we decompose the outer shape into a pasting of three commuting pentagons (by the pentagonators) and two commuting squares (by \Cref{prop:whiskering-properties}).
  Thus the entire diagram commutes.
\end{proof}

\begin{corollary}[Horizontal pullback pasting]
  Since the transpose of a pullback is a pullback (\Cref{prop:pullback-transpose}), by taking transposes as appropriate we deduce the more familiar horizontal version of the pullback pasting lemma.
\end{corollary}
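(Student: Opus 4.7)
The plan is to deduce the horizontal pasting statement from the vertical one by systematically transposing the given diagram. Given horizontal pasting data $\fQ \defeq (i, \primed{f}, \fq)$ on the cospan $(f, j)$ and $\fP \defeq (j, \primed{g}, \fp)$ on the cospan $(g, k)$, their transposes $\transpose\fQ$ and $\transpose\fP$ assemble into a vertical pasting diagram, to which \Cref{lem:vertical-pullback-pasting} applies.

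The key translation step uses the canonical equality noted after \Cref{def:pasting}, namely $\squarevpaste\fQ\fP = (\squarehpaste{\transpose\fQ}{\transpose\fP})^\mathrm{T}$, whose transposed version gives $\squarehpaste\fQ\fP = \transpose{\big(\squarevpaste{\transpose\fQ}{\transpose\fP}\big)}$. Combined with \Cref{prop:pullback-transpose} (which says being a pullback is preserved and reflected by transposing), we obtain the three biconditionals
\begin{itemize}
\item $\fP$ is a pullback iff $\transpose\fP$ is,
\item $\fQ$ is a pullback iff $\transpose\fQ$ is,
\item $\squarehpaste\fQ\fP$ is a pullback iff $\squarevpaste{\transpose\fQ}{\transpose\fP}$ is.
\end{itemize}

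With these in hand, the hypothesis that $\fP$ is a pullback of $(g, k)$ transfers to the hypothesis that $\transpose\fP$ is a pullback of $(k, g)$, so \Cref{lem:vertical-pullback-pasting} applied to the vertically pasted diagram of $\transpose\fQ$ over $\transpose\fP$ yields that $\transpose\fQ$ is a pullback iff $\squarevpaste{\transpose\fQ}{\transpose\fP}$ is. Under the three biconditionals above, this is equivalent to the desired statement that $\fQ$ is a pullback iff $\squarehpaste\fQ\fP$ is.

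The only real subtlety is bookkeeping: one must check that under transposition the cospans and sources indeed match up as required by the vertical pasting lemma, and that the propositional identity $\squarehpaste\fQ\fP = \transpose{(\squarevpaste{\transpose\fQ}{\transpose\fP})}$ (which is a straightforward but somewhat tedious path algebra computation, essentially one application of $\transpose\blank \comp \transpose\blank = \id$ by \Cref{prop:transpose-equivalence}) suffices — we do not need a definitional equality here because \Cref{prop:pullback-transpose} transports the pullback property along arbitrary equalities of commuting squares. No further coherence input beyond what was already used in \Cref{lem:vertical-pullback-pasting} is required.
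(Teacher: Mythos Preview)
Your proposal is correct and follows essentially the same approach as the paper: the corollary in the paper has no separate proof beyond the sentence ``by taking transposes as appropriate we deduce\ldots'', and your write-up simply unpacks what that means, using \Cref{prop:pullback-transpose}, the involutivity of transpose (\Cref{prop:transpose-equivalence}), and the relation \(\squarevpaste\fQ\fP = (\squarehpaste{\transpose\fQ}{\transpose\fP})^\mathrm{T}\) noted after \Cref{def:pasting}. The only small imprecision is attributing transport-along-equalities to \Cref{prop:pullback-transpose}; that transport is just the usual action of identity on the predicate \(\ispb\), while \Cref{prop:pullback-transpose} (together with involutivity) gives the preservation-and-reflection under \(\transpose\blank\).
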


\begin{lemma}[Pullback prism]\label{lem:pullback-prism}
  Suppose we have a diagram
  \[\begin{tikzcd}
    & P \ar[rr, "\pi_B"] \ar[dd, "\pi_A" near end]
    && B \ar[dd, "g"]
    \\
    \primed{P} \ar[rr, "\pi_{\primed{B}}", crossing over, near end]
      \ar[dr, "\primed{\pi_A}" swap]
    && \primed{B} \ar[dr, "\primed{g}" swap] \ar[ur, "h"]
    \\
    & A \ar[rr, "f" swap] && C
  \end{tikzcd}\]
  in a 2-coherent wild category \(\C\), such that \(c \oftype g \catcomp h = \primed{g}\) is a commuting triangle,
  \(\fp \oftype f \catcomp \pi_A = g \catcomp \pi_B\) and \(\primed\fp \oftype f \catcomp \primed{\pi_A} = \primed{g} \catcomp \pi_{\primed{B}}\),
  and where the squares
  \(\fP \defeq (\pi_A, \pi_B, \fp) \oftype \CommSq_{(f,g)}(P)\)
  and \(\primed\fP \defeq (\primed{\pi_A}, \pi_{\primed{B}}, \primed\fp) \oftype \CommSq_{(f,\primed{g})}(\primed{P})\)
  are both pullbacks.
  Then there is a contractible type of data consisting of:
  \begin{itemize}
    \item a morphism \(m \oftype \Hom[\C]{\primed{P}}P\),
    \item equalities
      \(e \oftype \pi_A \catcomp m = \primed{\pi_A}\)
      and \(\fq \oftype \pi_B \catcomp m = h \catcomp \pi_{\primed{B}}\) completing the boundary of the prism, and
    \item an equality 3-cell \(\eta\) filling the volume of the completed prism.
  \end{itemize}
  Even more, the top face \((m, \pi_{\primed{B}}, \fq)\) of the completed prism is a pullback of \((\pi_B, h)\).
\end{lemma}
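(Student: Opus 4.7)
My plan is to construct the tuple $(m, e, \fq, \eta)$ directly from the universal property of $\fP$ as a pullback, and then to establish that the resulting top face is itself a pullback via a vertical pasting argument.

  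First, I consider the commuting square on $(f, g)$ with source $\primed{P}$ given by $(\primed{\pi_A},\ h \catcomp \pi_{\primed{B}},\ \gamma)$, where $\gamma \defeq \primed\fp \pathcomp \pathinv{(c \whisker \pi_{\primed{B}})} \pathcomp \associator$ is assembled from $\primed\fp$ and the given triangle $c$. Since $\fP$ is a pullback, its universal property (\Cref{cor:pullback-universal-prop}) tells me that the fiber of $(\Psqcomp{\primed{P}})$ at this square is contractible, and that its elements are precisely quadruples $(m, e, \fq, \eta)$ consisting of a morphism $m \oftype \Hom[\C]{\primed{P}}P$, equalities $e \oftype \pi_A \catcomp m = \primed{\pi_A}$ and $\fq \oftype \pi_B \catcomp m = h \catcomp \pi_{\primed{B}}$, and a 3-cell $\eta \oftype \invassociator \pathcomp (\fp \whisker m) \pathcomp \associator = (f \whisker e) \pathcomp \gamma \pathcomp \pathinv{(g \whisker \fq)}$ filling the volume of the completed prism. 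This establishes the contractible type of data.

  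To prove the additional claim that the top face $\fQ \defeq (m, \pi_{\primed{B}}, \fq)$ is a pullback of $(\pi_B, h)$---a propositional statement that may be freely adjoined without disturbing contractibility---I apply the vertical pullback pasting lemma (\Cref{lem:vertical-pullback-pasting}). Since $\fP$ is a pullback of $(f, g)$, the square $\fQ$ is a pullback of $(\pi_B, h)$ if and only if the vertical pasting $\squarevpaste{\fQ}{\fP} \oftype \CommSq_{(f,\,g \catcomp h)}(\primed{P})$ is a pullback of $(f, g \catcomp h)$. Since $\primed\fP$ is by hypothesis a pullback of $(f, \primed{g})$, and ``being a pullback'' is a propositional property that is preserved by equalities in $\CommSqV(\primed{P}, A, \primed{B}, C)$, it then suffices to exhibit an identification $\squarevpaste{\fQ}{\fP} = \primed\fP$ in $\CommSqV$.

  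By \Cref{lem:commsq-fixed-vertices-equality}, such an identification amounts to providing the four equalities $\refl \oftype f = f$, $c \oftype g \catcomp h = \primed{g}$, $e \oftype \pi_A \catcomp m = \primed{\pi_A}$ and $\refl \oftype \pi_{\primed{B}} = \pi_{\primed{B}}$, together with a final equation between commutativity witnesses that, upon simplification using \Cref{prop:whiskering-properties}, reduces to $\squarevpaste{\fq}{\fp} = (f \whisker e) \pathcomp \primed\fp \pathcomp \pathinv{(c \whisker \pi_{\primed{B}})}$. Expanding the definition of $\squarevpaste{\fq}{\fp}$, substituting the $\invassociator \pathcomp (\fp \whisker m) \pathcomp \associator$ prefix using $\eta$, and cancelling the now-adjacent pairs $\pathinv{(g \whisker \fq)} \pathcomp (g \whisker \fq)$ and $\associator \pathcomp \invassociator$ reduces this equation exactly to the chosen form of $\gamma$. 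I expect the path bookkeeping in this last step---chasing associators and whiskering operations---to be the only nontrivial obstacle; the main structural ideas are the use of the universal property of $\fP$ in the first part, and of the vertical pasting lemma together with \Cref{lem:commsq-fixed-vertices-equality} in the second.
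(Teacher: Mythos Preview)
Your proposal is correct and follows essentially the same approach as the paper's proof: you construct the data $(m, e, \fq, \eta)$ as the contractible fiber of $(\Psqcomp{\primed{P}})$ at precisely the same commuting square (your $\gamma$ agrees with the paper's $\primed{\fp} \pathcomp (\pathinv{c} \whisker \pi_{\primed{B}}) \pathcomp \associator$ by \Cref{prop:whiskering-properties}), then use \Cref{lem:commsq-fixed-vertices-equality} and $\eta$ to identify $\squarevpaste{\fQ}{\fP}$ with $\primed\fP$ in $\CommSqV$, and conclude via transport and \Cref{lem:vertical-pullback-pasting}. The paper's proof is terser about the final path computation but otherwise identical.
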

\begin{proof}
  From the universal property of \(\fP\) we get \(m\), \(e\), \(\fq\) and the equality
  \[ \eta \oftype \invassociator \pathcomp (\fp \whisker m) \pathcomp \associator
      = (f \whisker e) \pathcomp \primed{\fp} \pathcomp (\pathinv{c} \whisker \pi_{\primed{B}}) \pathcomp \associator \pathcomp \pathinv{(g \whisker \fq)} \]
  as the center of contraction of the fiber of \((\fP \sqcomp_{\primed{P}} \blank)\) at the commuting square
  \[ \fS \defeq \big(\primed{\pi_A},\ h \catcomp \pi_{\primed{B}},\ \primed{\fp} \pathcomp (\pathinv{c} \whisker \pi_{\primed{B}}) \pathcomp \associator\big) \]
  on \((f,g)\).
  Let \(\fQ \defeq (m, \pi_{\primed{B}}, \fq)\); then by \(\eta\) and \Cref{lem:commsq-fixed-vertices-equality} it follows that \((f, g \catcomp h, \squarevpaste{\fQ}{\fP})\) and \((f, \primed{g}, \primed{\fP})\) are equal commuting squares on \(\primed{P}\), \(A\), \(\primed{B}\), \(C\).
  Since \(\primed{\fP}\) is a pullback, by transport so is \(\squarevpaste{\fQ}{\fP}\), and by pullback pasting (\Cref{lem:vertical-pullback-pasting}) so too is \(\fQ\).
\end{proof}

\subsection{The truncation level of pullbacks}


\begin{definition}[Pullbacks on a cospan]\label{def:pullbacks-cospan}
  As observed earlier, the predicate
  \[ \ispb_{\fc,\,P}(\fP) \defeq \Pitype*[(X \oftype \Ob\C)][\isequiv \, (\fP \sqcomp_X \blank)] \]
  on \(\CommSq_{\fc}(P)\) is propositional.
  We obtain the subtype of pullbacks on a cospan \(\fc\), with fixed source
  \[ \Pullback_{\fc}(P) \defeq \Sigmatype*[(\fP \oftype \CommSq_{\fc}(P))][\ispb_{\fc,\,P}(\fP)], \]
  and with arbitrary source,
  \[ \Pullback(\fc) \defeq \Sigmatype*[((P, \fP) \oftype \CommSq(\fc))][\ispb_{\fc,\,P}(\fP)]. \]
\end{definition}

\begin{proposition}[\(\Pullback(\fc)\) is a set in set-level categories]\label{prop:pullback-is-set-set-level-cats}
  If \(\C\) is set-level then \(\Pullback(\fc)\) is a \(\Sigmatype\)-type of sets and propositions for any cospan \(\fc\), and thus also a set.
\end{proposition}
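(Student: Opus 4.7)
The plan is to unfold the definition of $\Pullback(\fc)$ and verify that each component of the resulting nested $\Sigmatype$ is either a set or a proposition. Concretely,
\[ \Pullback(\fc) \equiv \Sigmatype*[(X \oftype \Ob\C)\,(m_A \oftype \Hom[\C]XA)\,(m_B \oftype \Hom[\C]XB)\,(\gamma \oftype f \catcomp m_A = g \catcomp m_B)]*[\ispb_{\fc,\,X}(m_A, m_B, \gamma)], \]
so I would check the following in order: (i) $\Ob\C$ is a set by the set-level assumption; (ii) the hom-types $\Hom[\C]XA$ and $\Hom[\C]XB$ are sets, again since $\C$ is set-level; (iii) the equality $f \catcomp m_A = g \catcomp m_B$ is an identity of elements of the set $\Hom[\C]XC$ and so is a proposition; and (iv) the pullback predicate $\ispb_{\fc,\,X}(\fP)$ is a proposition as already observed following \Cref{cor:pullback-universal-prop}, being a product of propositions $\isequiv(\fP \sqcomp_X \blank)$.

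Once these facts are in hand, the result follows from the standard closure properties that a $\Sigmatype$-type of sets is a set and a $\Sigmatype$-type of a set with a family of propositions is a set. Applying these closure properties successively from the innermost binder outwards yields that $\Pullback(\fc)$ is a set.

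The key interpretive step --- and the only point where any choice is required --- is the meaning of ``set-level'': I would take it to mean that both $\Ob\C$ and all hom-types of $\C$ are sets, so that steps (i) and (ii) above are immediate. There are no technical obstacles: the proof is a one-line application of closure of the universe of sets under $\Sigmatype$-types with propositional fibres, and all of the required propositionality witnesses have already been recorded in the preceding development.
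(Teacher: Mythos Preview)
Your proposal is correct and matches the paper's approach: the paper gives no separate proof, since the statement itself already records the argument (``is a $\Sigma$-type of sets and propositions \ldots\ and thus also a set''), and your unfolding of the components and appeal to closure of sets under $\Sigma$ with set- or proposition-valued fibres is exactly that. Your explicit reading of ``set-level'' as requiring both $\Ob\C$ and the hom-types to be sets is the intended one and is what makes component (i) go through.
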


\begin{proposition}[\(\Pullback(\fc)\) is a proposition in univalent 2-coherent wild categories]\label{prop:pullback-is-prop-univalent-wild-cats}%
  If \(\C\) is a univalent 2-coherent wild category, then \(\Pullback(\fc)\) is a proposition for any cospan \(\fc\).
\end{proposition}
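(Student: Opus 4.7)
The plan is to exploit the universal property of pullbacks and wild univalence to show that any two pullbacks on a cospan $\fc$ are canonically equal in $\CommSq(\fc)$. Since $\ispb_{\fc,\,P}(\fP)$ is propositional, proving that $\Pullback(\fc)$ is a proposition reduces exactly to this; and by Corollary \ref{lem:commsq-fixed-cospan-equality}, it suffices, given pullbacks $(P, \fP)$ and $(\primed{P}, \primed{\fP})$ on $\fc$, to produce an equality $e \oftype P = \primed{P}$ together with an identification $\fP = \primed{\fP} \sqcomp \idd(e)$ in $\CommSq_{\fc}(P)$.

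The first step is to apply the universal property of $\primed{\fP}$ (Corollary \ref{cor:pullback-universal-prop}) to $\fP$, obtaining a morphism $m \oftype P \to \primed{P}$ together with an equality $\primed{\fP} \sqcomp m = \fP$. Symmetrically, the universal property of $\fP$ applied to $\primed{\fP}$ yields $\primed{m} \oftype \primed{P} \to P$ with $\fP \sqcomp \primed{m} = \primed{\fP}$. Using the right action of morphisms on squares (Lemma \ref{lem:commsq-right-action-morphisms}), both $\primed{m} \catcomp m$ and $\id_P$ are preimages of $\fP$ under $(\fP \sqcomp \blank)$; contractibility of the relevant fiber of this map then forces $\primed{m} \catcomp m = \id_P$, and symmetrically $m \catcomp \primed{m} = \id_{\primed{P}}$. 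Thus $m$ carries the structure of a wild isomorphism $\iota \oftype \wildiso{\C}{P}{\primed{P}}$.

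Next I would invoke wild univalence of $\C$ to obtain $e \defeq \wildequivtoid\C(\wildisotoequiv\C(\iota)) \oftype P = \primed{P}$. Since $\idtowildequiv\C \equiv \wildisotoequiv\C \comp \idd$ is inverted by $\wildequivtoid\C$, evaluating $\idtowildequiv\C$ at $e$ reproduces the wild equivalence $\wildisotoequiv\C(\iota)$; extracting underlying morphisms then yields an equality $\idd(e) = m$ in $\Hom[\C]{P}{\primed{P}}$, whence $\primed{\fP} \sqcomp \idd(e) = \primed{\fP} \sqcomp m = \fP$ as required.

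The main obstacle I anticipate is the bookkeeping at this final step: cleanly extracting the equality $\idd(e) = m$ of morphisms from the abstract inverse relationship between $\idtowildequiv\C$ and $\wildequivtoid\C$. This depends on $\wildisotoequiv\C$ preserving the underlying morphism (its first projection) and on carefully tracking how $\wildequivtoid\C$ interacts with the composite $\wildisotoequiv\C \comp \idd$. The preceding universal-property manipulations are routine given Corollary \ref{cor:pullback-universal-prop} and Lemma \ref{lem:commsq-right-action-morphisms}.
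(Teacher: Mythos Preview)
Your proposal is correct and follows essentially the same approach as the paper: construct mutually inverse mediating morphisms from the universal properties, use \Cref{lem:commsq-right-action-morphisms} and contractibility of fibers to show they are two-sided inverses, then apply wild univalence and \Cref{lem:commsq-fixed-cospan-equality} to conclude. The bookkeeping you anticipate---extracting $\idd(e) = m$ from the section-retraction relation between $\idtowildequiv\C$ and $\wildequivtoid\C$---is exactly the step the paper glosses over in its final chain of equalities, and your explanation of why it works is accurate.
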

\begin{proof}
  In summary, by univalence and the universal property of pullbacks.
  Suppose that \((P, \fP)\) and \((\primed{P}, \primed{\fP})\) are elements of \(\Pullback(\fc)\).
  Then \({(\fP \sqcomp \blank)}\) and \((\primed{\fP} \sqcomp \blank)\) are equivalences, and from the centers of contraction of
  \(\fiber{(\fP \sqcomp \blank)}(\primed{\fP})\)
  and
  \(\fiber{(\primed{\fP} \sqcomp \blank)}(\fP)\)
  we get \(m \oftype \Hom[\C]P{\primed{P}}\) and \(\primed{m} \oftype \Hom[\C]{\primed{P}}P\) such that
  \[ e \oftype \primed{\fP} \sqcomp m = \fP \eqntextspace\text{and}\eqntextspace \primed{e} \oftype \fP \sqcomp \primed{m} = \primed{\fP}. \]

  Furthermore, by \Cref{lem:commsq-right-action-morphisms}
  \[ \fP \sqcomp (\primed{m} \catcomp m) \eqnequal{} (\fP \sqcomp \primed{m}) \sqcomp m \eqnequal{} \primed{\fP} \sqcomp m \eqnequal{} \fP, \]
  and so \(\primed{m} \catcomp m = \id_P\) by contractibility of \(\fiber{(\fP \sqcomp \blank)}(\fP)\) and \Cref{lem:commsq-right-action-morphisms} again.

  By a similar argument \(m \catcomp \primed{m} = \id_{\primed{P}}\), and so \(m \oftype \Hom[\C]P{\primed{P}}\) is a \(\C\)-equivalence.
  From univalence of \(\C\) we now get an equality
  \[ \wildequivtoid\C(m) \oftype P = \primed{P}, \]
  with
  \[ \fP \eqnequal{} \primed{\fP} \sqcomp m \eqnequal{} \primed{\fP} \sqcomp \idd(\wildequivtoid\C(m)). \]
  By \Cref{lem:commsq-fixed-cospan-equality}, this proves \((P, \fP) = (\primed{P}, \primed{\fP})\).
\end{proof}
\section[Wild categories with families]{Wild Categories with Families}\label{sec:wild-cwfs}

We can now define precoherent higher internal models of homotopical \MLTT{}.
We begin by simply taking Dybjer's generalized algebraic definition of a category with families~\cite{dybjer:96:internal-tt}, and allowing contexts to form wild categories.
This notion has previously been briefly considered by Kraus~\cite[Definition 5]{kraus:21:internal}.

\renewcommand{\subcomp}{\catcomp}
\renewcommand{\Con}{\Ob\thecwf}
\renewcommand{\Sub}[2]{\Hom[\thecwf]{#1}{#2}}

\subsection{Typed term structures}

\begin{definition}[Typed term structures on wild categories]\label{def:typed-term-structure}
  Let \(\UniverseType\) be a universe and \(\C\) a wild category.
  A \emph{typed term structure} on \(\C\) (valued in \(\UniverseType\)) consists of the following data:
  \begin{itemize}
    \item\label{decl:ty-psh}
      A wild \(\UniverseCat\)-valued presheaf of \emphb{\(\C\)-types} over \(\C\), presented as a generalized algebraic theory by the components%
      \footnote{\label{footnote:cwf-ctx-quant}Implicitly quantifying over objects \(\Gamma, \Delta, \Epsilon \oftype \Ob\C\) as needed.}
      \begin{alignat*}{3}
        \Ty && \ & \oftype & \ \ & \Ob\C \to \UniverseType \\
        \blank \substT \blank &&& \oftype && \Ty*\Delta \to \Sub\Gamma\Delta \to \Ty*\Gamma
      \end{alignat*}
      and equations\cref{footnote:cwf-ctx-quant} expressing functoriality
      \begin{alignat*}{4}
        \substTid && \ & \oftype & \ \ & A \substT {\id[\Gamma]} = A & \ &
          \eqntextall A \oftype \Ty*\Gamma \\
        \substTcomp &&& \oftype && A \substT {\tau \subcomp \sigma} = A \substT \tau \substT \sigma &&
          \eqntextall A \oftype \Ty*\Epsilon,\ \sigma \oftype \Sub\Gamma\Delta,\ \tau \oftype \Sub\Delta\Epsilon.
      \end{alignat*}

    \item\label{decl:tm-psh}
      A wild \(\UniverseCat\)-valued presheaf of \emphb{\(\C\)-terms} over the (wild) category of elements of the \(\C\)-type presheaf, presented\cref{footnote:cwf-ctx-quant} by
      \begin{alignat*}{4}
        \Tm && \ & \oftype & \ \ & (\Gamma \oftype \Ob\C) \to \Ty*\Gamma \to \UniverseType \\
        \blank \substt \blank &&& \oftype &&
          \Tm[\Delta]*A \to (\sigma \oftype \Sub\Gamma\Delta) \to \Tm[\Gamma]*(A \substT \sigma) & \ &
          \eqntextall A \oftype \Ty*\Delta
      \end{alignat*}
      and
      \begin{alignat*}{4}
        \substtid && \ & \oftype & \ \ &
          a \substt {\id[\Gamma]} = a \transpover[\Tm[\Gamma]]{\pathinv\substTid} & \ &
          \eqntextall A \oftype \Ty*\Gamma,\ a \oftype \Tm[\Gamma]*A \\
        \substtcomp &&& \oftype &&
            a \substt {\tau \subcomp \sigma} = a \substt \tau \substt \sigma \transpover[\Tm[\Gamma]]{\pathinv\substTcomp} &&
            \eqntextall
            \begin{aligned}[t]
              A & \oftype \Ty*\Epsilon,\ a \oftype \Tm[\Epsilon]*A \\
              \sigma & \oftype \Sub\Gamma\Delta,\ \tau \oftype \Sub\Delta\Epsilon.
            \end{aligned}
      \end{alignat*}
  \end{itemize}
  The actions \(\blank\substT\blank\) and \(\blank\substt\blank\) of the type and term presheaves on morphisms are called \emph{substitution in types} and \emph{substitution in terms}, respectively.
\end{definition}

We will often denote a typed term structure on a wild category simply by the object parts of its component presheaves \((\Ty, \Tm)\).
We also frequently elide the first argument of \(\Tm\) and write, for example, \(\Tm*A\) instead of \(\Tm[\Gamma]*A\).

Note the following equivalences and equalities in typed term structures \((\Ty, \Tm)\) on wild categories \(\C\).

\begin{proposition}\label{prop:substtid-is-equiv}
  For every \(\Gamma \oftype \Ob\C\) and \(A \oftype \Ty*\Gamma\), the equation \(\substtid\) (\Cref{def:typed-term-structure}) implies that the function
  \begin{alignat*}{2}
    \blank \substt \id & \ \oftype \ \ & \Tm*A & \to \Tm*(A \substT \id) \\
    && a & \mapsto a \substt \id
  \end{alignat*}
  is equal to transport in \(\Tm[\Gamma]\) along \(\pathinv\substTid\), and is hence an equivalence.
\end{proposition}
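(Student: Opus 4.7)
The proof is essentially a direct unpacking of the $\substtid$ equation using function extensionality. I would proceed as follows.

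First, observe that the equation
\[ \substtid \oftype a \substt \id = a \transpover[\Tm[\Gamma]]{\pathinv\substTid} \]
is a family (over $a \oftype \Tm*A$) of identifications between the image of $a$ under the function $(\blank \substt \id)$ and the image of $a$ under $(\blank \transpover[\Tm[\Gamma]]{\pathinv\substTid})$. Both functions have type $\Tm*A \to \Tm*(A \substT \id)$. Since we globally assume function extensionality, the pointwise equality $\substtid$ yields an identification
\[ (\blank \substt \id) = (\blank \transpover[\Tm[\Gamma]]{\pathinv\substTid}) \]
of functions of that type.

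Next, transport along any equality is an equivalence (by path induction, with inverse given by transport along the reversed equality). In particular, the function $(\blank \transpover[\Tm[\Gamma]]{\pathinv\substTid})$ is an equivalence. Transporting the property $\isequiv$ along the identification of functions above then yields that $(\blank \substt \id)$ is an equivalence too.

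There is no substantive obstacle here: the only things to keep straight are that both functions inhabit the same type $\Tm*A \to \Tm*(A \substT \id)$ (so the pointwise equality really is an equality in a $\Pitype$-type amenable to function extensionality), and that equivalence is preserved along equalities of functions. The statement amounts to the observation that $\substtid$ is saying exactly that substitution in terms along $\id$ coincides with the canonical transport across $\substTid$.
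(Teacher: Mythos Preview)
Your proposal is correct and matches the paper's reasoning: the paper does not give a separate proof for this proposition, as the justification is already contained in the statement itself (the function ``is equal to transport \ldots\ and is hence an equivalence''). Your unpacking via function extensionality and the fact that transport is always an equivalence is exactly the intended argument.
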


\begin{definition}\label{def:eqsubsubstT-eqsubsubstt}
  Assume objects \(\Gamma, \Delta \oftype \Ob\C\), a \(\C\)-type \(A \oftype \Ty*\Delta\), and an equality \(e \oftype \sigma = \tau\) of morphisms \(\sigma, \tau \oftype \Sub\Gamma\Delta\).
  We write
  \[ \eqsubsubstT e \defeq \ap[(A \substT \blank)][e] \]
  for the induced equality \(A \substT \sigma = A \substT \tau\).
  By induction on \(e\), we also have an equality
  \[ \eqsubsubstt e \oftype {a \substt \sigma}\transpover[\Tm]{\eqsubsubstT e} = a \substt \tau \]
  for any \(\C\)-term \(a \oftype \Tm*A\).
\end{definition}

By \cite[Lemma 2.2.2]{hott-book}, \(\eqsubsubstT\blank\) respects trivial, composite and inverse equalities.
Furthermore,

\begin{proposition}\label{prop:transport-in-Tm-of-substT}
  For any \(\Gamma, \Delta \oftype \Ob\C\), \(A \oftype \Ty*\Delta\), \(a \oftype \Tm*(A \substT \sigma)\) and morphisms \(\sigma, \tau \oftype \Sub\Gamma\Delta\) such that \(e \oftype \sigma = \tau\),
  \[ a \transpover[\Tm*(A \substT \blank)] e = a \transpover[\Tm]{\eqsubsubstT e} \]
  by \cite[Lemma 2.3.10]{hott-book}.
\end{proposition}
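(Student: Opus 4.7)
The plan is to observe that this proposition is essentially a direct unfolding of Lemma 2.3.10 of the HoTT book applied to a specific composite of type families, combined with the defining equation of $\eqsubsubstT\blank$ from \Cref{def:eqsubsubstT-eqsubsubstt}.

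In more detail, the type family appearing on the left-hand side,
\[ \Tm[\Gamma]*(A \substT \blank) \oftype \Sub\Gamma\Delta \to \UniverseType, \]
factors as the composite of the function $(A \substT \blank) \oftype \Sub\Gamma\Delta \to \Ty*\Gamma$ given by substitution into $A$, followed by the type family $\Tm[\Gamma] \oftype \Ty*\Gamma \to \UniverseType$. Lemma 2.3.10 of the HoTT book is exactly the statement that transport along a composite type family $P \comp f$ equals transport in $P$ along the $\ap$-image of the path under $f$. Instantiating this with $f \defeq (A \substT \blank)$, $P \defeq \Tm[\Gamma]$, and path $e \oftype \sigma = \tau$ gives
\[ a \transpover[\Tm*(A \substT \blank)]{e} = a \transpover[\Tm]{\ap[(A \substT \blank)]{e}}. \]

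It remains to recognize the $\ap$-term on the right-hand side as $\eqsubsubstT e$, but this is immediate: the notation $\eqsubsubstT e$ was defined in \Cref{def:eqsubsubstT-eqsubsubstt} to abbreviate precisely $\ap[(A \substT \blank)]{e}$. Hence the two transports are equal on the nose, with no additional path algebra required.

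There is no real obstacle in this proof; the entire content is bookkeeping to identify the relevant composite of type families and to match up notations. I would write the proof body as a single citation of \cite[Lemma 2.3.10]{hott-book} together with the observation unfolding $\eqsubsubstT e$, as the statement already anticipates.
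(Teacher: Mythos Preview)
Your proposal is correct and matches the paper's approach exactly: the paper offers no separate proof beyond the citation of \cite[Lemma 2.3.10]{hott-book} embedded in the statement itself, and your unfolding of the composite type family together with the definition of \(\eqsubsubstT e\) is precisely what that citation is shorthand for.
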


\begin{definition}\label{def:eqtypesubstT-eqtermsubstt}
  Suppose that \(A, \primed{A} \oftype \Ty*\Delta\) are \(\C\)-types such that \(e \oftype A = \primed{A}\).
  For any \(\sigma \oftype \Sub\Gamma\Delta\), we write
  \[ \eqtypesubstT{e}{\sigma} \defeq \ap[(\blank \substT \sigma)][e] \]
  for the induced equality \(A \substT \sigma = \primed{A} \substT \sigma\).

  Similarly, if \(a, \primed{a} \oftype \Tm*A\) with \(e \oftype a = \primed{a}\), we write
  \[ \eqtermsubstt{e}{\sigma} \defeq \ap[(\blank \substt \sigma)][e] \]
  for the induced equality \(a \substt \sigma = \primed{a} \substt \sigma\).
\end{definition}

\begin{proposition}[Substitution in transported terms]\label{prop:substitution-in-transported-terms}
  If \(e \oftype A =_{\Ty*\Delta} \primed{A}\),
  then for any \(a \oftype \Tm*A\) and morphism \(\sigma \oftype \Sub\Gamma\Delta\),
  \[ (a \transpover[\Tm] e) \substt \sigma = {a \substt \sigma} \transpover[\Tm]{\eqtypesubstT{e}{\sigma}} \]
  by induction on \(e\).
\end{proposition}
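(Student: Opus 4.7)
The plan is to proceed by path induction on the equality \(e \oftype A = \primed{A}\). This reduces the goal to the case where \(A \equiv \primed{A}\) and \(e \equiv \refl_A\); if both sides of the asserted equation become definitionally (or at worst reflexively) equal in this base case, the induction hypothesis automatically yields the general statement.

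In the base case, the left-hand side \((a \transpover[\Tm] \refl_A) \substt \sigma\) reduces to \(a \substt \sigma\) since transport along \(\refl\) is the identity function. For the right-hand side, unfolding \Cref{def:eqtypesubstT-eqtermsubstt} gives \(\eqtypesubstT{\refl_A}{\sigma} \equiv \ap[(\blank \substT \sigma)][\refl_A]\), which by the computation rule for \(\ap\) on \(\refl\) is \(\refl_{A \substT \sigma}\). Transport along this trivial equality again acts as the identity, so the right-hand side also reduces to \(a \substt \sigma\). Both sides therefore agree, completing the proof.

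No real obstacle arises here; the only mild subtlety is bookkeeping, namely that the output type of \(\blank \substt \sigma\) depends on \(A\) through \(A \substT \sigma\), which is why a plain \(\ap[(\blank \substt \sigma)][\text{\textit{something}}]\) does not immediately give the result and we instead need the explicit transport on the right-hand side. Path induction absorbs this dependency cleanly, so the statement follows by a one-line \texttt{refl}-style argument once the base case is spelled out.
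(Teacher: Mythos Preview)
Your proposal is correct and matches the paper's approach exactly: the paper itself offers no separate proof beyond the phrase ``by induction on \(e\)'' in the statement, and your unfolding of the base case is precisely the routine verification this phrase abbreviates.
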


\begin{proposition}[\(\substTcomp\) is a natural isomorphism]\label{lem:substTcomp-nat-iso}
  Suppose that \(\sigma, \primed\sigma \oftype \Sub\Gamma\Delta\) are morphisms such that \(e \oftype \sigma = \primed\sigma\).
  By induction on \(e\), we have that
  the square
  \[\begin{tikzcd}[paths]
    A \substT {\tau \subcomp \sigma}
      \rar["\substTcomp"]
      \dar["\eqsubsubstT{\tau \whisker e}" swap]
    & A \substT \tau \substT \sigma
      \dar["\eqsubsubstT e"]
    \\
    A \substT {\tau \subcomp \primed\sigma}
      \rar["\substTcomp" swap]
    & A \substT \tau \substT {\primed\sigma}
  \end{tikzcd}\]
  canonically commutes for all \(A \oftype \Ty*\Epsilon\) and \(\tau \oftype \Sub\Delta\Epsilon\), and also that
  \[\begin{tikzcd}[paths]
    A \substT {\sigma \subcomp \varrho}
      \rar["\substTcomp"]
      \dar["\eqsubsubstT{e \whisker \varrho}" swap]
    & A \substT \sigma \substT \varrho
      \dar["\eqtypesubstT{\eqsubsubstT e}{\varrho}"]
    \\
    A \substT {\primed{\sigma} \subcomp \varrho}
      \rar["\substTcomp" swap]
    & A \substT {\primed\sigma} \substT \varrho
  \end{tikzcd}\]
  canonically commutes for all \(A \oftype \Ty*\Delta\) and \(\varrho \oftype \Sub\Beta\Gamma\).
\end{proposition}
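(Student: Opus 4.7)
The plan is straightforward path induction on the equality \(e \oftype \sigma = \primed\sigma\). Since both squares are filled by canonical equalities obtained by repeated application of \(\ap\) (which is the underlying operation of both whiskering and of \(\eqsubsubstT\blank\), \(\eqtypesubstT\blank\blank\)), and \(\ap\) computes definitionally on \(\refl\), the induction will degenerate both squares to the trivial square \(\substTcomp = \substTcomp\), filled by \(\refl\).

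In more detail, for the first square I would fix \(A \oftype \Ty*\Epsilon\), \(\tau \oftype \Sub\Delta\Epsilon\), and induct on \(e\). When \(e \equiv \refl_\sigma\), the right whiskering \(\tau \whisker \refl \equiv \refl_{\tau \subcomp \sigma}\) by \Cref{prop:whiskering-properties} (applied in the opposite sense via the \(\ap\)-definition of whiskering on morphisms). Hence \(\eqsubsubstT{\tau \whisker e} \equiv \ap[(A \substT \blank)][\refl] \equiv \refl\), and similarly \(\eqsubsubstT{e} \equiv \refl\). The outer boundary of the square then reduces definitionally to the trivial equality \(\substTcomp \pathcomp \refl = \refl \pathcomp \substTcomp\) between paths from \(A \substT {\tau \subcomp \sigma}\) to \(A \substT \tau \substT \sigma\), which is witnessed by \(\refl\). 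This gives the witness in the general case by transporting back.

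The second square is handled analogously: fix \(A \oftype \Ty*\Delta\) and \(\varrho \oftype \Sub\Beta\Gamma\), and induct on \(e\). Under \(e \equiv \refl_\sigma\), the left whiskering \(e \whisker \varrho \equiv \refl\), so \(\eqsubsubstT{e \whisker \varrho} \equiv \refl\); and \(\eqsubsubstT{e} \equiv \refl\) forces \(\eqtypesubstT{\eqsubsubstT{e}}{\varrho} \equiv \ap[(\blank \substT \varrho)][\refl] \equiv \refl\). Again the square becomes \(\substTcomp = \substTcomp\), inhabited by \(\refl\).

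There is essentially no obstacle here: the entire proof rests on the observation that every non-trivial edge of each square is constructed via \(\ap\) applied to \(e\), so a single path induction collapses the statement to reflexivity. The only care needed is to ensure that the definitions of whiskering (\Cref{def:whiskering}) and of \(\eqsubsubstT\blank\) and \(\eqtypesubstT\blank\blank\) are being used in their underlying \(\ap\) form, so that their computation on \(\refl\) is definitional and the induction closes without any additional coherence.
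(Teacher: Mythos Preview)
Your proof is correct and is precisely the argument the paper intends: the proposition itself already embeds the proof sketch ``by induction on \(e\)'', and you have simply unfolded that induction, observing that every nontrivial edge is an \(\ap\) of \(e\) and hence collapses definitionally to \(\refl\). One small terminological slip: in the paper's conventions \(\tau \whisker e\) is a \emph{left} whiskering and \(e \whisker \varrho\) a \emph{right} whiskering, not the other way round---but since both are instances of \(\ap\) and compute to \(\refl\) on \(\refl\), this does not affect the argument.
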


The following two definitions are analogous to the conditions for a pseudofunctor between weak (2,\,1)-categories (i.e.\ bicategories where all 2-cells are invertible).\footnote{See e.g.\ \cite[\S4.1]{johnson-yau:21:2-dim-cats}.}
In the case that \(\C\) is 2-coherent, they improve the wild presheaf \(\Ty\) of a typed term structure on \(\C\) to what might be called a \emph{wild weak (2,\,1)-presheaf}.

\begin{definition}[Type triangulators]\label{def:type-triangulators}
  A typed term structure on a wild category \(\C\) is said to have \emph{type triangulators} if for all morphisms \(\sigma \oftype \Sub\Gamma\Delta\) and \(\thecwf\)-types \(A \oftype \Ty*\Delta\) the following triangles commute:
  \[
    \begin{tikzcd}[paths, column sep=0pt]
      A \substT {\id \subcomp \sigma}
        \ar[rr, "\substTcomp"]
        \ar[dr, "\eqsubsubstT\lunitor"']
      && A \substT \id \substT \sigma
        \ar[dl, "\eqtypesubstT\substTid\sigma"]
      \\ & A \substT \sigma
    \end{tikzcd} \eqntext{and}
    \begin{tikzcd}[paths, column sep=0pt]
      A \substT {\sigma \subcomp \id}
        \ar[rr, "\substTcomp"]
        \ar[dr, "\eqsubsubstT\runitor"']
      && A \substT \sigma \substT \id
        \ar[dl, "\substTid"]
      \\ & A \substT \sigma
    \end{tikzcd}
  \]
\end{definition}

\begin{definition}[Type pentagonators]\label{def:type-pentagonators}
  A typed term structure on a wild category \(\C\) has \emph{type pentagonators} if for all morphisms
  \[ \Gamma \xrightarrow{\varrho} \Delta \xrightarrow{\sigma} \Epsilon \xrightarrow{\tau} \Zeta \]
  and \(\thecwf\)-types \(A \oftype \Ty*\Zeta\), the following pentagon commutes:
  \[\begin{tikzcd}[
    paths, 
    column sep=-4.5ex,
    nodes={
      row 1/.style={row sep=4ex},
      row 2/.style={row sep=5ex}}
    ]
    & A \substT{\tau \subcomp \sigma \subcomp \varrho}
      \ar[dl, "\eqsubsubstT{\pathinv\associator}"{swap, yshift=-1.75ex, xshift=-0.75ex},
        start anchor={[xshift=-1ex]}, end anchor={[xshift=1ex]}]
      \ar[dr, "\substTcomp"{yshift=-1.25ex, xshift=0.5ex}, ""{name=U, below},
        start anchor={[xshift=1ex]}, end anchor={[xshift=-1ex]},] &
    \\
    A \substT{(\tau \subcomp \sigma) \subcomp \varrho}
      \dar["\substTcomp"{swap, yshift=1ex, xshift=-0.5ex},
        start anchor={[xshift=3ex]}, end anchor={[xshift=4.25ex]}]
    &&
    A \substT \tau \substT{\sigma \subcomp \varrho}
      \dar["\substTcomp"{yshift=1.25ex, xshift=0.5ex},
        start anchor={[xshift=-2.75ex]}, end anchor={[xshift=-4ex]}]
    \\
    |[xshift=2.1ex]| A \substT{\tau \subcomp \sigma} \substT \varrho
      \ar[rr, "\eqtypesubstT{\substTcomp}{\varrho}" swap]
    &&
    |[xshift=-2.1ex]| A \substT \tau \substT \sigma \substT \varrho
  \end{tikzcd}\]
\end{definition}

\subsection{Context extension structures and wild cwfs}

\begin{definition}[Context extension structures]\label{def:context-extension-structure}
  Assume a typed term structure \((\Ty, \Tm)\) on a wild category \(\C\).
  A \emph{context extension structure} on \((\C, \Ty, \Tm)\) is given%
  \footnote{Again, implicitly generalizing over \(\Gamma, \Delta, \Epsilon \oftype \Ob\C\) as needed.}
  by the following components
  \begin{alignat*}{4}
    \blank \hspace{1pt} \ctxext \blank && \ & \oftype & \ \ & (\Gamma \oftype \Ob\C) \to \Ty*\Gamma \to \Ob\C \\
    \cwfproj &&& \oftype && (A \oftype \Ty*\Gamma) \to \Sub{\Gamma \ctxext A}\Gamma \\
    \cwfvar &&& \oftype && (A \oftype \Ty*\Gamma) \to \Tm[\Gamma \ctxext A]*(A \substT {\cwfproj_A}) \\
    \blank \subext \blank &&& \oftype &&
      (\sigma \oftype\Sub\Gamma\Delta) \to \Tm[\Gamma]*(A \substT \sigma) \to \Sub\Gamma{\Delta \ctxext A} & \ &
      \eqntextall A \oftype \Ty*\Delta
  \end{alignat*}
  and equations (note \Cref{def:eqsubsubstT-eqsubsubstt})
  \begin{alignat*}{4}
    \cwfprojbeta && \ & \oftype & \ \ & \cwfproj_A \subcomp (\sigma \subext a) = \sigma & \ &
      \eqntextspace\text{and} \\
    \cwfvarbeta &&& \oftype && \cwfvar_A \substt {\sigma \subext a} = a \transpover[\Tm]{\pathinv{\eqsubsubstT\cwfprojbeta} \pathcomp \substTcomp} &&
      \eqntextall
      \begin{aligned}[t]
        \sigma & \oftype \Sub\Gamma\Delta, \\
        A & \oftype \Ty*\Delta,\ a \oftype \Tm[\Gamma]*(A \substT \sigma)
      \end{aligned} \\
    \subexteta &&& \oftype && (\cwfproj_A \subext \cwfvar_A) = \id[\Gamma \ctxext A] &&
      \eqntextall A \oftype \Ty*\Gamma \\
    \subextcomp &&& \oftype &&
      (\tau \subext a) \subcomp \sigma
      = ({\tau \subcomp \sigma} \subext
          {{a \substt \sigma} \transpover[\Tm]{\pathinv\substTcomp}}) &&
      \eqntextall
      \begin{aligned}[t]
        \sigma & \oftype \Sub\Gamma\Delta,\ \tau \oftype \Sub\Delta\Epsilon, \\
        A & \oftype \Ty*\Epsilon,\ a \oftype \Tm[\Delta]*(A \substT \tau).
      \end{aligned}
  \end{alignat*}

  We call \(\cwfproj\) the \emph{display map}, and \(\cwfvar\) the \emph{generic term} of the context extension structure.
\end{definition}

We sometimes elide the argument \(A \oftype \Ty*\Gamma\) to the display map \(\cwfproj\) and the generic term \(\cwfvar\) of a context extension structure.
When we need to be concise, we denote the display map \(\Gamma\ctxext A \xrightarrow{\cwfproj_A} \Gamma\) by \(\Gamma\ctxext A \twoheadrightarrow \Gamma\).

\begin{definition}[Cwf structures on wild categories]
  If \(\C\) is a wild category, a \emph{cwf structure} on \(\C\) consists of:
  \begin{itemize}
    \item a terminal object \(\emptycon \oftype \Ob\C\),
    \item a typed term structure \((\Ty, \Tm)\) on \(\C\), and
    \item a context extension structure on \((\C, \Ty, \Tm)\).
  \end{itemize}
  These model the structural rules of a Martin-L\"{o}f type theory over \(\C\).
\end{definition}

\begin{definition}[Wild categories with families]
  A \emph{wild category with families (wild cwf)} is a wild category \(\C\) together with a cwf structure on \(\C\).
  In this case, we call \(\C\) the \emph{category of contexts} of the wild cwf, its objects \emph{contexts}, and its morphisms \emph{substitutions}.
\end{definition}

We usually denote a wild cwf by its category of contexts.


Of course, every 1-cwf is a wild cwf.

\begin{example}[Universe cwfs]\label{eg:universe-cwfs}
  If a universe wild category \(\UniverseCat\) has \(\Sigmatype\)-types that satisfy the \(\laweta\)-rule, then it supports a canonical wild cwf structure given as follows.
  \begin{itemize}
    \item
      The terminal context \(\emptycon\) is the unit type \(\UnitType \oftype \UniverseType\).

    \item
      The typed term structure is as follows.
      \(\UniverseCwf\)-types in context \(\Gamma\) are \(\Gamma\)-indexed type families
      \begin{gather*}
        \Ty \oftype \UniverseType \to \UniverseSuc\UniverseType \\
        \Ty*\Delta \defeq \Delta \to \UniverseType,
      \end{gather*}
      while \(\UniverseCwf\)-terms \(a \oftype \Tm[\Delta]*A\) are sections of \(A \oftype \Ty*\Delta\)
      \[ \Tm[\Delta]*A \defeq \Pitype[\Delta][A]. \]

      Substitution of \(\sigma \oftype \Hom[\UniverseCat]\Gamma\Delta\) in \(\UniverseCwf\)-types \(A \oftype \Ty*\Delta\) and \(\UniverseCwf\)-terms \(a\) is given by precomposition
      \begin{gather*}
        A \substT \sigma \defeq A \comp \sigma, \\
        a \substt \sigma \defeq a \comp \sigma.
      \end{gather*}
      This action is definitionally functorial---that is, \(\substTid\), \(\substTcomp\), \(\substtid\) and \(\substtcomp\) are all families of trivial identities.

    \item
      The context extension structure is given by dependent pairing.
      The extended context \(\Delta \ctxext A\) is \(\Sigmatype[\Delta][A]\), and \(\cwfproj\) and \(\cwfvar\) are the functions \(\fst\) and \(\snd\) respectively.
      For \(\sigma \oftype \Hom[\UniverseCat]\Gamma\Delta\) and \(t \oftype \Tm[\Gamma]*(A \comp \sigma)\), the extended substitution \((\sigma \subext t) \oftype \Hom[\UniverseCat]\Gamma{\Sigmatype[\Delta][A]}\) is given by
      \[ (\sigma \subext t)(\gamma) \defeq (\sigma(\gamma), t(\gamma)). \]
      Again, the equations for context extension structures hold definitionally.
      In particular, the \(\laweta\)-rule for \(\Sigmatype\)-types is used for \(\subexteta\).
  \end{itemize}

  We refer to the resulting wild cwf as the \emph{universe cwf}.
\end{example}

Variations of this canonical universe cwf structure appear throughout the literature as the ``standard model''.

\begin{example}[Subuniverse cwfs and the 1-cwf of sets]\label{eg:subuniverse-cwfs}
  The construction of the typed term and context extension structures of \Cref{eg:universe-cwfs} works equally well for any \emph{sub}universe wild category (\Cref{eg:wild-categories}) that has a terminal object and is closed under \(\Sigmatype\)-types with \(\laweta\).
  In particular, the ``1-cwf'' of sets \(\SetCwf_\UniverseType\) is a subuniverse cwf of \(\UniverseCwf\).
\end{example}

\begin{definition}[Univalent wild cwfs]\label{def:univalent-wild-cwf}
  A wild cwf \(\thecwf\) is called \emph{univalent} if its category of contexts is univalent.
\end{definition}

\begin{example}[Univalent wild cwfs]
  Any subuniverse of a univalent universe \(\UniverseType\) yields a univalent cwf.
  In particular, \(\SetCwf_\UniverseType\) and \(\UniverseCwf\) are univalent cwfs if \(\UniverseType\) is a univalent universe.
\end{example}

\subsection{Structural properties of wild cwfs}

From now on we assume that \(\thecwf\) is a wild cwf.

\begin{lemma}[Substitutions into extended contexts are pairs]\label{lem:substitutions-into-extended-context}
  Let \(\Gamma, \Delta \oftype \Con\) be contexts, and \(A \oftype \Ty*\Delta\) a \(\thecwf\)-type.
  There is an equivalence
  \[\begin{tikzcd}[column sep=3em]
    \Sub\Gamma{\Delta\ctxext A}
      \rar[phantom,"\simeq"]
      \rar["{\sigma\ \mapsto\ (\cwfproj_A\,\subcomp\,\sigma\,,\ {\cwfvar_A\substt \sigma}\transpover[\Tm]{\pathinv\substTcomp})}"{yshift=0.5ex},bend left=20,start anchor=north east,end anchor=north west]
    & \Sigmatype*[(\sigma \oftype \Sub\Gamma\Delta)]*[\Tm*(A \substT \sigma)],
      \lar["{(\sigma \subext a)\ \mapsfrom\ (\sigma,\,a)}"{yshift=-1.25ex},bend left=20,start anchor=south west,end anchor=south east]
  \end{tikzcd}\]
  where the reverse function sends a pair \((\sigma, a)\) to the extended substitution \((\sigma \subext a)\) given by the context extension structure (\Cref{def:context-extension-structure}).
\end{lemma}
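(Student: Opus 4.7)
The plan is to verify that the two displayed maps are mutually inverse, exploiting the cwf axioms \(\cwfprojbeta\), \(\cwfvarbeta\), \(\subexteta\) and \(\subextcomp\) from \Cref{def:context-extension-structure}. Write \(\varphi\) for the forward map and \(\psi\) for the reverse map, so \(\psi(\sigma, a) \equiv (\sigma \subext a)\) and \(\varphi(\sigma) \equiv (\cwfproj_A \subcomp \sigma,\ \cwfvar_A \substt \sigma \transpover[\Tm]{\pathinv\substTcomp})\).

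For the composite \(\psi \comp \varphi\), given \(\sigma \oftype \Sub\Gamma{\Delta\ctxext A}\), I compute
\[
  \psi(\varphi(\sigma)) \equiv (\cwfproj_A \subcomp \sigma) \subext \bigl({\cwfvar_A\substt\sigma} \transpover[\Tm]{\pathinv\substTcomp}\bigr).
\]
Applying \(\subextcomp\) to the pair \((\cwfproj_A, \cwfvar_A)\) rewrites this as \((\cwfproj_A \subext \cwfvar_A) \subcomp \sigma\), which becomes \(\id \subcomp \sigma\) by \(\subexteta\), and finally \(\sigma\) by the left unitor \(\lunitor\) of \(\thecwf\). This gives a pointwise equality \(\psi \comp \varphi = \id\).

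For the composite \(\varphi \comp \psi\), given \((\sigma, a)\), I use the characterization of equality in the \(\Sigmatype\)-type: it suffices to produce \(e_1 \oftype \cwfproj_A \subcomp (\sigma \subext a) = \sigma\) together with a proof that \(a_0 \transpover[\Tm*(A \substT \blank)]{e_1} = a\), where \(a_0 \defeq \cwfvar_A \substt (\sigma \subext a) \transpover[\Tm]{\pathinv\substTcomp}\). I take \(e_1 \defeq \cwfprojbeta\). By \Cref{prop:transport-in-Tm-of-substT}, transport along \(e_1\) in \(\Tm*(A \substT \blank)\) reduces to transport along \(\eqsubsubstT\cwfprojbeta\) in \(\Tm\). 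Substituting the value of \(a_0\) via \(\cwfvarbeta\), the resulting expression is
\[
  a \transpover[\Tm]{\pathinv{\eqsubsubstT\cwfprojbeta} \pathcomp \substTcomp \pathcomp \pathinv\substTcomp \pathcomp \eqsubsubstT\cwfprojbeta},
\]
whose path is a concatenation of two pairs \(p \pathcomp \pathinv{p}\) and thus reflexivity, leaving \(a\) itself.

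The only nontrivial bookkeeping lies in this second direction: one must be careful that the types line up so that the concatenated transport path indeed reduces to \(\refl\), but this is purely formal manipulation using the identities \(\substTcomp \pathcomp \pathinv\substTcomp = \refl\) and \(\pathinv{\eqsubsubstT\cwfprojbeta} \pathcomp \eqsubsubstT\cwfprojbeta = \refl\). No coherence beyond the basic groupoid laws on paths is needed, and in particular no appeal is made to 2-coherence of \(\thecwf\), so the lemma holds at the level of a plain wild cwf.
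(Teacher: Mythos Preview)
Your proof is correct and follows essentially the same route as the paper's own argument: both compositions are handled identically, via \(\subextcomp\), \(\subexteta\) and \(\lunitor\) for \(\psi\comp\varphi\), and via \(\cwfprojbeta\), \Cref{prop:transport-in-Tm-of-substT} and \(\cwfvarbeta\) for \(\varphi\comp\psi\). Your write-up is in fact slightly more explicit than the paper's about the cancellation of the transport path in the second direction.
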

\begin{proof}
  For one composition, it's enough to show that for all \(\sigma \oftype \Sub\Gamma{\Delta\ctxext A}\),
  \[ (\cwfproj_A \subcomp \sigma \subext {{\cwfvar_A \substt \sigma} \transpover {\pathinv\substTcomp}})
    \eqnequal{\pathinv\subextcomp}
    (\cwfproj_A \subext \cwfvar_A) \subcomp \sigma
    \eqnequal{\subexteta \whisker \sigma}
    \id \subcomp \sigma
    \eqnequal{\lambda}
    \sigma.
  \]

  For the other, we show the equality of pairs
  \[ (\cwfproj_A \subcomp (\sigma \subext a), {\cwfvar_A \substt {\sigma \subext a}} \transpover {\pathinv\substTcomp}) = (\sigma, a). \]
  Equality of the first components is given by \(\cwfprojbeta\) (\Cref{def:context-extension-structure}), and for the second components we have that
  \[
  \begin{alignedat}[b]{2}
    & {\cwfvar_A \substt {\sigma \subext a}} \transpover[\Tm]{\pathinv\substTcomp} \transpover[\Tm*(A\substT\blank)]{\cwfprojbeta} \\
    \eqnequal{} & {\cwfvar_A \substt {\sigma \subext a}} \transpover[\Tm]{\pathinv\substTcomp \pathcomp \eqsubsubstT\cwfprojbeta}
      & \text{(by \Cref{prop:transport-in-Tm-of-substT})} \\
    \eqnequal{} & a \hspace{7ex}
      & \text{(by \(\cwfprojbeta\) and properties of transport)}
  \end{alignedat}. \qedhere
  \]
\end{proof}

\begin{corollary}[Elimination principle for \(\Sub\Gamma{\Delta\ctxext{A}}\)]
  By \Cref{lem:substitutions-into-extended-context}, to construct a section of a family of types \(P\) over \(\Sub\Gamma{\Delta\ctxext{A}}\), it's enough to give an element of \(P((\sigma \subext a))\) for every \(\sigma \oftype \Sub\Gamma\Delta\) and \(a \oftype \Tm*(A \substT \sigma)\).
\end{corollary}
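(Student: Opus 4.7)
The plan is to reduce the claim to the standard fact that an equivalence of base types transports sections of dependent type families across, combined with the elimination principle (currying) for $\Sigmatype$-types. Concretely, let
\[ \varphi \oftype \Sub\Gamma{\Delta\ctxext A} \to \Sigmatype[(\sigma \oftype \Sub\Gamma\Delta)][\Tm*(A \substT \sigma)] \]
denote the equivalence produced by \Cref{lem:substitutions-into-extended-context}, with inverse $\varphi^{-1}(\sigma, a) \defeq (\sigma \subext a)$.

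First I would pull back the family $P$ along $\varphi^{-1}$, obtaining a type family
\[ Q \defeq P \comp \varphi^{-1} \oftype \Sigmatype*[(\sigma \oftype \Sub\Gamma\Delta)][\Tm*(A \substT \sigma)] \to \UniverseType, \]
whose value at a pair $(\sigma, a)$ is definitionally $P((\sigma \subext a))$. Given the hypothesized data—an element $p(\sigma, a) \oftype P((\sigma \subext a))$ for every $\sigma$ and $a$—we thus immediately obtain a section of $Q$ by $\Sigmatype$-induction (i.e.\ $\lambda(\sigma, a).\,p(\sigma, a)$).

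Next I would transport this section back along $\varphi$. Since $\varphi$ is an equivalence, precomposition with $\varphi$ gives an equivalence between sections of $Q$ and sections of $Q \comp \varphi = P \comp \varphi^{-1} \comp \varphi$, which equals $P$ up to the homotopy witnessing $\varphi^{-1} \comp \varphi = \id$. Applying transport along this homotopy (or equivalently, invoking function extensionality together with the fact that equivalences of total spaces yield equivalences of dependent products) converts our section of $Q$ into a section of $P$ over $\Sub\Gamma{\Delta\ctxext A}$.

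No step is particularly delicate; the only mild subtlety is handling the transport coming from the retraction homotopy $\varphi^{-1} \comp \varphi = \id$, which we get from the proof of \Cref{lem:substitutions-into-extended-context} via the $\subexteta$ rule. Since the statement of the corollary does not insist on any judgmental computation rule, this propositional transport is harmless. Hence the claim follows purely formally from the equivalence of \Cref{lem:substitutions-into-extended-context} and the elimination principle for $\Sigmatype$-types.
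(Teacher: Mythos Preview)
Your proposal is correct and matches the paper's approach: the paper gives no explicit proof beyond the phrase ``By \Cref{lem:substitutions-into-extended-context}'', and your argument simply spells out the standard fact that sections of a type family can be transported along an equivalence of base types, together with \(\Sigmatype\)-induction.
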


\begin{corollary}[Equality of substitutions into extended contexts]\label{cor:equality-substitutions-into-extended-contexts}
  If \(\sigma\) and \(\tau\) are substitutions from \(\Gamma\) to \(\Delta \ctxext A\), then by \Cref{lem:substitutions-into-extended-context}, the fact that equivalences induce equivalent identity types~\cite[Theorem 2.11.1]{hott-book}, and \Cref{prop:transport-in-Tm-of-substT}, an equality \(\sigma = \tau\) is equivalent to a pair of equalities
  \[e \oftype \cwfproj \subcomp \sigma = \cwfproj \subcomp \tau \eqntext{and} \cwfvar \substt \sigma \transpover {\pathinv\substTcomp \pathcomp \eqsubsubstT{e} \pathcomp \substTcomp} = \cwfvar \substt \tau.\]

  An alternative but equivalent
  formulation is the following---for substitutions of the form \((\sigma, a), (\tau, b) \oftype \Sub\Gamma{\Delta \ctxext A}\),
  \[ \big((\sigma, a) =_{\Sub\Gamma{\Delta \ctxext A}} (\tau, b)\big) \ \equivalent \ \big(\Sigmatype*[(e \oftype \sigma = \tau)]*[a \transpover[\Tm]{\eqsubsubstT e} = b]\big). \]
  To see this, write \(\varphi\) for the forward equivalence of \Cref{lem:substitutions-into-extended-context}.
  In the proof of \Cref{lem:substitutions-into-extended-context} we showed that the equalities of pairs \(\varphi((\sigma, a)) = (\sigma, a)\) and \(\varphi((\tau, b)) = (\tau, b)\) hold.
  Then \(\varphi((\sigma, a)) = \varphi((\tau, b))\) is equivalent to the equality type of pairs \((\sigma, a) = (\tau, b)\), which by \Cref{prop:transport-in-Tm-of-substT} is equivalent to the \(\Sigma\)-type as claimed.
\end{corollary}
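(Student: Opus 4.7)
The plan is to follow the outline already given in the statement. Start with the forward equivalence
\[ \varphi \oftype \Sub\Gamma{\Delta\ctxext A} \equivto \Sigmatype*[(\sigma \oftype \Sub\Gamma\Delta)]*[\Tm*(A \substT \sigma)] \]
of \Cref{lem:substitutions-into-extended-context}, which sends $\sigma$ to $(\cwfproj_A \subcomp \sigma,\ {\cwfvar_A \substt \sigma} \transpover[\Tm]{\pathinv\substTcomp})$. Since equivalences induce equivalences of identity types, $\sigma = \tau$ is equivalent to $\varphi(\sigma) = \varphi(\tau)$, which in turn by the standard characterization of equality in $\Sigma$-types is equivalent to the type of pairs consisting of an equality $e \oftype \cwfproj \subcomp \sigma = \cwfproj \subcomp \tau$ of first components and an equality
\[ {\cwfvar \substt \sigma \transpover[\Tm]{\pathinv\substTcomp}} \transpover[\Tm*(A \substT \blank)]{e} = \cwfvar \substt \tau \transpover[\Tm]{\pathinv\substTcomp} \]
of second components transported over $e$.

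The next step is to rewrite the transport in the family $\Tm*(A \substT \blank)$ using \Cref{prop:transport-in-Tm-of-substT}, which yields that it is equal to the transport in $\Tm$ along $\eqsubsubstT e$. Combining adjacent transports in $\Tm$ via \cite[Lemma 2.3.9]{hott-book}, the displayed equality becomes
\[ \cwfvar \substt \sigma \transpover[\Tm]{\pathinv\substTcomp \pathcomp \eqsubsubstT e} = \cwfvar \substt \tau \transpover[\Tm]{\pathinv\substTcomp}. \]
Transporting both sides along $\substTcomp$ and cancelling $\pathinv\substTcomp \pathcomp \substTcomp$ on the right gives the first claimed characterization in terms of the transport over $\pathinv\substTcomp \pathcomp \eqsubsubstT e \pathcomp \substTcomp$.

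For the alternative formulation, I would use the computation performed in the proof of \Cref{lem:substitutions-into-extended-context} showing that $\varphi((\sigma \subext a)) = (\sigma, a)$ as an actual equality of pairs (not merely via some round-trip coercion). Applying this to both $(\sigma, a)$ and $(\tau, b)$, the equivalence $\sigma = \tau$ iff $\varphi(\sigma) = \varphi(\tau)$ becomes an equivalence between $(\sigma, a) = (\tau, b)$ in $\Sub\Gamma{\Delta \ctxext A}$ and $(\sigma, a) = (\tau, b)$ in the $\Sigma$-type; then one more application of the $\Sigma$-equality characterization combined with \Cref{prop:transport-in-Tm-of-substT} produces the desired $\Sigma$-type $\Sigmatype[(e \oftype \sigma = \tau)][a \transpover[\Tm]{\eqsubsubstT e} = b]$.

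The main obstacle I anticipate is simply the bookkeeping: making sure the transports in $\Tm$ versus $\Tm*(A \substT \blank)$ are handled consistently, and correctly accounting for the $\transpover{\pathinv\substTcomp}$ factors introduced by $\varphi$ so that they cancel on both sides. This is routine but delicate path algebra; there is no deep content beyond \Cref{lem:substitutions-into-extended-context} and \Cref{prop:transport-in-Tm-of-substT}.
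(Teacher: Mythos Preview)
Your proposal is correct and follows essentially the same route as the paper: apply the equivalence \(\varphi\) of \Cref{lem:substitutions-into-extended-context}, use that equivalences induce equivalences of identity types, unfold the \(\Sigma\)-equality characterization, and convert between transport in \(\Tm*(A\substT\blank)\) and transport in \(\Tm\) via \Cref{prop:transport-in-Tm-of-substT}. You spell out the transport bookkeeping (combining along \(\pathinv\substTcomp\), \(\eqsubsubstT e\), \(\substTcomp\) and cancelling) more explicitly than the paper, which simply cites the three ingredients; otherwise the arguments coincide.
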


\begin{lemma}[Terms are sections of display maps]\label{lem:terms-sections}
  For all contexts \(\Gamma \oftype \Con\) and \(\thecwf\)-types \(A \oftype \Ty*\Gamma\), there is an equivalence
  \[ \Tm*A \simeq \Sect(\cwfproj_A) \]
  whose forward map sends the \(\thecwf\)-term \(a\) to the section \((\id \subext a \substt \id)\) of \(\cwfproj_A\), witnessed by \(\cwfprojbeta\).
\end{lemma}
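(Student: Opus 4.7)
The plan is to construct the equivalence as a composite built from \Cref{lem:substitutions-into-extended-context}, standard \(\Sigmatype\)-reassociations and a based-path singleton contraction, followed by the equivalence \((\blank \substt \id)\) of \Cref{prop:substtid-is-equiv}.

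First I would rewrite \(\Sect(\cwfproj_A)\) by applying \Cref{lem:substitutions-into-extended-context} to its morphism component. Since the forward equivalence of that lemma sends \(s \oftype \Sub\Gamma{\Gamma \ctxext A}\) to a pair whose first component is literally \(\cwfproj_A \catcomp s\), the section condition \(\cwfproj_A \catcomp s = \id\) transports directly to the condition that the first component equals \(\id\). This yields
\[ \Sect(\cwfproj_A) \equivalent \Sigmatype*[((\sigma, a) \oftype \Sigmatype[(\sigma \oftype \Sub\Gamma\Gamma)][\Tm*(A \substT \sigma)])]*[\sigma = \id[\Gamma]]. \]
Next I would reassociate the inner \(\Sigmatype\) and swap the two factors independent of \(a\) to obtain
\[ \equivalent \Sigmatype*[((\sigma, p) \oftype \Sigmatype[(\sigma \oftype \Sub\Gamma\Gamma)][\sigma = \id[\Gamma]])]*[\Tm*(A \substT \sigma)], \]
then contract the based-path singleton \(\Sigmatype[(\sigma \oftype \Sub\Gamma\Gamma)][\sigma = \id[\Gamma]]\) to arrive at \(\Tm*(A \substT \id[\Gamma])\). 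Finally, by \Cref{prop:substtid-is-equiv} the map \((\blank \substt \id) \oftype \Tm*A \to \Tm*(A \substT \id[\Gamma])\) is itself an equivalence, so composing with its inverse yields the desired \(\Tm*A \equivalent \Sect(\cwfproj_A)\).

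To confirm the stated form of the forward map, I would trace \(a \oftype \Tm*A\) through this composite: it is sent first to \(a \substt \id\), then expanded into the triple \((\id[\Gamma], \refl, a \substt \id)\) by the inverse singleton contraction, and finally sent via the inverse of \Cref{lem:substitutions-into-extended-context} to the morphism \((\id \subext a \substt \id)\). The accompanying identification \(\cwfproj_A \catcomp (\id \subext a \substt \id) = \id\) is the image of \(\refl_{\id[\Gamma]}\) under the reverse equivalence, which reduces by computation to \(\cwfprojbeta\).

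The main obstacle will be the bookkeeping of path algebra in this last verification step: since the forward equivalence of \Cref{lem:substitutions-into-extended-context} carries a transport factor of \(\pathinv\substTcomp\), and \Cref{prop:substtid-is-equiv} implicitly uses transport along \(\pathinv\substTid\), one must carefully confirm—using the functoriality equations of the typed term structure and their compatibility, together with naturality of transport—that these coercions cancel cleanly, so that the forward map emerges exactly as stated with no residual transports decorating either the term \(a \substt \id\) or the section witness \(\cwfprojbeta\).
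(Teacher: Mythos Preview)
Your proposal is correct and follows essentially the same route as the paper: apply \Cref{lem:substitutions-into-extended-context}, reassociate, contract the based-path singleton, and finish with \Cref{prop:substtid-is-equiv}, then trace the composite to identify the forward map. The only cosmetic difference is that the paper applies \Cref{lem:substitutions-into-extended-context} in the backward direction (and then simplifies the section condition via \(\cwfprojbeta\)), whereas you use the forward direction so that the condition transports literally---the resulting chains are the same.
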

\begin{proof}
  We have that%
  \newcommand{\adjust}{\hspace{-32ex}}%
  \begin{alignat*}{3}
    & \ \ && \Sect(\cwfproj_A) && \\
    \equiv &&& \Sigmatype*[(\sigma \oftype \Sub\Gamma{\Gamma\ctxext A})]*[\cwfproj_A \subcomp \sigma = \id] \\
    \simeq &&& \Sigmatype*[(u \oftype \Sigmatype[(\sigma \oftype \Sub\Gamma\Gamma)]*[\Tm*(A \substT \sigma)])]*[\cwfproj_A \subcomp (\fst*u \subext \snd*u) = \id]
      & \adjust\text{(by \Cref{lem:substitutions-into-extended-context})} \\
    \simeq &&& \Sigmatype*[(u \oftype \Sigmatype[(\sigma \oftype \Sub\Gamma\Gamma)]*[\Tm*(A \substT \sigma)])]*[\fst*u = \id]
      & \adjust\text{(by \(\cwfprojbeta\)~(\Cref{def:context-extension-structure}))} \\
    \simeq &&& \Sigmatype*[(u \oftype \Sigmatype[(\sigma \oftype \Sub\Gamma\Gamma)]*[\sigma = \id])]*[\Tm*(A \substT{\fst*u})]
      & \adjust\text{(assoc. of \(\Sigmatype\) and comm. of \(\times\))} \\
    \simeq &&& \Tm*(A \substT \id)
      & \adjust\text{(contractibility of singletons)} \\
    \simeq &&& \Tm*A
      & \adjust\text{(by the inverse of the equivalence \(\blank \substt \id\)~(\Cref{prop:substtid-is-equiv})).}
  \end{alignat*}
  Tracing the composition of this chain of equivalences, we compute that its inverse is equal to the map
  \begin{align*}
    \Tm*A & \to \Sect(\cwfproj_A) \\
    a & \mapsto ((\id \subext a \substt \id) , \cwfprojbeta). \qedhere
  \end{align*}
\end{proof}

Analogues of \Cref{lem:substitutions-into-extended-context,lem:terms-sections} were already observed for set-level 1-cwfs by Dybjer~\cite{dybjer:96:internal-tt} and by Hofmann~\cite{hofmann:97:syntax-semantics} in a traditional 1-categorical setting.
In that setting, these properties essentially follow from the fact that context extension structures on \((\C, \Ty, \Tm)\) are choices of representing objects for particular presheaves on slices of \(\C\).


It may seem slightly surprising that the fully coherent homotopical versions of the same properties hold for arbitrary, even noncoherent, wild cwfs.
On the other hand, given that a context extension structure for 1-cwfs essentially spells out the universal property of representability of a certain presheaf (and, relatedly, that the equivalent natural models~\cite{awodey:18:natural-models} have a relatively simple axiomatization in terms of representability of pullbacks of presheaves), it is perhaps to be expected that certain consequences would carry over immediately to higher generalizations even without imposing additional coherence conditions.
\section[2-coherence for context extension]{2-Coherence for Context Extension}\label{sec:2-coh-ctx-ext}

We now construct and analyze in more detail a characterization of the equality of substitutions into extended contexts.

First, consider the case where \(\sigma\) and \(\tau\) are substitutions from \(\Gamma\) to \emph{arbitrary} contexts \(\Delta\), with \(A \oftype \Ty*\Delta\), \(a \oftype \Tm*(A \substT \sigma)\) and \(b \oftype \Tm*(A \substT \tau)\).
From the equivalence
\[ \big(\Sigmatype*[(\sigma \oftype \Sub\Gamma\Delta)]*[\Tm*(A \substT \sigma)]\big) \equivto[(\blank \subext \blank)] \Sub\Gamma{\Delta\ctxext{A}} \]
of \Cref{lem:substitutions-into-extended-context}, we obtain
\[ \big((\sigma, a) =_{\Sigmatype[(\Sub\Gamma{\Delta \ctxext A})][(\Tm*(A \substT \blank))]} (\tau, b)\big)
\xrightarrow[{\ap[(\blank \subext \blank)]}]{\sim}
\big((\sigma \subext a) =_{\Sub\Gamma{\Delta \ctxext A}} (\tau \subext b)\big).
\]
Precomposing this with
\begin{align*}
\big(\Sigmatype*[(e \oftype \sigma = \tau)]*[a \transpover[\Tm]{\eqsubsubstT{e}} = b]\big)
  & \equivto[\id \,\times\, \varphi] \Sigmatype*[(e \oftype \sigma = \tau)]*[a \transpover[\Tm*(A \substT \blank)]{e} = b] \\
  & \equivto[\Sigmatypeeq] \big((\sigma, a) =_{\Sigmatype[(\Sub\Gamma{\Delta \ctxext A})][(\Tm*(A \substT \blank))]} (\tau, b)\big),
\end{align*}
where
\[ \varphi \oftype \Pitype*[(e \oftype \sigma = \tau)][\big(a \transpover[\Tm]{\eqsubsubstT e} = b \equivto a \transpover[\Tm*(A \substT \blank)]{e} = b\big)] \]
is the family of equivalences induced by \Cref{prop:transport-in-Tm-of-substT} and \(\Sigmatypeeq\) is the standard characterization of the equality of \(\Sigmatype\)-types,
we get an equivalence
\begin{gather*}
\subeqaux \oftype \big(\Sigmatype*[(e \oftype \sigma = \tau)]*[a \transpover[\Tm]{\eqsubsubstT{e}} = b]\big) \equivto (\sigma, a) =_{\Sub\Gamma{\Delta \ctxext A}} (\tau, b) \\
\subeqaux \, (e, \primed{e}) \defeq \ap[(\blank \subext \blank)][(\Sigmatypeeq \, (e, \varphi_{e}\primed{e}))].
\end{gather*}

\begin{lemma}[\(\cwfprojbeta\) is a natural isomorphism]\label{lem:cwfprojbeta-nat-iso}
  By definition, \(\subeqaux\,(\refl, \refl) \equiv \refl\).
  Hence for all substitutions \(\sigma, \tau \oftype \Sub\Gamma{\Delta \ctxext A}\),
  \(\thecwf\)-terms \(a \oftype \Tm*(A \substT \sigma)\) and \(b \oftype \Tm*(A \substT \tau)\),
  and equalities \(e \oftype \sigma = \tau\) and \(\primed{e} \oftype a \transpover{\eqsubsubstT e} = b\),
  we have that the square
  \[\begin{tikzcd}[paths]
      \cwfproj_A \subcomp (\sigma \subext a)
      \rar["\cwfprojbeta"]
      \dar["{\cwfproj_A \, \whisker \, \subeqaux\,(e, \primed{e})}" swap]
      & \sigma \dar["e"]
      \\
      \cwfproj_A \subcomp (\tau \subext b)
      \rar["\cwfprojbeta" swap]
      & \tau
  \end{tikzcd}\]
  canonically commutes by induction on \(e\) and \(\primed{e}\).
  Equivalently,
  \[ \cwfproj_A \whisker \subeqaux\,(e, \primed{e}) = \cwfprojbeta \pathcomp e \pathcomp \pathinv\cwfprojbeta. \]
\end{lemma}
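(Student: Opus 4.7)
The plan is to prove both claims by a straightforward double path induction, where the key point is that the definitional equation $\subeqaux\,(\refl,\refl) \equiv \refl$ collapses the square to a trivial path-algebra identity in the reflexivity case.

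First I would perform induction on $e \oftype \sigma = \tau$. Taking $e \equiv \refl$ forces $\tau \equiv \sigma$, and since $\eqsubsubstT{\blank} \defeq \ap[(A\substT\blank)][\blank]$, we get $\eqsubsubstT{\refl} \equiv \refl$ definitionally, so the transport in the type of $\primed{e}$ collapses and $\primed{e}$ becomes an inhabitant of $a = b$. A second induction on $\primed{e}$ then reduces to the case $\primed{e} \equiv \refl$ and $b \equiv a$. At this point, by the hypothesis $\subeqaux\,(\refl,\refl) \equiv \refl$ recorded at the start of the lemma, the left vertical of the square becomes $\cwfproj_A \whisker \refl$, which is definitionally $\refl$ by the first clause of \Cref{prop:whiskering-properties}.

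What remains is to exhibit an equality $\refl = \cwfprojbeta \pathcomp \refl \pathcomp \pathinv\cwfprojbeta$, which is immediate from the standard path-algebra identity $\pathinv{p} \pathcomp p = \refl$ (and is propositional, not definitional, hence the equality is ``canonical'' but not strict — matching the wording of the statement). Combining the two forms of the conclusion, the equivalent formulation $\cwfproj_A \whisker \subeqaux\,(e,\primed{e}) = \cwfprojbeta \pathcomp e \pathcomp \pathinv\cwfprojbeta$ follows from the square by standard manipulation of naturality squares of equalities.

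There is no real obstacle here: the nontrivial input is the definitional behaviour $\subeqaux\,(\refl,\refl) \equiv \refl$, which was already secured by the explicit construction of $\subeqaux$ immediately preceding the lemma (namely as $\subeqaux\,(e,\primed{e}) \defeq \ap[(\blank\subext\blank)][(\Sigmatypeeq\,(e,\varphi_e\primed{e}))]$, where $\varphi_\refl$ is the identity and $\Sigmatypeeq\,(\refl,\refl) \equiv \refl$). Everything else is forced by path induction.
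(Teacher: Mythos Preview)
Your proof is correct and follows exactly the approach indicated in the paper: the lemma's statement already records that the proof is ``by induction on \(e\) and \(\primed{e}\)'' using the definitional equation \(\subeqaux\,(\refl,\refl) \equiv \refl\), and your argument spells out precisely this double path induction. One minor nit: you wrote \(\pathinv{p} \pathcomp p = \refl\) where the shape of the goal is \(p \pathcomp \refl \pathcomp \pathinv{p} = \refl\), but this is an inconsequential slip.
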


\begin{definition}[\(\laweta\)-equality of substitutions]\label{notation:subetaequality}
  For all \(A \oftype \Ty*\Delta\) and \(\sigma \oftype \Sub\Gamma{\Delta\ctxext{A}}\), denote by
  \begin{gather*}
    \subetaequality{\sigma} \oftype (\cwfproj_A \subcomp \sigma \subext {{\cwfvar_A \substt \sigma} \transpover {\pathinv\substTcomp}}) = \sigma \\
    \subetaequality{\sigma} \defeq \pathinv\subextcomp \pathcomp (\subexteta \whisker \sigma) \pathcomp \lunitor
  \end{gather*}
  the equality in the first part of the proof of \Cref{lem:substitutions-into-extended-context}.
  This is an \(\laweta\)-rule for substitutions into extended contexts.
\end{definition}

\begin{definition}[Equality of substitutions into extended contexts, revisited]\label{def:subeq}
  Suppose that \(\sigma, \tau \oftype \Sub\Gamma{\Delta \ctxext A}\) are substitutions into an extended context.
  We define an equivalence
  \[ \subeq_{\sigma, \tau} \oftype
    \big(\Sigmatype*[(e \oftype \cwfproj \subcomp \sigma = \cwfproj \subcomp \tau)]*[
    \cwfvar \substt \sigma \transpover {\pathinv\substTcomp \pathcomp \eqsubsubstT{e} \pathcomp \substTcomp}
        = \cwfvar \substt \tau]\big)
    \equivto \sigma = \tau
  \]
  as follows: if
  \[ e \oftype \cwfproj_A \subcomp \sigma = \cwfproj_A \subcomp \tau \]
  and
  \[ \primed{e} \oftype
      \cwfvar \substt \sigma \transpover {\pathinv\substTcomp \pathcomp \eqsubsubstT{e} \pathcomp \substTcomp}
      = \cwfvar \substt \tau,
  \]
  then take \(\subeq_{\sigma, \tau}\,(e, \primed{e})\) to be the composite
  \[ \sigma
    \eqnequal{\pathinv{\subetaequality\sigma}}
    ({\cwfproj \subcomp \sigma} \subext {\cwfvar \substt \sigma}\transpover{\pathinv\substTcomp})
    \eqnequal{\subeqaux\,(e, \pprimed{e})}
    ({\cwfproj \subcomp \tau} \subext {\cwfvar \substt \tau}\transpover{\pathinv\substTcomp})
    \eqnequal{\subetaequality\tau}
    \tau,
  \]
  where \(\pprimed{e} \oftype {\cwfvar \substt \sigma \transpover{\pathinv\substTcomp} \transpover{\eqsubsubstT e}} = {\cwfvar \substt \tau \transpover{\pathinv\substTcomp}}\)
  is canonically constructed from \(\primed{e}\).
  This definition yields an equivalence, being essentially the composition of \(\subeqaux\) with the equivalence given by path-composing with \(\pathinv{\subetaequality\sigma}\) and \(\subetaequality\tau\).
\end{definition}

Now, it is natural to ask if a \(\lawbeta\)-rule holds for the first argument of \(\subeq_{\sigma, \tau}\), i.e.\ if, for all \(\sigma\) and \(\tau\), the composition
\[
  \big(\Sigmatype*[(\cwfproj \subcomp \sigma = \cwfproj \subcomp \tau)]*[
    \cwfvar \substt \sigma \transpover {\pathinv\substTcomp \pathcomp \eqsubsubstT{\blank} \pathcomp \substTcomp}
      = \cwfvar \substt \tau]\big)
  \xrightarrow{\subeq_{\sigma, \tau}} \sigma = \tau
  \xrightarrow{\cwfproj \,\whisker\, \blank} \cwfproj \subcomp \sigma = \cwfproj \subcomp \tau
\]
is equal to the first projection.
By \Cref{lem:cwfprojbeta-nat-iso}, we calculate that
\[
  \cwfproj \whisker \subeq_{\sigma, \tau}\,(e, \primed{e})
  = (\cwfproj \whisker \pathinv{\subetaequality\sigma})
    \pathcomp \cwfprojbeta \pathcomp e \pathcomp \pathinv\cwfprojbeta
    \pathcomp (\cwfproj \whisker \subetaequality\tau)
\]
for all \(e\) and \(\primed{e}\).
The desire to have this expression be equal to \(e\) motivates the next definition, whence \Cref{prop:subeq-beta} immediately follows.

\begin{definition}[Coherators for \(\subetaequality{}\)]\label{def:subetaequality-coherators}
  We say that a wild cwf \(\thecwf\) \emph{has coherators for \(\subetaequality{}\)} if for all \(\Gamma, \Delta \oftype \Con\), \(A \oftype \Ty*\Delta\) and \(\sigma \oftype \Sub\Gamma{\Delta \ctxext A}\), we have that
  \[ \cwfproj_A \whisker \subetaequality{\sigma} = \cwfprojbeta \]
  as 2-cells of type \(\cwfproj_A \subcomp (\cwfproj_A \subcomp \sigma \subext \cwfvar_A \substt \sigma \transpover{\pathinv\substTcomp}) = \cwfproj_A \subcomp \sigma\).
\end{definition}

\begin{proposition}[\(\lawbeta\)-reduction for \(\subeq_{\sigma, \tau}\)]\label{prop:subeq-beta}%
  Suppose \(\sigma, \tau \oftype \Sub\Gamma{\Delta \ctxext A}\) are equal substitutions, witnessed by
  \(e \oftype \cwfproj_A \subcomp \sigma = \cwfproj_A \subcomp \tau\)
  and
  \( \primed{e} \oftype
    \cwfvar \substt \sigma \transpover {\pathinv\substTcomp \pathcomp \eqsubsubstT{e} \pathcomp \substTcomp}
    = \cwfvar \substt \tau \).
  If \(\thecwf\) has coherators for \(\subetaequality{}\), then
  \[ \cwfproj_A \whisker \subeq_{\sigma, \tau}\,(e, \primed{e}) = e. \]
\end{proposition}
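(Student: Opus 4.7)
The plan is to directly compute $\cwfproj_A \whisker \subeq_{\sigma, \tau}(e, \primed{e})$ using the calculation already worked out in the paragraph immediately preceding \Cref{def:subetaequality-coherators}, and then apply the hypothesis that $\thecwf$ has coherators for $\subetaequality{}$ to simplify the result by path algebra.

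More concretely, unfolding $\subeq_{\sigma, \tau}(e, \primed{e})$ according to \Cref{def:subeq} as the three-fold concatenation
\[ \pathinv{\subetaequality{\sigma}} \pathcomp \subeqaux(e, \pprimed{e}) \pathcomp \subetaequality{\tau}, \]
the functoriality of left whiskering (\Cref{prop:whiskering-properties}) together with \Cref{lem:cwfprojbeta-nat-iso} yields
\[
  \cwfproj_A \whisker \subeq_{\sigma, \tau}(e, \primed{e})
  = (\cwfproj_A \whisker \pathinv{\subetaequality{\sigma}})
    \pathcomp \cwfprojbeta \pathcomp e \pathcomp \pathinv\cwfprojbeta
    \pathcomp (\cwfproj_A \whisker \subetaequality{\tau}),
\]
which is exactly the expression recorded in the motivation for \Cref{def:subetaequality-coherators}. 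Using that whiskering distributes over path inversion (\Cref{prop:whiskering-properties}) together with the coherator hypothesis $\cwfproj_A \whisker \subetaequality{\sigma} = \cwfprojbeta$ (and its instance at $\tau$), we may rewrite the first factor as $\pathinv\cwfprojbeta$ and the last factor as $\cwfprojbeta$, reducing the right-hand side to
\[ \pathinv\cwfprojbeta \pathcomp \cwfprojbeta \pathcomp e \pathcomp \pathinv\cwfprojbeta \pathcomp \cwfprojbeta. \]
Two applications of left- and right-cancellation of inverse paths then leave precisely $e$, completing the proof.

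There is no real obstacle here: the substantive content is entirely contained in \Cref{lem:cwfprojbeta-nat-iso} and \Cref{def:subetaequality-coherators}, and the remaining step is routine 1-groupoidal path algebra. The only minor care needed is in tracking that the instance of \(\subeqaux\) in \Cref{def:subeq} contributes the middle factor $\cwfprojbeta \pathcomp e \pathcomp \pathinv\cwfprojbeta$ via \Cref{lem:cwfprojbeta-nat-iso}, which requires invoking that lemma at the pair $(e, \pprimed{e})$ rather than the original $(e, \primed{e})$; since \Cref{lem:cwfprojbeta-nat-iso} gives a formula independent of the second component, this is harmless.
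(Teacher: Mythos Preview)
Your proposal is correct and follows exactly the approach the paper takes: the paper states explicitly that the proposition ``immediately follows'' from \Cref{def:subetaequality-coherators} together with the calculation displayed just before it, which is precisely the computation you carry out. Your write-up is in fact more detailed than the paper's, which gives no separate proof at all.
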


In fact, a wild cwf \(\thecwf\) has coherators for \(\subetaequality{}\) if its category of contexts has triangle coherators (\Cref{def:triangle-coherators-unitors}), and it further satisfies the following coherence condition:

\begin{definition}[Coherators for context extension]\label{node:coherence-context-extension}
  A wild cwf \(\thecwf\) \emph{has coherators for context extension} if, for all \(\Gamma, \Delta \oftype \Ob\C\), \(A \oftype \Ty*\Delta\) and \(\sigma \oftype \Sub\Gamma{\Delta \ctxext A}\), the following diagrams of equalities commute:
  { \newcommand{\p}{\cwfproj_A}
    \newcommand{\q}{\cwfvar_A}
  \[\begin{tikzcd}[paths, column sep=0.5ex]
    & \p \subcomp (\p \subext \q)
      \ar[dl, "\p \whisker \subexteta"{swap, xshift=-2pt, yshift=-6pt}]
      \ar[dr, "\cwfprojbeta"{xshift=2pt, yshift=-6pt}]
    & \\
    \p \subcomp \id
      \ar[rr, "\runitor" swap]
    && \p
  \end{tikzcd}\]
  and
  \[\begin{tikzcd}[paths, column sep=0.5ex]
    \p \subcomp (\p \subext \q) \subcomp \sigma
      \rar["\pathinv\associator"]
      \dar["\p \whisker \subextcomp" swap]
    & (\p \subcomp (\p \subext \q)) \subcomp \sigma
      \dar["\cwfprojbeta \whisker \sigma"]
    \\
    \p \subcomp (\p \subcomp \sigma \subext \q \substt \sigma \transpover{\pathinv\substTcomp})
      \rar["\cwfprojbeta" swap]
    & \p \subcomp \sigma
  \end{tikzcd}\]
  }
\end{definition}

\begin{lemma}\label{lem:coh-for-ctxt-ext-implies-coh-for-subetaequality}
  If a wild cwf \(\thecwf\) has triangle coherators as well as coherators for context extension, then it has coherators for \(\subetaequality{}\).
  Thus, the conclusion of \Cref{prop:subeq-beta} also holds if \(\thecwf\) has coherators for context extension.
\end{lemma}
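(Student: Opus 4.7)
The strategy is to compute $\cwfproj_A \whisker \subetaequality{\sigma}$ by unfolding the definition
\[ \subetaequality{\sigma} \defeq \pathinv\subextcomp \pathcomp (\subexteta \whisker \sigma) \pathcomp \lunitor, \]
distributing left whiskering by $\cwfproj_A$ over the concatenation and inverse using \Cref{prop:whiskering-properties}, and then rewriting each piece using the hypothesized coherators until the result reduces to $\cwfprojbeta$.

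First, I would rewrite the outer pieces of the composite. The triangle coherator $\trianglecoh{\sigma}{\cwfproj_A}$ identifies $\cwfproj_A \whisker \lunitor_\sigma$ with $\invassociator \pathcomp (\runitor \whisker \sigma)$, and the associator-naturality law $g \whisker (\gamma \whisker f) = \invassociator \pathcomp ((g \whisker \gamma) \whisker f) \pathcomp \associator$---a rearrangement of the third naturality equation in \Cref{prop:whiskering-properties}---identifies $\cwfproj_A \whisker (\subexteta \whisker \sigma)$ with $\invassociator \pathcomp ((\cwfproj_A \whisker \subexteta) \whisker \sigma) \pathcomp \associator$. After an $\associator \pathcomp \invassociator$ cancellation between these two rewrites, the middle portion of the composite regroups as $((\cwfproj_A \whisker \subexteta) \pathcomp \runitor) \whisker \sigma$ by functoriality of whiskering over concatenation, and the first coherator for context extension rewrites this to $\cwfprojbeta \whisker \sigma$.

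At this point the goal has reduced to
\[ (\cwfproj_A \whisker \pathinv\subextcomp) \pathcomp \invassociator \pathcomp (\cwfprojbeta \whisker \sigma) = \cwfprojbeta. \]
The second coherator for context extension is exactly the equation $\invassociator \pathcomp (\cwfprojbeta \whisker \sigma) = (\cwfproj_A \whisker \subextcomp) \pathcomp \cwfprojbeta$, so one further rewrite collapses the left-hand side to $(\cwfproj_A \whisker \pathinv\subextcomp) \pathcomp (\cwfproj_A \whisker \subextcomp) \pathcomp \cwfprojbeta$, which telescopes to $\cwfprojbeta$ using that whiskering respects inverses and concatenation (again \Cref{prop:whiskering-properties}). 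The final assertion of the lemma then follows at once by feeding the identity just established into \Cref{prop:subeq-beta}.

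I do not anticipate any genuine obstacle: the entire argument is a 2-cell diagram chase in which every step is either a generic property of whiskering or a direct invocation of a named coherator. The only thing requiring care is keeping the orientations of $\associator$ versus $\invassociator$ straight and applying the correct dual of the associator-naturality equation at the right spot.
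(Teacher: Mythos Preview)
Your proof is correct and follows essentially the same route as the paper's: the paper presents the argument as a pasting of four cells in a single diagram (an associativity-of-whiskering square, the triangle coherator, and the two context-extension coherators), while you carry out the identical computation as a linear chain of rewrites. The ingredients and their roles match one-for-one.
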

\begin{proof}%
  Having coherators for \(\subetaequality{}\) is equivalent to having the outer boundary of \Cref{diag:coh-for-ctxt-ext-implies-coh-for-subetaequality} commute for all \(\Gamma, \Delta \oftype \Ob\C\), \(A \oftype \Ty*\Delta\) and \(\sigma \oftype \Sub\Gamma{\Delta \ctxext A}\).
  We show that this holds by pasting together the regions shown in the interior of the diagram, where the topmost interior square is filled by associativity of whiskering (\Cref{prop:whiskering-properties}), the rightmost triangle by the triangle coherator, and the regions marked \(\circlearrowright\) using the coherators for context extension.
  \begin{diag}
    \caption{Coherators for \(\subetaequality{}\) from coherators for context extension.%
    }\label{diag:coh-for-ctxt-ext-implies-coh-for-subetaequality}%
    \newcommand{\p}{\cwfproj_A}%
    \newcommand{\q}{\cwfvar_A}%
    \newcommand{\oo}{\subcomp}%
    \[\begin{tikzcd}[paths, column sep=0pt]
      \p \oo (\p \subext \q) \oo \sigma
        \ar[rrr, "\cwfproj \whisker (\subexteta \whisker \sigma)"]
        \ar[dr, "\pathinv\associator"{yshift=-6pt, xshift=2pt}]
        \ar[dd, "\cwfproj \whisker \subextcomp" swap]
      & [-4em] & [3em] & \p \oo \id \oo \sigma
        \ar[dd, "\cwfproj \whisker \lunitor"]
      \\
      & (\p \oo (\p \subext \q)) \oo \sigma
        \rar["(\cwfproj \whisker \subexteta) \whisker \sigma"]
        \ar[drr, out=-75, in=175, "\cwfprojbeta \whisker \sigma"{yshift=11pt, xshift=-3em}, ""{name=U, below}]
        \markcomm[\circlearrowright]{dl}
      & (\p \oo \id) \oo \sigma
        \ar[ur, "\associator"{yshift=-6pt, xshift=-2pt}]
        \ar[dr, bend right=5, "\runitor \whisker \sigma"{swap, yshift=8pt}]
        \markcomm[\circlearrowright]{U}
      &
      \\[3em]
      \p \oo (\p \oo \sigma \subext \q \substt \sigma \transpover{\pathinv\substTcomp})
        \ar[rrr, "\cwfprojbeta" swap]
      &&& \p \oo \sigma
    \end{tikzcd}\]
  \end{diag}
\end{proof}

Presumably, the coherence conditions for context extension structures would arise out of the universal properties of sufficiently coherent representable presheaves \`{a} la a formulation via wild natural models.
In any case, we can now define:

\begin{definition}[Structurally 2-coherent wild cwfs]\label{def:2-coherent-cwfs}
  We say that a wild cwf \(\thecwf\) is \emph{(structurally) 2-coherent} if \(\thecwf\) has
  \begin{itemize}
    \item a 2-coherent wild category of contexts (\Cref{def:2-coherent-wild-cat}),
    \item type triangulators (\Cref{def:type-triangulators}) and type pentagonators (\Cref{def:type-pentagonators}), and
    \item coherators for context extension (\Cref{node:coherence-context-extension}).
  \end{itemize}
\end{definition}

\begin{examples}[2-coherent internal cwfs]
  Any set-level 1-cwf is immediately 2-coherent, and it is straightforward to check that the universe cwfs have type triangulators, type pentagonators and coherators for context extension.
\end{examples}

\begin{conjecture}[The container model]
  We also expect that the higher \emph{container model} of Altenkirch and Kaposi~\cite{altenkirch-kaposi:21:container-model} is 2-coherent, to be shown by forthcoming work of Damato and Altenkirch~\cite{damato-altenkirch:24:coherences-container-model}.
\end{conjecture}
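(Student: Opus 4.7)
My plan is to verify directly the three clauses of \Cref{def:2-coherent-cwfs} for the container model \(\thecwf\): namely, (i) 2-coherence of the category of contexts of \(\thecwf\), (ii) type triangulators and type pentagonators for its type presheaf, and (iii) coherators for context extension. Since the container model is constructed inductively over the shape-position structure of containers, I would proceed by induction on this structure, reducing each required coherence to a combination of coherences in the ambient universe wild cwf (\Cref{eg:universe-cwfs})---where they hold either definitionally or up to straightforward transport---together with inductive hypotheses on the component containers.

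For 2-coherence of the category of contexts, I expect the associators and unitors of container morphism composition to decompose into a ``shape'' part (inherited definitionally from \(\UniverseCat\)) and a ``position'' part (living over nontrivial equalities of shapes). The triangle and pentagon coherators would then be assembled by explicit calculation, matching up the instances of position-transports appearing on each side of the required equation. The type triangulators and type pentagonators should follow in the same spirit, since type substitution in the container model amounts to reindexing position data along a container morphism and therefore inherits the relevant coherences from the underlying universe together with the 2-cells already constructed for composition of contexts.

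The coherators for context extension (\Cref{node:coherence-context-extension}) are likely the main obstacle. The extension operation \(\Gamma \ctxext A\) in the container model combines shape and position data from both \(\Gamma\) and \(A\), and the display map \(\cwfproj\) and generic term \(\cwfvar\) relate these in a coherent but non-definitional way. Verifying the two diagrams of \Cref{node:coherence-context-extension} therefore requires careful bookkeeping of transports across the nontrivial \(\substTcomp\) equalities introduced by container composition---precisely the sort of bookkeeping that was needed in \Cref{lem:coh-for-ctxt-ext-implies-coh-for-subetaequality} and in the proofs of \Cref{sec:pullbacks}. As an organizational strategy, I would try to first factor the argument through an abstract notion of ``container comprehension'' over an arbitrary 2-coherent wild cwf, and prove a transfer lemma: that such a comprehension, applied to any 2-coherent base, yields a 2-coherent wild cwf. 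This would isolate the concrete combinatorial content to a single step, clarify which coherence hypotheses on the base are actually needed, and likely apply beyond the specific container model---although whether such a clean abstract formulation is available remains to be established.
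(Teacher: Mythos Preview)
The statement you are addressing is a \emph{conjecture}, not a theorem: the paper does not prove it, but instead defers its verification to forthcoming work of Damato and Altenkirch. There is therefore no proof in the paper to compare your proposal against.

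Your sketch is a reasonable outline of how such a verification might proceed, and the strategy of reducing each coherence clause to coherences inherited from the ambient universe together with inductive hypotheses is plausible. The idea of abstracting through a ``container comprehension'' transfer lemma is speculative but interesting. However, none of this constitutes a proof: you have not actually carried out any of the inductive arguments, exhibited the required 2-cells, or verified that the transport bookkeeping closes up. What you have written is a plan, not a proof---which is appropriate, since the paper itself only records a conjecture here.
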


\section[Context comprehension in 2-coherent wild cwfs]{Context Comprehension in 2-Coherent Wild Cwfs}\label{sec:ctx-comprehension}

Central to the 1-categorical semantics of dependent type theory is the recognition that types form a Grothendieck fibration over contexts, and that context extension maps this fibrational structure back into the category of contexts \(\C\), by forming a cartesian-morphism-preserving map into the arrow category \(\arrowcat\C\).
This is summed up in the notion of a \emph{comprehension category}~\cite{jacobs:93:comprehension-cats}, and it is widely known that 1-cwfs are equivalent to full split comprehension categories~\cite{blanco:91:categorical-approaches,aln:25:semantic-frameworks}.
In this section, we prove a higher version of this property for 2-coherent wild cwfs.

\subsection{The universal property of context extension}

\begin{lemma}[2-coherent substitution in types is weak pullback]\label{lem:substT-wk-pb}
  Suppose that \(\thecwf\) is a 2-coherent wild cwf, \(\sigma \oftype \Sub\Gamma\Delta\) is a substitution,
  and \(A \oftype \Ty*\Delta\) is a \(\thecwf\)-type.
  Then there is a substitution
  \[ \subliftT{\sigma}{A} \defeq (\sigma \subcomp \cwfproj_{A \substT \sigma} \subext \cwfvar_{A \substT \sigma} \transpover[\Tm]{\pathinv\substTcomp}) \]
  from \(\Gamma \ctxext A \substT \sigma\) to \(\Delta \ctxext A\), such that the square
  \[
    \substTpb{\sigma}{A} \eqndefeq
    \begin{gathered}
      \begin{tikzcd}
        \Gamma \ctxext A \substT \sigma
          \rar["\subliftT{\sigma}{A}"] \dar["\cwfproj" swap]
        & \Delta \ctxext A
          \dar["\cwfproj"]
        \\
        \Gamma
          \rar["\sigma" swap] \ar[ur, path, "\pathinv\cwfprojbeta" swap]
        & \Delta
      \end{tikzcd}
    \end{gathered}
  \]
  is a weak pullback in \(\C\).
  That is, for any \(\Beta \oftype \Con\) and commuting square \(\fS \defeq (\tau, \varrho, \gamma)\) with source \(\Beta\) as in
  \[
    \begin{tikzcd}
      \Beta \rar["\varrho"] \dar["\tau" swap]
      & \Delta \ctxext A \dar["\cwfproj"]
      \\
      \Gamma \rar["\sigma" swap] \ar[ur, path, "\gamma" swap]
      & \Delta
    \end{tikzcd},
  \]
  the fiber \(\fiber{(\substTpb{\sigma}{A} \sqcomp \blank)}(\fS)\) is pointed, i.e.\ there is a mediating substitution
  \[ \gapmap{\sigma}{A}{}_{, \fS} \oftype \Sub\Beta{\Gamma \ctxext {A \substT \sigma}} \]
  such that
  \[ \theta_{\sigma, A, \fS} \oftype \substTpb{\sigma}{A} \sqcomp \gapmap{\sigma}{A}{}_{, \fS} = \fS. \]
\end{lemma}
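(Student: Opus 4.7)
The plan is to construct the mediating substitution $\gapmap{\sigma}{A}{}_{, \fS}$ using the pair representation of substitutions into extended contexts (\Cref{lem:substitutions-into-extended-context}), and then verify that the resulting pasted square agrees with $\fS$ via the equality characterization of $\CommSq_{\fc}(X)$ from \Cref{prop:commsq-fixed-cospan-source-equality}.

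Concretely, given $\fS \equiv (\tau, \varrho, \gamma)$, I would define $\gapmap{\sigma}{A}{}_{, \fS} \defeq (\tau \subext a_\fS)$, where the term component $a_\fS \oftype \Tm*(A \substT \sigma \substT \tau)$ is obtained by transporting the generic term $\cwfvar_A \substt \varrho$ along the evident chain of type equalities
\[ A \substT {\cwfproj_A} \substT \varrho \xlongequal{\pathinv\substTcomp} A \substT {\cwfproj_A \subcomp \varrho} \xlongequal{\eqsubsubstT{\pathinv\gamma}} A \substT {\sigma \subcomp \tau} \xlongequal{\substTcomp} A \substT \sigma \substT \tau. \]
The required equality $\substTpb{\sigma}{A} \sqcomp \gapmap{\sigma}{A}{}_{,\fS} = \fS$ then decomposes (by \Cref{prop:commsq-fixed-cospan-source-equality}) into three pieces: equality of the first legs $\cwfproj_{A \substT \sigma} \subcomp \gapmap{\sigma}{A}{}_{,\fS} = \tau$, immediate from $\cwfprojbeta$; equality of the second legs $\subliftT\sigma{A} \subcomp \gapmap{\sigma}{A}{}_{,\fS} = \varrho$, obtained by unfolding the composite via $\subextcomp$, undoing the right hand side with $\pathinv{\subetaequality\varrho}$, and applying \Cref{cor:equality-substitutions-into-extended-contexts} after a routine transport calculation using $\cwfvarbeta$ and \Cref{prop:substitution-in-transported-terms} to cancel the transports defining $a_\fS$ against those introduced by the $\cwfvarbeta$ reduction; and finally, a matching of the precomposed $\pathinv\cwfprojbeta$ (the commutativity witness of $\substTpb{\sigma}{A}$) with $\gamma$, modulated by the whiskerings of the previous two leg equalities along the cospan $(\sigma, \cwfproj_A)$.

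The principal obstacle will be this final 2-cell matching condition, and it is here that 2-coherence earns its keep: whiskering the second leg equality by $\cwfproj_A$ exposes terms of exactly the shape controlled by the coherators for context extension (\Cref{node:coherence-context-extension}) and, via \Cref{lem:coh-for-ctxt-ext-implies-coh-for-subetaequality}, for $\subetaequality{}$, while the residual associator and unitor reshuffling is handled by the triangle and pentagon coherators of the category of contexts. Since the lemma only asks for a weak pullback, I need not establish uniqueness of $\gapmap{\sigma}{A}{}_{,\fS}$ at this stage.
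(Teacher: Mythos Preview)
Your overall architecture matches the paper's proof: same mediating substitution $(\tau \subext a_\fS)$ with the same transport chain, same decomposition via \Cref{prop:commsq-fixed-cospan-source-equality}, $\delta \defeq \cwfprojbeta$ for the first leg, and the final 2-cell handled by reducing $\cwfproj_A \whisker \epsilon$ via the coherators for $\subetaequality{}$ (\Cref{lem:coh-for-ctxt-ext-implies-coh-for-subetaequality}). The paper packages the second-leg equality through the equivalence $\subeq$ of \Cref{def:subeq} rather than directly through $\subetaequality{\varrho}$ and \Cref{cor:equality-substitutions-into-extended-contexts}, but that is the same content.

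There is, however, a substantive gap in your accounting of the coherences. The transport cancellation you describe as ``routine'' in the second leg is precisely where the real work lies, and it is not handled by the pentagon coherators of the category of contexts. After applying $\cwfvarbeta$, $\substtcomp$, and \Cref{prop:substitution-in-transported-terms}, one must show that a long composite of type equalities built from $\substTcomp$, $\eqsubsubstT{\cdots}$, and $\eqtypesubstT{\cdots}{\cdots}$ collapses to $\refl$. This requires the \emph{type pentagonators} of \Cref{def:type-pentagonators} (coherence for the $\Ty$ presheaf, relating iterated $\substTcomp$ with $\eqsubsubstT\associator$) together with the naturality squares for $\substTcomp$ from \Cref{lem:substTcomp-nat-iso}; see the paper's \Cref{diag:type-subst-wk-pb}. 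The pentagon coherators of $\C$ itself do not appear at this step. Once the second leg is established in this way, the final 2-cell condition really does reduce to a short calculation after invoking \Cref{lem:coh-for-ctxt-ext-implies-coh-for-subetaequality}, with no further higher coherence needed. So your plan is right, but you have mislocated the hard coherence obligation and misnamed the datum that discharges it.
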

\begin{proof}
  \newcommand{\sigmaA}{\subliftT{\sigma}A}
  \newcommand{\pA}{\cwfproj_A}
  \newcommand{\pAsigma}{\cwfproj_{A \substT \sigma}}
  \newcommand{\qA}{\cwfvar_A}
  \newcommand{\qAsigma}{\cwfvar_{A \substT \sigma}}
  We claim that a mediating substitution is given by
  \[ \gapmap{\sigma}{A}{}_{, \fS} \defeq (\tau \subext \cwfvar_A \substt \varrho \transpover[\Tm]{\pathinv\substTcomp \pathcomp \pathinv{\eqsubsubstT\gamma} \pathcomp \substTcomp}), \]
  where \(\cwfvar_A \substt \varrho \oftype \Tm*{A \substT {\cwfproj_A} \substT \varrho}\) is transported in the family \(\Tm[\Beta]\) along
  \[ A \substT \cwfproj \substT \varrho
    \xlongequal{\pathinv\substTcomp} A \substT {\cwfproj \subcomp \varrho}
    \xlongequal{\pathinv{\eqsubsubstT\gamma}} A \substT {\sigma \subcomp \tau}
    \xlongequal{\substTcomp} A \substT \sigma \substT \tau.
  \]

  For brevity, denote \(\gapmap{\sigma}{A}{}_{, \fS}\) by \(\mu\).
  By \Cref{prop:commsq-fixed-cospan-source-equality}, constructing
  \(\theta \oftype \substTpb{\sigma}{A} \sqcomp \thegapmap = \fS\)
  is equivalent to constructing witnesses
  \[ \delta \oftype \cwfproj_{A \substT \sigma} \subcomp \thegapmap = \tau \]
  and
  \[ \epsilon \oftype \subliftT{\sigma}{A} \subcomp \thegapmap = \varrho \]
  such that
  \[ \pathinv\associator \pathcomp (\pathinv\cwfprojbeta \whisker \thegapmap) \pathcomp \associator = (\sigma \whisker \delta) \pathcomp \gamma \pathcomp \pathinv{(\cwfproj_A \whisker \epsilon)}. \]

  Let \(\delta \defeq \cwfprojbeta\).
  Using the equivalence \(\subeq_{\blank\,,\ \blank}\) (\Cref{def:subeq}), we define
  \[ \epsilon \defeq \subeq_{(\sigmaA \subcomp \thegapmap),\, \varrho}(\epsilon_0, \epsilon_1), \]
  where
  \[ \epsilon_0 \oftype \pA \subcomp \sigmaA \subcomp \thegapmap = \pA \subcomp \varrho \]
  and
  \[ \epsilon_1 \oftype \qA \substt{\sigmaA \subcomp \thegapmap} \transpover[\Tm]{\pathinv\substTcomp \pathcomp \eqsubsubstT{\epsilon_0} \pathcomp \substTcomp} = \qA \substt \varrho \]
  are the 2-cells%
  \footnote{Since \(\thecwf\)-terms correspond to display maps in \(\thecwf\) (\Cref{lem:terms-sections}), we are justified in also calling \(\epsilon_1\) a 2-cell.}
  constructed as follows.

  First, let \(\epsilon_0\) be the concatenation of equalities
  \[ \pA \subcomp \sigmaA \subcomp \thegapmap \xlongequal{\pathinv\associator \pathcomp (\cwfprojbeta \whisker \thegapmap) \pathcomp \associator} \sigma \subcomp \pAsigma \subcomp \thegapmap
    \xlongequal{\sigma \whisker \cwfprojbeta} \sigma \subcomp \tau
    \xlongequal\gamma \pA \subcomp \varrho. \]

  Now calculate that
  \newcommand{\adjust}{\hspace{2em}}
  \begin{alignat*}{3}
    & \ \ && \qA \substt {\sigmaA \subcomp \thegapmap} && \\
    = &&& \qA \substt \sigmaA \substt \thegapmap \transpover{\pathinv\substTcomp}
      & \adjust\text{(by \(\substtcomp\))} \\
    \equiv &&& \qA \substt {\sigma \subcomp \pAsigma \subext \qAsigma \transpover {\pathinv\substTcomp}} \substt \thegapmap \transpover {\pathinv\substTcomp} \\
    = &&& (\qAsigma \transpover {\pathinv\substTcomp \pathcomp {\pathinv{\eqsubsubstT\cwfprojbeta}} \pathcomp \substTcomp}) \substt \thegapmap \transpover {\pathinv\substTcomp}
      & \adjust\text{(by \(\cwfvarbeta\))} \\
    = &&& \qAsigma \substt\thegapmap \transpover {\eqtypesubstT{(\pathinv\substTcomp \pathcomp \pathinv{\eqsubsubstT{\cwfprojbeta}} \pathcomp \substTcomp)}{\thegapmap} \,\pathcomp\, \pathinv\substTcomp}
      & \adjust\text{(by \Cref{prop:substitution-in-transported-terms})} \\
    = &&& \qA \substt \varrho \transpover e & \adjust\text{(by \(\cwfvarbeta\))}
  \end{alignat*}
  where the transports are all in \(\Tm[\Beta]\), and \(e\) is the composition
  \[ e \defeq \pathinv\substTcomp \pathcomp \pathinv{\eqsubsubstT\gamma} \pathcomp \substTcomp
    \pathcomp \pathinv{\eqsubsubstT\cwfprojbeta} \pathcomp \substTcomp
    \pathcomp \eqtypesubstT{(\pathinv\substTcomp \pathcomp \pathinv{\eqsubsubstT\cwfprojbeta} \pathcomp \substTcomp)}{\thegapmap} \pathcomp \pathinv\substTcomp.
  \]
  So to construct \(\epsilon_1\), we may as well show that
  \[ \qA \substt \varrho \transpover {e \pathcomp \pathinv\substTcomp \pathcomp \eqsubsubstT{\epsilon_0} \pathcomp \substTcomp} = \qA \substt \varrho. \]
  We do this by showing that the equality
  \[ \widetilde{e} \defeq e \pathcomp \pathinv\substTcomp \pathcomp \eqsubsubstT{\epsilon_0} \pathcomp \substTcomp \]
  is in fact equal to the trivial identity.
  Some path algebra shows that \(\widetilde{e}\) is equal to the outer boundary of \Cref{diag:type-subst-wk-pb}.
  \begin{diag}
  \caption{\(e \pathcomp \pathinv\substTcomp \pathcomp \eqsubsubstT{\epsilon_0} \pathcomp \substTcomp = \refl\).}\label{diag:type-subst-wk-pb}
  \newcommand{\subs}{\substT}
  \renewcommand{\comp}{\subcomp}
  \[\mathclap{\begin{tikzcd}[column sep=-1.75em,
    nodes={
      ampersand replacement=\&,
      font=\footnotesize,
      row 1/.style={row sep=1ex},
      row 4/.style={row sep=2em},
      row 5/.style={row sep=2em},
      row 8/.style={row sep=1ex}}
    ]
    \& [-1.75em] \&\& [0.75em] \&\& [3em] \& {A \subs \pA \subs \varrho} \& [3em] \&\& [0.75em] \&\& [-1.75em] \\
    \&\&\&\& {A \subs {\pA \comp \varrho}} \&\&\&\& {A \subs {\pA \comp \varrho}} \\
    \&\& {A \subs {\sigma \comp \tau}} \&\&\&\&\&\&\&\& {A \subs {\sigma \comp \tau}} \\
    {A \subs \sigma \subs \tau} \&\&\&\&\&\&\&\&\&\&\&\& {A \subs {\sigma \comp \pAsigma \comp \thegapmap}} \\
    |[xshift=-1ex]| {A \subs \sigma \subs \pAsigma \comp \thegapmap} \&\&\&\&\&\&\&\&\&\&\&\& |[xshift=1ex]| {A \subs {(\sigma \comp \pAsigma) \comp \thegapmap}} \\
    {A \subs \sigma \subs \pAsigma \subs \thegapmap} \&\&\&\&\&\&\&\&\&\&\&\& {A \subs {(\pA \comp \sigmaA) \comp \thegapmap}} \\
    \&\& {A \subs {\sigma \comp \pAsigma} \subs \thegapmap} \&\&\&\&\&\&\&\& {A \subs {\pA \comp \sigmaA \comp \thegapmap}} \\
    \&\&\&\& {A \subs {\pA \comp \sigmaA} \subs \thegapmap} \&\&\&\& {A \subs \pA \subs {\sigmaA \comp \thegapmap}} \\
    \&\&\&\&\&\& {A \subs \pA \subs \sigmaA \subs \thegapmap}
    \arrow["{\pathinv\substTcomp}"'{xshift=0.75ex,yshift=0.5ex}, from=1-7, to=2-5, path]
    \arrow["{\pathinv{\eqsubsubstT\gamma}}"'{xshift=-0.5ex,yshift=-0.25ex}, from=2-5, to=3-3, path]
    \arrow["\substTcomp"'{xshift=-0.25ex,yshift=0.5ex}, from=2-9, to=1-7, path]
    \arrow[equal, scaling nfold=3, from=3-3, to=3-11, ""{name=one,above}]
      \arrow[commutesstyle, "(1)" description, from=1-7, to=one]
    \arrow["\substTcomp"'{xshift=-.5ex,yshift=-0.5ex}, from=3-3, to=4-1, path]
    \arrow["{\eqsubsubstT\gamma}"'{xshift=0.75ex,yshift=-.5ex}, from=3-11, to=2-9, path]
    \arrow["{\pathinv{\eqsubsubstT\cwfprojbeta}}"'{xshift=-.5ex,yshift=-1ex}, from=4-1, to=5-1, path]
    \arrow["{\eqsubsubstT{\sigma \whisker \cwfprojbeta}}"'{xshift=1ex,yshift=-1.5ex}, from=4-13, to=3-11, path]
    \arrow["\substTcomp" description, from=4-13, to=5-1, path, ""{name=two,below}]
      \arrow[commutesstyle, "(2)"{description}, from=one, to=two]
    \arrow["\substTcomp"'{xshift=-1ex,yshift=1.75ex}, from=5-1, to=6-1, path]
    \arrow["{\eqsubsubstT\associator}"'{xshift=1ex,yshift=-1.75ex}, from=5-13, to=4-13, path]
    \arrow["\substTcomp" description, from=5-13, to=7-3, path, ""{name=three,above}]
      \arrow[commutesstyle, "(3)"{description,xshift=-1pt}, from=two, to=three]
    \arrow["{\eqtypesubstT{\pathinv\substTcomp}{\thegapmap}}"'{xshift=-1.5ex,yshift=1.5ex}, from=6-1, to=7-3, path]
    \arrow["{\eqsubsubstT{\cwfprojbeta \whisker \thegapmap}}"'{xshift=0.75ex,yshift=1.25ex}, from=6-13, to=5-13, path]
    \arrow["\substTcomp" description, from=6-13, to=8-5, path, ""{name=four,below}]
      \arrow[commutesstyle, "(4)"{description,xshift=-1ex}, from=three, to=four]
      \arrow[commutesstyle, "(5)"{description,xshift=-1em}, from=four, to=8-9]
    \arrow["{\eqtypesubstT{\pathinv{\eqsubsubstT\cwfprojbeta}}{\thegapmap}}"'{xshift=-2ex,yshift=1ex}, from=7-3, to=8-5, path]
    \arrow["{\pathinv{\eqsubsubstT\associator}}"'{xshift=1ex,yshift=1.25ex}, from=7-11, to=6-13, path]
    \arrow["{\eqtypesubstT{\substTcomp}{\thegapmap}}"'{xshift=-0.75ex,yshift=-0.25ex}, from=8-5, to=9-7, path]
    \arrow["{\pathinv\substTcomp}"'{xshift=1ex,yshift=0.75ex}, from=8-9, to=7-11, path]
    \arrow["{\pathinv\substTcomp}"'{xshift=0.75ex,yshift=-.25ex}, from=9-7, to=8-9, path]
  \end{tikzcd}}\]
\end{diag}
  This boundary commutes, since we can fill the interior of the diagram with the following commuting regions:
  \begin{itemize}
    \item (1), which commutes straightforwardly,
    \item (2) and (4), which commute by \Cref{lem:substTcomp-nat-iso}, and
    \item (3) and (5), which are filled by the type pentagonators.
  \end{itemize}
  This shows that \(\widetilde{e} = \refl\), which completes the proof \(\epsilon_1\) that
  \[ \qA \substt{\sigmaA \subcomp \thegapmap} \transpover[\Tm]{\pathinv\substTcomp \pathcomp \eqsubsubstT{\epsilon_0} \pathcomp \substTcomp} = \qA \substt \varrho \transpover{\widetilde{e}} = \qA \substt \varrho, \]
  and thus also the proof \(\epsilon \defeq \subeq_{(\sigmaA \subcomp \thegapmap),\, \varrho}(\epsilon_0, \epsilon_1)\) that
  \[ \sigmaA \subcomp \thegapmap = \varrho. \]

  Finally, what remains is to show that
  \[ \pathinv\associator \pathcomp (\pathinv\cwfprojbeta \whisker \thegapmap) \pathcomp \associator = (\sigma \whisker \delta) \pathcomp \gamma \pathcomp \pathinv{(\cwfproj_A \whisker \epsilon)}. \]
  But by \Cref{lem:coh-for-ctxt-ext-implies-coh-for-subetaequality} we have that \(\pathinv{(\pA \whisker \epsilon)} = \pathinv\epsilon_0\) on the right hand side, and the equality then follows by calculation.
\end{proof}

\begin{theorem}[2-coherent substitution in types is pullback]\label{thm:substT-pb}
  The weak pullbacks \(\substTpb{\sigma}{A}\) of \Cref{lem:substT-wk-pb} are pullbacks.
\end{theorem}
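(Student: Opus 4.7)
The plan is to upgrade \Cref{lem:substT-wk-pb}, where we exhibited a section of each map $(\substTpb{\sigma}{A} \sqcomp_\Beta \blank)$, to show these maps are in fact equivalences (so the fibers are contractible, not merely pointed). Since a function with both a section and a retraction is an equivalence, it suffices to show that the section $\gapmap{\sigma}{A}{}_{,\blank}$ is also a retraction, i.e.\ that $\gapmap{\sigma}{A}{}_{, \substTpb{\sigma}{A} \sqcomp m} = m$ for every $m \oftype \Sub\Beta{\Gamma \ctxext A \substT \sigma}$.

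Unfolding, the left-hand side evaluates by the formula of \Cref{lem:substT-wk-pb} to the extended substitution
\[(\cwfproj_{A \substT \sigma} \subcomp m \subext \cwfvar_A \substt{\subliftT{\sigma}{A} \subcomp m} \transpover[\Tm]{e})\]
for a certain composite $e$ of substitution coherences (gotten by plugging $\gamma \defeq \pathinv\associator \pathcomp (\pathinv\cwfprojbeta \whisker m) \pathcomp \associator$ into the definition of $\gapmap{\sigma}{A}{}_{,\blank}$), while $\subetaequality{m}$ (\Cref{notation:subetaequality}) identifies the right-hand side with
\[(\cwfproj_{A \substT \sigma} \subcomp m \subext \cwfvar_{A \substT \sigma} \substt m \transpover[\Tm]{\pathinv\substTcomp}).\]
The first components now coincide on the nose, so by \Cref{cor:equality-substitutions-into-extended-contexts} it remains to identify the second components. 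Using $\substtcomp$ and $\cwfvarbeta$ applied to the definition of $\subliftT{\sigma}{A}$, this reduces to comparing two transports of $\cwfvar_{A \substT \sigma} \substt m$ in $\Tm[\Beta]$.

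The main obstacle is verifying that the two composite transport paths in $\Ty*\Beta$ are equal. In direct analogy with \Cref{diag:type-subst-wk-pb}, I expect this to unfold into a planar pasting whose interior regions are filled by the type pentagonators (\Cref{def:type-pentagonators}), naturality of $\substTcomp$ (\Cref{lem:substTcomp-nat-iso}), the whiskering identities of \Cref{prop:whiskering-properties}, and the coherators for context extension (\Cref{node:coherence-context-extension}) via \Cref{lem:coh-for-ctxt-ext-implies-coh-for-subetaequality}---exactly the structural 2-coherence content of \Cref{def:2-coherent-cwfs} that was needed there.
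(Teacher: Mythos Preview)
Your proposal is correct and follows essentially the same route as the paper: show that the section \(\gapmap{\sigma}{A}{}_{,\blank}\) constructed in \Cref{lem:substT-wk-pb} is also a retraction of \((\substTpb{\sigma}{A} \sqcomp_\Beta \blank)\), by computing \(\gapmap{\sigma}{A}{}_{,\substTpb{\sigma}{A}\sqcomp m}\), comparing it against the \(\laweta\)-expansion of \(m\) via \Cref{cor:equality-substitutions-into-extended-contexts}, and reducing to an equality of transport paths in \(\Ty*\Beta\). One small sharpening: the paper observes that the resulting path-equality is exactly the commutativity of regions (3)--(5) of \Cref{diag:type-subst-wk-pb} with \(\thegapmap\) replaced by \(m\), and hence needs only naturality of \(\substTcomp\) (\Cref{lem:substTcomp-nat-iso}) and the type pentagonators---the context-extension coherators you anticipate are not required at this step.
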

\begin{proof}
  \newcommand{\gp}{\gapmap\sigma{A}}
  By \Cref{lem:substT-wk-pb} we have that, for any \(\Beta \oftype \Con\), the map
  \begin{gather*}
    \gp \oftype \CommSq_{(\sigma, \cwfproj_A)}(\Beta) \to \Sub\Beta{\Gamma \ctxext {A \substT\sigma}} \\
    \gp((\tau, \varrho, \gamma)) \defeq (\tau \subext \cwfvar_A\substt{\varrho} \transpover[\Tm]{\pathinv\substTcomp \pathcomp \pathinv{\eqsubsubstT\gamma} \pathcomp \substTcomp})
  \end{gather*}
  is a section of the precomposition map \((\substTpb\sigma{A} \sqcomp_\Beta \blank)\).
  We show that it's a retraction of the same, and therefore that \((\substTpb\sigma{A} \sqcomp_{\blank} \blank)\) is a family of equivalences.

  That is, for \(m \oftype \Sub\Beta{\Gamma\ctxext{A\substT\sigma}}\), we want the equality of substitutions
  \[ \gp(\substTpb\sigma{A} \sqcomp m) = m. \]
  By \Cref{cor:equality-substitutions-into-extended-contexts} and a calculation similar to the one in the proof of \Cref{lem:substT-wk-pb} we have that
  \[ \gp(\substTpb\sigma{A} \sqcomp m)
    = (\cwfproj \subcomp m \subext \cwfvar_{A\substT\sigma} \substt m \transpover e), \]
  where
  \[ e \defeq
    \eqtypesubstT{(\conjsubstTcomp{\pathinv{\eqsubsubstT\cwfprojbeta}})}{m}
    \pathcomp \pathinv\substTcomp
    \pathcomp \conjsubstTcomp{\eqsubsubstT{\invassociator \pathcomp (\cwfprojbeta \whisker m) \pathcomp \associator}}.
  \]
  On the other hand,
  \( m = (\cwfproj \subcomp m \subext \cwfvar_{A\substT\sigma}\substt{m} \transpover{\pathinv\substTcomp}) \)
  by \Cref{lem:substitutions-into-extended-context}, and thus by \Cref{cor:equality-substitutions-into-extended-contexts} again it's enough to show that \(e = \pathinv\substTcomp\).

  By path algebra this amounts to showing the commutativity of a diagram of equalities that looks like the one formed by regions (3), (4) and (5) of \Cref{diag:type-subst-wk-pb}, but where we replace \(\thegapmap\) with \(m\).
  Commutativity of this diagram then follows as in the proof of \Cref{lem:substT-wk-pb}, i.e. by \Cref{lem:substTcomp-nat-iso} and type pentagonators.
\end{proof}

\subsection{Split 2-coherent wild cwfs}

\begin{definition}[Cleavings of wild cwfs]
  A \emph{cleaving} of a wild cwf \(\thecwf\) is an assignment
  \[ \thecleaving \oftype \Pitype*[(\Gamma, \Delta \oftype \Con)\,(\sigma \oftype \Sub\Gamma\Delta)\,(A \oftype \Ty*\Delta)][\Pullback_{(\sigma, \cwfproj_A)}(\Gamma \ctxext {A \substT \sigma})] \]
  of pullbacks
  \[ \cleaving\Gamma\Delta\sigma{A} \eqndefeq
    \begin{tikzcd}
      \Gamma \ctxext A \substT \sigma
        \dar
        \rar["\cleavinglift\sigma{A}"]
      & \Delta \ctxext A \dar[twoheadrightarrow]
      \\
      \Gamma
        \ar[ur, path, "\cleavingcomm\sigma{A}"']
        \rar["\sigma"']
      & \Delta
    \end{tikzcd}
  \]
  to each cospan in \(\thecwf\) of the form
  \(\begin{tikzcd}[cramped,sep=1.5em]\Gamma \rar["\sigma"] & \Delta & \lar["\cwfproj_A"{xshift=2pt},swap] \Delta \ctxext A\end{tikzcd}\).
  We call the component \(\cleavinglift\sigma{A}\) of the pullback \(\cleaving\Gamma\Delta\sigma{A}\) the \emph{chosen lift of \(\sigma\) at \(A\)}.
\end{definition}

The upshot of \Cref{thm:substT-pb} is then that every 2-coherent wild cwf has a cleaving
\[ \cleaving\Gamma\Delta\sigma{A} \defeq \substTpb\sigma{A}, \]
which we call the \emph{type substitution cleaving}.
In particular, \(\cleavinglift{\sigma}A \defeq \subliftT{\sigma}A\) is the chosen lift of a substitution \(\sigma \oftype \Sub\Gamma\Delta\) at \(A \oftype \Ty*\Delta\).
Observe, additionally, that the vertical legs of the pullbacks chosen by the type substitution cleaving are all display maps.
This allows us to make the following definition.

\begin{definition}[Split 2-coherent cwfs]\label{def:split-2-coh-wild-cwf} 
  A 2-coherent wild cwf \(\C\) is called \emph{(coherently) split} if the following equality types are contractible:
  \begin{enumerate}
    \item
      For every \(\Gamma \oftype \Con\) and \(A \oftype \Ty*\Gamma\), the type of equalities
      \[ (\Gamma \ctxext {A \substT \id},\ \substTpb\id{A}) = (\Gamma \ctxext A,\ \idpb{\cwfproj_A}) \]
      of pullbacks on \(\begin{tikzcd}[cramped,sep=1.75em]\Gamma \rar["\id"{xshift=-2pt}] & \Gamma & \lar["\cwfproj_A"{xshift=2pt},swap] \Gamma \ctxext A\end{tikzcd}\).
    \item
      For all substitutions
      \(\begin{tikzcd}[cramped,sep=1.75em]
        \Beta \rar["\tau"] & \Gamma \rar["\sigma"] & \Delta
      \end{tikzcd}\)
      and \(\thecwf\)-types \(A \oftype \Ty*\Delta\), the type of equalities
      \[ \big(\Beta \ctxext{A \substT{\sigma \subcomp \tau}},\ \substTpb{\sigma \subcomp \tau}{A}\big)
        = \big(\Beta \ctxext {A \substT \sigma \substT \tau},\ \squarehpaste{\substTpb\tau{A\substT\sigma}}{\substTpb{\sigma}{A}}\big) \]
      of the pullbacks
      \[ \substTpb{\sigma \subcomp \tau}{A}
        \equiv
        \begin{tikzcd}[column sep=2em]
          \Beta \ctxext A \substT {\sigma \subcomp \tau}
            \dar[twoheadrightarrow]
            \rar["\subliftT{\sigma \subcomp \tau}{A}"{xshift=-1pt,yshift=3pt}]
          & \Delta \ctxext A \dar[twoheadrightarrow]
          \\
          \Beta
            \ar[ur, path, "\substTpbtwocell"'{yshift=-1pt}]
            \rar["\sigma \subcomp \tau"']
          & \Delta
        \end{tikzcd}
      \]
      and
      \[ \squarehpaste{\substTpb\tau{A\substT\sigma}}{\substTpb{\sigma}{A}}
        \equiv
        \begin{gathered}\begin{tikzcd}
          \Beta \ctxext {A \substT \sigma \substT \tau}
            \dar[twoheadrightarrow]
            \rar["\subliftT\tau{A\substT\sigma}"{yshift=2pt}]
          & [-1em] \Gamma \ctxext A \substT \sigma
            \dar[twoheadrightarrow]
            \rar["\subliftT\sigma{A}"{yshift=2pt}]
          & [0.3em] \Delta \ctxext A \dar[twoheadrightarrow]
          \\
          \Beta
            \ar[ur, path, "\substTpbtwocell"', end anchor={[xshift=1ex,yshift=0.75ex]}]
            \rar["\tau"']
          & \Gamma
            \ar[ur, path, "\substTpbtwocell"']
            \rar["\sigma"']
          & \Delta
        \end{tikzcd}\end{gathered}
      \]
      on
      \(\begin{tikzcd}[cramped,sep=1.75em]\Beta \rar["\sigma \subcomp \tau"{xshift=-2pt}] & \Delta & \lar["\cwfproj_A"{xshift=2pt},swap] \Delta \ctxext A\end{tikzcd}\).
  \end{enumerate}
\end{definition}

A split 2-coherent wild cwf may be considered a coherent higher version of a splitting of a full comprehension category.
As is to be expected, set-level internal cwfs are split in the sense of \Cref{def:split-2-coh-wild-cwf}.
We now show that \emph{univalent} 2-coherent wild cwfs are also split.


\begin{proposition}\label{prop:idd-ap-ctxext}
  Let \(e \oftype A = \primed{A}\) be an equality of \(\thecwf\)-types \(A, \primed{A} \oftype \Ty*\Gamma\) in a wild cwf \(\thecwf\).
  Then
  \[ \idd(\ap[(\Gamma \ctxext \blank)][e]) =
    (\cwfproj_A \subext \cwfvar_A \transpover[\Tm]{\eqtypesubstT{e}{\cwfproj_A}}). \]
\end{proposition}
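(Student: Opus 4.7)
The plan is a direct proof by path induction on $e$, reducing the statement in the base case to the $\subexteta$ axiom of the context extension structure.

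Concretely, I would proceed as follows. Inducting on $e \oftype A = \primed{A}$, it suffices to treat the case $e \equiv \refl_A$ (in which case $\primed{A} \equiv A$). On the left-hand side, $\ap[(\Gamma\ctxext\blank)][\refl_A] \equiv \refl_{\Gamma\ctxext A}$ by the computation rule for $\ap$ on reflexivity, and then $\idd(\refl_{\Gamma\ctxext A}) \equiv \id_{\Gamma\ctxext A} \transpover[\Hom[\C]{\Gamma\ctxext A}\blank]{\refl} \equiv \id_{\Gamma\ctxext A}$ by unfolding the definition of $\idd$ (\Cref{def:idd}). On the right-hand side, $\eqtypesubstT{\refl_A}{\cwfproj_A} \equiv \ap[(\blank\substT\cwfproj_A)][\refl_A] \equiv \refl$, and so the transport of $\cwfvar_A$ along this equality is definitionally $\cwfvar_A$ itself. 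Thus the right-hand side reduces to $(\cwfproj_A \subext \cwfvar_A)$.

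It then remains to identify $(\cwfproj_A \subext \cwfvar_A) = \id_{\Gamma \ctxext A}$, which is precisely the axiom $\subexteta$ from \Cref{def:context-extension-structure}. This closes the base case and completes the path induction.

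There is essentially no obstacle here: the statement is constructed so that all transports degenerate to reflexivity upon path induction, and the only nontrivial content is the single appeal to $\subexteta$. Note in particular that we do not need any of the coherence hypotheses of \Cref{def:2-coherent-cwfs}, nor any properties of the cleaving constructed via \Cref{thm:substT-pb}; the proposition holds already at the level of arbitrary wild cwfs. It merely records that the canonical dependent identity $\idd$ on context extensions induced by an equality of types coincides with the extended substitution built from the display map and the generic term.
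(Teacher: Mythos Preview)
Your proof is correct and takes essentially the same approach as the paper: path induction on \(e\), reducing both sides to \(\id_{\Gamma\ctxext A}\) and \((\cwfproj_A \subext \cwfvar_A)\) respectively, and concluding via \(\subexteta\). Your version is simply more explicit about the definitional computations.
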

\begin{proof}
  By induction on \(e\) it's enough to show that \(\idd(\refl_{\Gamma \ctxext A}) = (\cwfproj_A \subext \cwfvar_A)\), which holds by \(\subexteta\) (\Cref{def:context-extension-structure}).
\end{proof}

\begin{lemma}\label{lem:substT-lift-id}
  In any 2-coherent wild cwf \(\thecwf\),
  \[ 
    \subliftT\id{A} = \idd(\ap[(\Gamma \ctxext \blank)][\substTid]) \]
  for any \(\Gamma \oftype \Con\) and \(A \oftype \Ty*\Gamma\).
\end{lemma}
\begin{proof}
  By \Cref{prop:idd-ap-ctxext} it's enough to show that \(\subliftT\id{A} = (\cwfproj_{A\substT\id} \subext \cwfvar_{A\substT\id} \transpover{\eqtypesubstT\substTid{\cwfproj_{A\substT\id}}})\), which holds by \Cref{cor:equality-substitutions-into-extended-contexts} and the left type triangulator.
\end{proof}

In fact, the equality of \Cref{lem:substT-lift-id} improves to an equality of pullbacks.
Recall from \Cref{prop:id-pullback} that the identity commuting square \(\idpb{f}\) is a pullback for any morphism \(f\) in a 2-coherent wild category \(\C\).

\begin{corollary}\label{cor:substT-cleaving-id}
  For any context \(\Gamma \oftype \Ob\C\) and \(\C\)-type \(A \oftype \Ty*\Gamma\) in a 2-coherent wild cwf \(\C\), the pullbacks
  \[ \substTpb{\id}{A} \equiv
    \begin{tikzcd}
      \Gamma \ctxext {A \substT \id}
        \rar["\subliftT\id{A}"]
        \dar[twoheadrightarrow]
      & \Gamma \ctxext A
        \dar[twoheadrightarrow]
      \\
      \Gamma
        \rar["\id"']
        \ar[ur,path,"\pathinv\cwfprojbeta"']
      & \Gamma
    \end{tikzcd}
    \eqntext{and}
    \idpb{\cwfproj_A} \equiv
    \begin{tikzcd}
      \Gamma \ctxext A
        \rar["\id"]
        \dar[twoheadrightarrow]
      & \Gamma \ctxext A
        \dar[twoheadrightarrow]
      \\
      \Gamma
        \rar["\id"']
        \ar[ur,path,"\lunitor\pathcomp\pathinv\runitor"']
      & \Gamma
    \end{tikzcd}
  \]
  are equal elements of \(\Pullback((\id, \cwfproj_A))\).
\end{corollary}
\begin{proof}
  We have that
  \[ \ap[(\Gamma \ctxext \blank)][\substTid] \oftype \Gamma \ctxext {A \substT \id} = \Gamma \ctxext A, \]
  and by \Cref{lem:commsq-fixed-cospan-equality} it's enough to show that
  \( \substTpb{\id}{A} = \idpb{\cwfproj_A} \sqcomp \idd(\ap[(\Gamma \ctxext \blank)][\substTid]) \).
  But by \Cref{lem:substT-lift-id} we may as well show that
  \[ \substTpb{\id}{A} = \idpb{\cwfproj_A} \sqcomp \subliftT\id{A}, \]
  which holds by \Cref{rem:comm-sq-id-leg-idpb-sqcomp}.
\end{proof}

\begin{lemma}\label{lem:substT-cleaving-equality}
  Suppose \(\C\) is a 2-coherent wild cwf.
  For all substitutions
  \(\begin{tikzcd}[cramped,sep=1.75em]
    \Beta \rar["\tau"] & \Gamma \rar["\sigma"] & \Delta
  \end{tikzcd}\)
  and \(\C\)-types \(A \oftype \Ty*\Delta\),
  \[
    (\Beta \ctxext A \substT{\sigma \subcomp \tau},\ \substTpb{\sigma \subcomp \tau}{A})
  = (\Beta \ctxext A \substT \sigma \substT \tau,\ \squarehpaste{\substTpb\tau{A \substT \sigma}}{\substTpb\sigma{A}}).
  \]
\end{lemma}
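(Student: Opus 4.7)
The plan is to apply \Cref{lem:commsq-fixed-cospan-equality}, which reduces the claim in $\CommSq((\sigma \subcomp \tau, \cwfproj_A))$ to providing (i) an equality $e$ between the two sources and (ii) an equality
\[ \substTpb{\sigma \subcomp \tau}{A} = \squarehpaste{\substTpb\tau{A \substT \sigma}}{\substTpb\sigma{A}} \sqcomp \idd(e) \]
in $\CommSq_{(\sigma \subcomp \tau, \cwfproj_A)}(\Beta \ctxext A \substT {\sigma \subcomp \tau})$. The natural choice for the source equality is $e \defeq \ap[(\Beta \ctxext \blank)][\substTcomp]$; then by \Cref{prop:idd-ap-ctxext} the dependent identity $\idd(e)$ unfolds to the extended substitution $(\cwfproj \subext \cwfvar \transpover[\Tm]{\eqtypesubstT{\substTcomp}{\cwfproj}})$.

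Next, I would invoke \Cref{prop:commsq-fixed-cospan-source-equality} to reduce the remaining equality of commuting squares to three sub-equalities: one for each pair of legs, and one for the 2-cell witnesses. The first-leg equality
\[ \cwfproj_{A \substT \sigma \substT \tau} \subcomp \idd(e) = \cwfproj_{A \substT (\sigma \subcomp \tau)} \]
is immediate from $\cwfprojbeta$. The second-leg equality
\[ (\subliftT\sigma{A} \subcomp \subliftT\tau{A \substT \sigma}) \subcomp \idd(e) = \subliftT{\sigma \subcomp \tau}{A} \]
is proved by expanding the definition of $\subliftT\blank\blank$, then applying $\subextcomp$, $\cwfprojbeta$, and $\cwfvarbeta$ to push both sides into a common canonical form of an extended substitution, and finally comparing them via \Cref{cor:equality-substitutions-into-extended-contexts}. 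This calculation is of the same flavor as those in the proof of \Cref{lem:substT-wk-pb}.

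The main obstacle is the equality of the 2-cell witnesses. The witness of $\substTpb{\sigma \subcomp \tau}{A}$ is simply $\pathinv\cwfprojbeta$, while that of $\squarehpaste{\substTpb\tau{A \substT \sigma}}{\substTpb\sigma{A}} \sqcomp \idd(e)$ unfolds into an expression involving whiskerings of $\pathinv\cwfprojbeta$ and several associators, introduced both by the horizontal pasting (\Cref{def:pasting}) and by precomposition with $\idd(e)$ (\Cref{def:commsq-precomp}). As in \Cref{diag:type-subst-wk-pb}, the plan is to set up a large pasting diagram whose interior decomposes into regions filled by properties of whiskering (\Cref{prop:whiskering-properties}), naturality of $\substTcomp$ (\Cref{lem:substTcomp-nat-iso}), the type triangulators (\Cref{def:type-triangulators}) and type pentagonators (\Cref{def:type-pentagonators}), and the coherators for context extension (\Cref{node:coherence-context-extension}).
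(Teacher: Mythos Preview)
Your high-level reduction matches the paper exactly: apply \Cref{lem:commsq-fixed-cospan-equality}, take $e \defeq \ap[(\Beta \ctxext \blank)][\substTcomp]$, unfold $\idd(e)$ via \Cref{prop:idd-ap-ctxext}, and then use \Cref{prop:commsq-fixed-cospan-source-equality} to break the remaining goal into two leg equalities $\delta$, $\epsilon$ and a 2-cell coherence $\eta$. The choice $\delta \defeq \pathinv\cwfprojbeta$ is also the paper's.

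Where you diverge from the paper is in treating $\epsilon$ and $\eta$ as independent tasks. They are not: the coherence $\eta$ explicitly contains the term $\cwfproj_A \whisker \epsilon$, so whatever route you take to build $\epsilon$, you must afterwards be able to compute its left whiskering by $\cwfproj_A$. If you build $\epsilon$ by ad hoc chains of $\subextcomp$, $\cwfprojbeta$, $\cwfvarbeta$ and then invoke \Cref{cor:equality-substitutions-into-extended-contexts}, you have no clean description of $\cwfproj_A \whisker \epsilon$, and your pasting for $\eta$ would have to chase that whiskering through every step of the construction of $\epsilon$---this is where your sketch becomes under-specified.

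The paper's key manoeuvre is to construct $\epsilon$ as $\subeq(\epsilon_0,\epsilon_1)$ (\Cref{def:subeq}) and then invoke \Cref{prop:subeq-beta}, which gives $\cwfproj_A \whisker \epsilon = \epsilon_0$ on the nose. This lets one \emph{choose} $\epsilon_0$ to be exactly the expression that makes $\eta$ hold by definition, so the 2-cell coherence is dispatched for free. All the real work is then concentrated in $\epsilon_1$, which after reducing both sides via $\cwfvarbeta$ and $\substtcomp$ becomes a type-level path equation $e_1 = e_2$; this is where the large pasting diagram lives. That pasting is filled by type pentagonators, naturality of $\substTcomp$ (\Cref{lem:substTcomp-nat-iso}), associativity of whiskering, and the \emph{pentagon coherator of the category of contexts}---not the coherators for context extension (those enter only indirectly, via \Cref{lem:coh-for-ctxt-ext-implies-coh-for-subetaequality} feeding \Cref{prop:subeq-beta}), and not the type triangulators.
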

\begin{proof}
  \newcommand{\invsubstTcomp}{\pathinv\substTcomp}
  \newcommand{\PstA}{\substTpb{\sigma \subcomp \tau}{A}}
  \newcommand{\PtAs}{\substTpb{\tau}{A \substT \sigma}}
  \newcommand{\PsA}{\substTpb{\sigma}{A}}
  \newcommand{\pqsub}{(\cwfproj \subext \cwfvar \transpover{\eqtypesubstT{\substTcomp}{\cwfproj}})}
  \newcommand{\slift}{\subliftT\sigma{A}}
  \newcommand{\tlift}{\subliftT\tau{A \substT \sigma}}
  \newcommand{\slifttlift}{(\slift \subcomp \tlift)}
  \newcommand{\stlift}{\subliftT{(\sigma\subcomp\tau)}{A}}
  By \Cref{lem:commsq-fixed-cospan-equality} it's enough to give
  \[ e \oftype \Beta \ctxext A \substT{\sigma \subcomp \tau} = \Beta \ctxext A \substT \sigma \substT \tau \]
  such that
  \[ \PstA = (\squarehpaste\PtAs\PsA) \sqcomp \idd(e). \]
  Take
  \[ e \defeq \ap[(\Beta \ctxext \blank)][\substTcomp], \]
  then by \Cref{prop:idd-ap-ctxext} we may as well show that
  \[ \PstA = (\squarehpaste\PtAs\PsA) \sqcomp \pqsub, \]
  or, equivalently, give three equalities
  \begin{gather*}
    \delta \oftype
      \cwfproj_{A \substT {\sigma \subcomp \tau}} = \cwfproj_{A \substT \sigma \substT \tau} \subcomp \pqsub, \\
    \epsilon \oftype
      \subliftT{(\sigma \subcomp \tau)}{A} = \slifttlift \subcomp \pqsub
  \end{gather*}
  and
  \[ \eta \oftype
    \cleavingcomm{\sigma \subcomp \tau}{A}
    = \big((\sigma \subcomp \tau) \whisker \delta\big)
      \pathcomp \invassociator
      \pathcomp \big((\squarehpaste{\cleavingcomm{\tau}{A \substT \sigma}}{\cleavingcomm{\sigma}{A}}) \whisker \pqsub\big)
      \pathcomp \associator
      \pathcomp (\cwfproj_A \whisker \pathinv\epsilon),
  \]
  where \(\cleavingcomm{\sigma \subcomp \tau}{A}\), \(\cleavingcomm{\tau}{A \substT \sigma}\) and \(\cleavingcomm{\sigma}{A}\) are the 2-cells \(\substTpbtwocell\) of the corresponding pullbacks given by the type substitution cleaving.

  Take \(\delta \defeq \pathinv\cwfprojbeta\).
  We will define \(\epsilon \defeq \subeq(\epsilon_0, \epsilon_1)\), where \(\subeq\) is the equivalence defined at \Cref{def:subeq}, and where we seek equalities
  \[ \epsilon_0 \oftype \cwfproj \subcomp \stlift = \cwfproj \subcomp \slifttlift \subcomp \pqsub \]
  and
  \[
    \epsilon_1 \oftype \cwfvar \substt{\stlift} \transpover{\pathinv\substTcomp \pathcomp \eqsubsubstT{\epsilon_0} \pathcomp \substTcomp}
    = \cwfvar \substt {\slifttlift \subcomp \pqsub}.
  \]

  Now, from \Cref{prop:subeq-beta} we have that
  \[ \cwfproj_A \whisker \pathinv\epsilon = \pathinv{(\cwfproj_A \whisker \epsilon)} = \pathinv\epsilon_0, \]
  and by rearranging the type of \(\eta\) we may take
  \[ \epsilon_0 \defeq
    \pathinv{\cleavingcomm{\sigma \subcomp \tau}{A}}
    \pathcomp \big((\sigma \subcomp \tau) \whisker \delta\big)
    \pathcomp \invassociator
    \pathcomp \big((\squarehpaste{\cleavingcomm{\tau}{A \substT \sigma}}{\cleavingcomm{\sigma}{A}}) \whisker \pqsub\big)
    \pathcomp \associator.
  \]

  What remains, then, is to construct \(\epsilon_1\).
  By applying \(\cwfvarbeta\) and \(\substtcomp\) to reduce the generic terms on the left and right, we calculate that its type is equivalent to
  \[ \cwfvar_{A \substt {\sigma \subcomp \tau}} \transpover[\Tm*\Beta]{e_1}
    = \cwfvar_{A \substt {\sigma \subcomp \tau}} \transpover[\Tm*\Beta]{e_2}, \]
  where the left and right hand sides are transported, respectively, over equalities
  \[ e_1 \defeq \invsubstTcomp \pathcomp \eqsubsubstT{\pathinv\cwfprojbeta} \pathcomp \eqsubsubstT{\epsilon_0} \pathcomp \substTcomp \]
  and
  \[ e_2 \defeq
    \eqtypesubstT\substTcomp\cwfproj \pathcomp \eqsubsubstT{\pathinv\cwfprojbeta} \pathcomp \substTcomp
    \pathcomp \eqtypesubstT{\widetilde{e}\,}{\cwfproj \subext \cwfvar \transpover{\eqtypesubstT{\substTcomp}{\cwfproj}}} \pathcomp \invsubstTcomp, \]
  and where
  \[ \widetilde{e} \defeq
    \conjsubstTcomp{\eqsubsubstT{\pathinv\cwfprojbeta}}
    \pathcomp \eqtypesubstT{\big(\conjsubstTcomp{\eqsubsubstT{\pathinv\cwfprojbeta}}\big)}{\tlift}
    \pathcomp \invsubstTcomp.
  \]
  It's now enough to show that \(e_1 = e_2\).
  This amounts to giving a filling of \Cref{diag:substT-cleaving-equality},
  which we divide into three regions filled with coherence cells as shown in \Cref{diags:substT-cleaving-equality-1-2} and \Cref{diag:substT-cleaving-equality-3}.
  \begin{diag}
  \caption{
	The pasting proof of \(e_1 = e_2\) splits into three regions, which are filled with the cells shown in \Cref{diags:substT-cleaving-equality-1-2} and \Cref{diag:substT-cleaving-equality-3}.
	We abbreviate the substitution \((\cwfproj \subext \cwfvar \transpover {\eqtypesubstT\substTcomp\cwfproj})\) by \(\varrho\)
	and the substitution \(\eqtypesubstT{\eqtypesubstT{(\invsubstTcomp \pathcomp \eqsubsubstT{\pathinv\cwfprojbeta} \pathcomp \substTcomp)}\tlift}\varrho\) by \(\xi\).
  }\label{diag:substT-cleaving-equality}
  \renewcommand{\comp}{\subcomp}
  \[\mathclap{\begin{tikzcd}[
    ampersand replacement=\&,
    paths,
    column sep=-4em,
    nodes={font=\footnotesize} ]
    \& [-1em] \& [2em] {A\substT{(\sigma \comp \tau) \comp \cwfproj}} \& [2em] \& [-1em] \\[-2em]
    \& |[xshift=-3em]| {A\substT{\sigma\comp\tau}\substT{\cwfproj}} \&\& |[xshift=3em]| {A \substT{\cwfproj \comp \stlift}} \\[-0.5ex]
	|[xshift=2.5em]| {A\substT\sigma\substT\tau\substT{\cwfproj}} \&\&\&\& |[xshift=-2.5em]| {A\substT{(\sigma \comp \tau) \comp \cwfproj}} \\
	|[xshift=1em]| {A\substT\sigma\substT\tau \substT{\cwfproj \comp \varrho}} \&\&\&\& |[xshift=-1em]| {A \substT {(\sigma \comp \tau) \comp \cwfproj \comp \varrho}} \\
	|[xshift=.5ex]| {A\substT\sigma\substT\tau\substT{\cwfproj}[\varrho]} \&\&\&\& |[xshift=-.5ex]| {A \substT {((\sigma \comp \tau) \comp \cwfproj) \comp \varrho}} \\
	{A\substT\sigma \substT {\tau \comp \cwfproj}[\varrho]} \&\&\&\& {A \substT {(\sigma \comp \tau \comp \cwfproj) \comp \varrho}} \\
	{A\substT\sigma \substT {\cwfproj \comp \tlift}[\varrho]} \&\&\&\& {A \substT {(\sigma \comp \cwfproj \comp \tlift) \comp \varrho}} \\
	|[xshift=.5ex]| {A\substT\sigma\substT\cwfproj\substT\tlift[\varrho]} \&\&\&\& |[xshift=-.5ex]| {A \substT {((\sigma \comp \cwfproj) \comp \tlift) \comp \varrho}} \\
	|[xshift=1em]| {A \substT{\cwfproj} \substT\slift \substT\tlift [\varrho]} \&\&\&\& |[xshift=-1em]| {A \substT {((\cwfproj \comp \slift) \comp \tlift) \comp \varrho}} \\
	|[xshift=3em]| {A \substT{\cwfproj} \substT{\slift\comp\tlift} [\varrho]} \&\&\&\& |[xshift=-3em]| {A \substT {(\cwfproj \comp \slift \comp \tlift) \comp \varrho}} \\[-1ex]
	\& |[xshift=-1.5em]| {A \substT{\cwfproj} \substT{(\slift\comp\tlift) \comp \varrho}} \&\& |[xshift=1.5em]| {A \substT{\cwfproj \comp (\slift \comp \tlift) \comp \varrho}}
	\arrow["{\invsubstTcomp}"{xshift=2ex,yshift=0ex}, from=2-2, to=1-3]
	\arrow["{\eqtypesubstT\substTcomp\cwfproj}"'{xshift=-.25ex,yshift=-1ex}, from=2-2, to=3-1]
	\arrow["{\eqsubsubstT{\pathinv\cwfprojbeta}}"{xshift=-.5ex,yshift=-.25ex}, from=1-3, to=2-4]
	\arrow["{\eqsubsubstT\cwfprojbeta}"{xshift=1ex,yshift=-2ex}, from=2-4, to=3-5]
	\arrow["{\eqsubsubstT{\pathinv\cwfprojbeta}}"'{yshift=-2ex}, from=3-1, to=4-1]
	\arrow["{\eqsubsubstT{(\sigma \comp \tau) \whisker \pathinv\cwfprojbeta}}"{xshift=.5ex,yshift=-2ex}, from=3-5, to=4-5]
	\arrow["{\substTcomp}"'{yshift=-2ex}, from=4-1, to=5-1]
	\arrow["{\eqsubsubstT{\invassociator}}"{yshift=-2ex}, from=4-5, to=5-5]
	\arrow["{\invsubstTcomp[\varrho]}"'{yshift=-2ex}, from=5-1, to=6-1]
	\arrow["{\eqsubsubstT{\associator\whisker\varrho}}"{yshift=-2ex}, from=5-5, to=6-5]
	\arrow["{\eqtypesubstT{\eqsubsubstT{\pathinv\cwfprojbeta}}\varrho}"', from=6-1, to=7-1]
	\arrow[""'{name=U}, ""{name=U'}, from=6-5, to=6-1]
	\arrow["{\eqsubsubstT{(\sigma \whisker \pathinv\cwfprojbeta) \whisker \varrho}}", from=6-5, to=7-5]
	\arrow["{\eqtypesubstT\substTcomp\varrho}"'{yshift=2ex}, from=7-1, to=8-1]
	\arrow["{\eqsubsubstT{\invassociator \whisker \varrho}}"{yshift=2.5ex}, from=7-5, to=8-5]
	\arrow["\xi"'{yshift=2ex}, from=8-1, to=9-1]
	\arrow[""'{name=V}, ""{name=V'}, from=8-5, to=8-1]
	\arrow["{\eqsubsubstT{(\pathinv\cwfprojbeta \whisker \tlift) \whisker \varrho}}"{yshift=2.5ex}, from=8-5, to=9-5]
	\arrow["{\eqtypesubstT\invsubstTcomp\varrho}"'{xshift=-.25ex,yshift=2.25ex}, from=9-1, to=10-1]
	\arrow["{\eqsubsubstT{\associator \whisker \varrho}}"{xshift=.5ex,yshift=2ex}, from=9-5, to=10-5]
	\arrow["{\invsubstTcomp}"'{xshift=-1.25ex,yshift=1.75ex}, from=10-1, to=11-2]
	\arrow["{\eqsubsubstT{\associator}}"{xshift=.75ex,yshift=1.5ex}, from=10-5, to=11-4]
	\arrow["{\substTcomp}", from=11-4, to=11-2, ""'{name=W}]
	\arrow[commutesstyle, "\text{(I)}"{description,yshift=-1ex}, from=1-3, to=U]
	\arrow[commutesstyle, "\text{(II)}"{description}, from=U', to=V]
	\arrow[commutesstyle, "\text{(III)}"{description}, from=V', to=W]
  \end{tikzcd}}\]
\end{diag}
  \begin{diag}
  \renewcommand{\comp}{\subcomp}
  \newcommand{\pp}{\cwfproj}
  \caption[]{
    Filling regions (I) and (II) of \Cref{diag:substT-cleaving-equality} with coherence cells.
    Regions marked (1) are filled using type pentagonators (\Cref{def:type-pentagonators}); those marked (2), by naturality of \(\substTcomp\) (\Cref{lem:substTcomp-nat-iso}); (3), by associativity of whiskering (\Cref{prop:whiskering-properties}); and (4), by the pentagon associator of the category of contexts (\Cref{def:type-pentagonators}).
  }\label{diags:substT-cleaving-equality-1-2}

  \begin{subfigure}{\textwidth}
  \[\mathclap{\begin{tikzcd}[
	ampersand replacement=\&,
    column sep=-4em,
    nodes={font=\footnotesize} ]
	\& [-1ex] \& [-0.5ex] \& [1.5em] {A \substT{(\sigma \comp \tau) \comp \pp}} \& [1.5em] \& [-1ex] \& [-1.5ex] \& \\[-2em]
	\&\& |[xshift=-3.5em]| {A \substT{\sigma \comp \tau} \substT{\pp}} \&\& |[xshift=3.5em]| {A \substT{\pp \comp \stlift}} \\[-.5ex]
	\& |[xshift=-5em]| {A \substT{\sigma} \substT{\tau} \substT{\pp}} \&\&\&\& |[xshift=5em]| {A \substT{ (\sigma \comp \tau) \comp \pp }} \\[-1.5em]
	\&\& {A \substT\sigma \substT{\tau \comp \pp}} \&\& {A \substT{\sigma \comp \tau \comp \pp}} \\[-1.5em]
	|[xshift=2em]| {A \substT\sigma \substT\tau \substT{\pp \comp \varrho}} \&\&\&\&\&\& |[xshift=-2em]| {A \substT{ (\sigma \comp \tau) \comp \pp \comp \varrho }} \\[-1em]
	\&\& {A \substT\sigma \substT{\tau \comp \pp \comp \varrho}} \&\& {A \substT{\sigma \comp \tau \comp \pp \comp \varrho}} \\[-2.25em]
	|[xshift=1.25em]| {A \substT\sigma \substT\tau \substT\pp \substT\varrho} \&\&\&\&\&\& |[xshift=-1.25em]| {A \substT{((\sigma \comp \tau) \comp \pp) \comp \varrho}} \\
	|[xshift=1em]| {A \substT\sigma \substT{\tau \comp \pp} \substT\varrho} \&\& {A \substT\sigma \substT{(\tau \comp \pp) \comp \varrho}} \&\& {A \substT{\sigma \comp (\tau \comp \pp) \comp \varrho}} \&\& |[xshift=-1em]| {A \substT{(\sigma \comp \tau \comp \pp) \comp \varrho}}
	\arrow["{\eqsubsubstT{ \pathinv\cwfprojbeta }}"{yshift=.5ex}, from=1-4, to=2-5, path]
	\arrow[equal, scaling nfold=3, bend right=5, from=1-4, to=3-6]
	\arrow["{{\invsubstTcomp}}"{xshift=1ex,yshift=.5ex}, from=2-3, to=1-4, path]
	\arrow["{{\eqtypesubstT\substTcomp\pp}}"'{xshift=-1ex,yshift=-.5ex}, from=2-3, to=3-2, path]
	\arrow["{\eqsubsubstT{ \cwfprojbeta }}"{xshift=2ex,yshift=-1.25ex}, from=2-5, to=3-6, path]
	\arrow["{\eqsubsubstT{ \pathinv\cwfprojbeta }}"'{xshift=-1ex,yshift=-1ex}, from=3-2, to=5-1, path]
	\arrow["{\eqsubsubstT{ (\sigma \comp \tau) \whisker \pathinv\cwfprojbeta }}"{xshift=1ex,yshift=-1.5ex}, from=3-6, to=5-7, path]
	\arrow["\substTcomp"'{yshift=.5ex}, from=4-3, to=3-2, path, ""{name=V}]
	\arrow["\eqsubsubstT\associator"'{yshift=.25ex}, from=3-6, to=4-5, path,""{name=W}]
	\arrow["\substTcomp"'{yshift=.75ex,name=U}, from=4-5, to=4-3, path, ""{name=U'}]
	\arrow["{{\substTcomp}}"'{xshift=-1ex,yshift=-1ex}, from=5-1, to=7-1, path]
	\arrow["{\eqsubsubstT{ \invassociator }}"{xshift=1ex,yshift=-1.25ex}, from=5-7, to=7-7, path]
	\arrow["\eqsubsubstT{\tau \whisker \pathinv\cwfprojbeta}"{description, yshift=4pt}, from=4-3, to=6-3, pathnearend]
	\arrow["\eqsubsubstT{\sigma \whisker (\tau \whisker \pathinv\cwfprojbeta)}"{description, yshift=4pt}, from=4-5, to=6-5, pathnearend]
	\arrow["\substTcomp"{yshift=-.5ex,name=V''}, from=6-3, to=5-1, path, ""'{name=V'}]
	\arrow["\eqsubsubstT\associator"{yshift=-.5ex,name=W''}, from=5-7, to=6-5, path, ""'{name=W'}]
	\arrow["\substTcomp"{yshift=-.75ex,name=U'''}, from=6-5, to=6-3, path, ""'{name=U''}]
	\arrow["{{\eqtypesubstT\invsubstTcomp\varrho}}"'{xshift=-.5ex,yshift=-1.5ex}, from=7-1, to=8-1, path]
	\arrow["{\eqsubsubstT{ \associator \whisker \varrho }}"{xshift=1ex,yshift=-1.5ex}, from=7-7, to=8-7, path]
	\arrow["\eqsubsubstT\associator"{description, yshift=-3pt}, from=8-3, to=6-3, pathnearend]
	\arrow["{{\substTcomp}}"{yshift=-.75ex}, from=8-3, to=8-1, path,""'{name=V'''}]
	\arrow["\eqsubsubstT{\sigma \whisker \associator}"{description, yshift=-3pt}, from=8-5, to=6-5, pathnearend]
	\arrow["{{\substTcomp}}"{yshift=-.75ex}, from=8-5, to=8-3, path, ""'{name=U''''}]
	\arrow["{\eqsubsubstT{ \associator }}"{yshift=-.75ex}, from=8-7, to=8-5, path,""'{name=W'''}]
	\arrow[commutesstyle, "(1)"description, from=1-4, to=U]
	\arrow[commutesstyle, "(2)"description, from=U', to=U'']
	\arrow[commutesstyle, "(2)"description, from=U''', to=U'''']
	\arrow[commutesstyle, "(2)"description, from=V, to=V']
	\arrow[commutesstyle, "(1)"description, from=V'', to=V''']
	\arrow[commutesstyle, "(3)"description, from=W, to=W']
	\arrow[commutesstyle, "(4)"description, from=W'', to=W''']
  \end{tikzcd}}\]
  \caption*{Region (I)}
\end{subfigure}

  \begin{subfigure}{\textwidth}
  \[\mathclap{\begin{tikzcd}[
    ampersand replacement=\&,
    column sep=-3.75em,
    nodes={font=\footnotesize} ]
    {A \substT\sigma \substT{\tau \comp \pp} \substT\varrho} \& [4.75em] {A \substT\sigma \substT{(\tau \comp \pp) \comp \varrho}} \&\& {A \substT{\sigma \comp (\tau \comp \pp) \comp \varrho}} \& [4.75em] {A \substT{(\sigma \comp \tau \comp \pp) \comp \varrho}} \\[1em]
    \& {A \substT\sigma \substT{(\pp \comp \tlift) \comp \varrho}} \&\& {A \substT{\sigma \comp (\pp \comp \tlift) \comp \varrho}} \\[-1.5em]
    {A \substT\sigma \substT{\pp \comp \tlift} \substT\varrho} \&\&\&\& {A \substT{(\sigma \comp \pp \comp \tlift) \comp \varrho}} \\
    \&\& {A \substT{\sigma \comp \pp \comp \tlift} \substT\varrho} \\
    {A \substT\sigma \substT\pp \substT\tlift \substT\varrho} \& {A \substT{\sigma \comp \pp} \substT\tlift \substT\varrho} \&\& {A \substT{(\sigma \comp \pp) \comp \tlift} \substT\varrho} \& {A \substT{((\sigma \comp \pp) \comp \tlift) \comp \varrho}}
    \arrow["{\eqtypesubstT{\pathinv\cwfprojbeta}\varrho}"', from=1-1, to=3-1,path]
    \arrow["\substTcomp"'{yshift=.5ex}, from=1-2, to=1-1,path,""{name=U}]
    \arrow["{\eqsubsubstT{\pathinv\cwfprojbeta \whisker \varrho}}"{description,yshift=3pt}, from=1-2, to=2-2,pathnearend]
    \arrow["\substTcomp"'{yshift=.5ex}, from=1-4, to=1-2,path,""{name=V}]
    \arrow["{\eqsubsubstT{\sigma \whisker (\pathinv\cwfprojbeta \whisker \varrho)}}"{description,yshift=3pt}, from=1-4, to=2-4,pathnearend]
    \arrow["{\eqsubsubstT\associator}"'{yshift=.5ex}, from=1-5, to=1-4,path,""{name=W}]
    \arrow["{\eqsubsubstT{(\sigma \whisker \pathinv\cwfprojbeta) \whisker \varrho}}", from=1-5, to=3-5,path]
    \arrow["\substTcomp"{yshift=-.5ex}, from=2-2, to=3-1,path,""'{name=U'}]
    \arrow["\substTcomp"{yshift=-.5ex,name=V''}, from=2-4, to=2-2,path,""'{name=V'}]
    \arrow["{\eqtypesubstT\substTcomp\varrho}"', from=3-1, to=5-1,path]
    \arrow["{\eqsubsubstT\associator}"{xshift=1ex,yshift=-.5ex}, from=3-5, to=2-4,path,""'{name=W'}]
    \arrow["\substTcomp"{name=X}, from=3-5, to=4-3,path]
    \arrow["{\eqsubsubstT{\invassociator \whisker \varrho}}", from=3-5, to=5-5,path]
    \arrow["{\eqtypesubstT\substTcomp\varrho}"{name=U''}, from=4-3, to=3-1,path]
    \arrow["{\eqtypesubstT{\eqsubsubstT\invassociator}\varrho}"{xshift=1ex,yshift=-1ex}, from=4-3, to=5-4,path]
    \arrow["{\eqtypesubstT{\eqtypesubstT\substTcomp\tlift}\varrho}"{yshift=-1ex}, from=5-2, to=5-1,path]
    \arrow["{\eqtypesubstT\substTcomp\varrho}"{yshift=-1ex}, from=5-4, to=5-2,path]
    \arrow["\substTcomp"{yshift=-1ex}, from=5-5, to=5-4,path,""'{name=X'}]
    \arrow[commutesstyle,from=U,to=U',"(2)"description]
    \arrow[commutesstyle,from=U'',to=5-2,"(1)"description]
    \arrow[commutesstyle,from=V,to=V',"(2)"description]
    \arrow[commutesstyle,from=V'',to=4-3,"(1)"description]
    \arrow[commutesstyle,from=W,to=W',"(3)"description]
    \arrow[commutesstyle,from=X,to=X',"(2)"description]
  \end{tikzcd}}\]
  \caption*{Region (II)}
\end{subfigure}
\end{diag}
  \begin{diag}
  \renewcommand{\comp}{\subcomp}
  \newcommand{\pp}{\cwfproj}
  \caption[]{
    Filling region (III) of \Cref{diag:substT-cleaving-equality} with coherence cells.
	Regions marked (1) are filled using type pentagonators (\Cref{def:type-pentagonators}); regions marked (2), by naturality of \(\substTcomp\) (\Cref{lem:substTcomp-nat-iso}).
  }\label{diag:substT-cleaving-equality-3}
  \[\mathclap{\begin{tikzcd}[
    ampersand replacement=\&,
    column sep=-1em,
    nodes={font=\footnotesize}
	]
  	{A \substT\sigma \substT\pp \substT\tlift \substT\varrho} \& [-3em] {A \substT{\sigma \comp \pp} \substT\tlift \substT\varrho} \& [.5em] {A \substT{(\sigma \comp \pp) \comp \tlift} \substT\varrho} \& [-2.75em] {A \substT{((\sigma \comp \pp) \comp \tlift) \comp \varrho}} \\
	{A \substT{\sigma \comp \pp} \substT\tlift \substT\varrho} \&\& |[xshift=-1em]| {A \substT{(\pp \comp \slift) \comp \tlift} \substT\varrho} \& {A \substT{((\pp \comp \slift) \comp \tlift) \comp \varrho}} \\
	|[xshift=.5em]| {A \substT{\pp \comp \slift} \substT\tlift \substT\varrho} \&\& |[xshift=-1em]| {A \substT{\pp \comp \slift \comp \tlift} \substT\varrho} \& |[xshift=-.5em]| {A \substT{(\pp \comp \slift \comp \tlift) \comp \varrho}} \\
	|[xshift=2em]| {A \substT\pp \substT\slift \substT\tlift \substT\varrho} \&\&\& |[xshift=-2em]| {A \substT{\pp \comp (\slift \comp \tlift) \comp \varrho}} \\
	\& |[xshift=-2em]| {A \substT\pp \substT{\slift \comp \tlift} \substT\varrho} \& |[xshift=2em]| {A \substT\pp \substT{(\slift \comp \tlift) \comp \varrho}}
	\arrow["{\eqtypesubstT{\eqtypesubstT\invsubstTcomp\tlift}\varrho}"', from=1-1, to=2-1, path]
	\arrow["{\eqtypesubstT{\eqtypesubstT{\substTcomp}\tlift}\varrho}"'{yshift=1ex}, from=1-2, to=1-1, path]
	\arrow[equal, scaling nfold=3, from=1-2, to=2-1, start anchor={[xshift=2ex]},""{name=U}]
	\arrow["{\eqtypesubstT\substTcomp\varrho}"'{yshift=1ex}, from=1-3, to=1-2, path]
	\arrow["{\eqtypesubstT{\eqsubsubstT{\pathinv\cwfprojbeta \whisker \tlift}}\varrho}"', from=1-3, to=2-3, path, start anchor={[xshift=-.5em]}]
	\arrow["\substTcomp"'{yshift=.5ex}, from=1-4, to=1-3, path,""{name=W}]
	\arrow["{\eqsubsubstT{(\pathinv\cwfprojbeta \whisker \tlift) \whisker \varrho}}", from=1-4, to=2-4, path]
	\arrow["{\eqtypesubstT{\eqtypesubstT{\pathinv\cwfprojbeta}\tlift}\varrho}"'{xshift=-.5ex, yshift=1.5ex}, from=2-1, to=3-1, path]
	\arrow["{\eqtypesubstT\substTcomp\varrho}"{name=V}, from=2-3, to=3-1, path, start anchor={[xshift=-1.5ex, yshift=1.5ex]}, end anchor={[yshift=-.75ex]},""{name=U'}]
	\arrow["{\eqtypesubstT{\eqsubsubstT\associator}\varrho}"', from=2-3, to=3-3, path]
	\arrow["\substTcomp"'{yshift=.5ex,name=W'}, from=2-4, to=2-3, path,""{name=X}]
	\arrow["{\eqsubsubstT{\associator \whisker \varrho}}"{xshift=.5ex,yshift=1.25ex}, from=2-4, to=3-4, path]
	\arrow["{\eqtypesubstT{\eqtypesubstT\substTcomp\tlift}\varrho}"'{xshift=-.5ex,yshift=1.5ex}, from=3-1, to=4-1, path]
	\arrow["{\eqtypesubstT\substTcomp\varrho}"{name=Y}, from=3-3, to=5-2, path]
	\arrow["\substTcomp"{yshift=-.5ex}, from=3-4, to=3-3, path,""'{name=X'}]
	\arrow["{\eqsubsubstT{\associator}}"{xshift=.5ex,yshift=1.25ex}, from=3-4, to=4-4, path]
	\arrow["{\eqtypesubstT\invsubstTcomp\varrho}"'{xshift=-1ex,yshift=.75ex}, from=4-1, to=5-2, path]
	\arrow["\substTcomp", from=4-4, to=5-3, path]
	\arrow["\invsubstTcomp"'{yshift=-.5ex}, from=5-2, to=5-3, path]
	\arrow[commutesstyle, from=U, to=U', "(2)"{description}]
	\arrow[commutesstyle, from=V, to=5-2, "(1)"']
	\arrow[commutesstyle, from=W, to=W', "(2)"{description}]
	\arrow[commutesstyle, from=X, to=X', "(2)"{description}]
	\arrow[commutesstyle, from=3-3, to=5-3, "(1)"{description}]
  \end{tikzcd}}\]
\end{diag}
\end{proof}

\clearpage

Putting the previous results together, we have:

\begin{theorem}[Split comprehension for set-level and univalent cwfs]\label{cor:set-level-univalent-cwf-split}
  Any set-level or univalent 2-coherent wild cwf \(\thecwf\) is coherently split.
\end{theorem}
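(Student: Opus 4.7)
The plan is to reduce the claim to \Cref{lem:substT-cleaving-equality} together with propositionality (or set-ness) of $\Pullback(\fc)$ on the appropriate cospan, and dispatch the two cases uniformly from there.

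First I would observe that both sides of the equality in \Cref{def:split-cleaving} are in fact elements of $\Pullback_{(\sigma \subcomp \tau,\, \cwfproj_A)}$: the left-hand side by \Cref{thm:substT-pb} applied to the composite $\sigma \subcomp \tau$ and $A$, and the right-hand side by horizontal pullback pasting (the corollary to \Cref{lem:vertical-pullback-pasting}, obtained by transposes), applied to $\substTpb{\tau}{A\substT\sigma}$ and $\substTpb{\sigma}{A}$. Since $\ispb_{\fc,\,P}$ is propositional for any $\fc$ and $P$, the equality type of two commuting squares that both carry pullback structure is equivalent to the equality type of the corresponding elements of $\Pullback(\fc)$; in particular, showing contractibility of one is equivalent to showing contractibility of the other.

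Next I would case split. In the set-level case, \Cref{prop:pullback-is-set-set-level-cats} tells us that $\Pullback(\fc)$ is a set, so the equality type in question is a proposition; combined with the explicit equality constructed in \Cref{lem:substT-cleaving-equality}, this gives contractibility. In the univalent case, \Cref{prop:pullback-is-prop-univalent-wild-cats} says that $\Pullback(\fc)$ is already a proposition, whence the equality type of any two of its elements is automatically contractible. (In this second case one does not even need to invoke \Cref{lem:substT-cleaving-equality} for inhabitation.)

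Since the substantive computations have all been carried out in \Cref{thm:substT-pb} and \Cref{lem:substT-cleaving-equality}, I anticipate no real obstacle beyond the small bookkeeping of translating the equality of commuting squares into an equality of pullbacks via propositionality of $\ispb$.
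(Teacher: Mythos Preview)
Your proposal is correct and follows essentially the same route as the paper: invoke \Cref{lem:substT-cleaving-equality} for inhabitation, then use \Cref{prop:pullback-is-set-set-level-cats} or \Cref{prop:pullback-is-prop-univalent-wild-cats} to conclude that the equality type in question is a proposition (hence contractible). Your extra bookkeeping---making explicit that both sides lie in $\Pullback(\fc)$ via \Cref{thm:substT-pb} and horizontal pullback pasting, and observing that in the univalent case inhabitation is automatic---is all sound and simply spells out what the paper leaves implicit.
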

\begin{proof}
  By \Cref{cor:substT-cleaving-id} and \Cref{lem:substT-cleaving-equality}, the equality types
  \[ (\Gamma \ctxext {A \substT \id},\ \substTpb\id{A}) = (\Gamma \ctxext A,\ \idpb{\cwfproj_A}) \]
  and
  \[
    (\Beta \ctxext A \substT {\sigma \subcomp \tau},\ \substTpb{\sigma \subcomp \tau}{A})
  = (\Beta \ctxext A \substT \sigma \substT \tau,\ \squarehpaste{\substTpb{\tau}{A \substT\sigma}}{\substTpb{\sigma}{A}})
  \]
  are inhabited for all appropriately typed \(A\), \(\sigma\) and \(\tau\).
  By \Cref{prop:pullback-is-set-set-level-cats,prop:pullback-is-prop-univalent-wild-cats} they are propositions when \(\thecwf\) is set-level, or 2-coherently univalent.
\end{proof}


\begin{corollary}
  Thus, the syntax cwf QIIT of Altenkirch and Kaposi~\cite{altenkirch-kaposi:16:tt-in-tt} as well as any univalent universe cwf \(\UniverseCwf\) is split 2-coherent.
\end{corollary}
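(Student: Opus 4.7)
The plan is to apply \Cref{cor:set-level-univalent-cwf-split} to each instance, reducing the corollary to checking 2-coherence together with the relevant side hypothesis: set-levelness for the syntax cwf, and univalence of the category of contexts for any universe cwf over a univalent universe.

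For the syntax cwf QIIT of Altenkirch and Kaposi, the construction takes place in \MLTT{} extended with UIP and QIITs and therefore yields a set-level 1-cwf. 2-coherence is then immediate from the examples following \Cref{def:2-coherent-cwfs}, since in a set-level 1-cwf every equality of 2-cells is trivially inhabited and so all coherators required by \Cref{def:2-coherent-cwfs} exist. The set-level branch of \Cref{cor:set-level-univalent-cwf-split} then applies.

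For a universe cwf \(\UniverseCwf\) over a univalent universe \(\UniverseType\), 2-coherence is again covered by the examples following \Cref{def:2-coherent-cwfs}: composition in \(\UniverseCat\) and substitution in \(\UniverseCwf\)-types and \(\UniverseCwf\)-terms are definitionally associative and unital, so the associator, unitors and substitution-functoriality equations are trivial identities and the required triangle, pentagon and context-extension coherators are all inhabited by \(\refl\). What remains is to verify that \(\UniverseCat\) is univalent in the sense of \Cref{def:idtowildequiv}; the plan is to show that \(\idtowildequiv{\UniverseCat}\) agrees, up to a standard equivalence, with the type-theoretic univalence map on \(\UniverseType\). A \(\UniverseCat\)-equivalence is a biinvertible function, which via function extensionality is propositionally equivalent to a type-theoretic equivalence, so the wild \(\idd\)-map for \(\UniverseCat\) coincides with the usual map sending identifications to equivalences. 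Univalence of \(\UniverseType\) therefore upgrades to wild univalence of \(\UniverseCat\), and the univalent branch of \Cref{cor:set-level-univalent-cwf-split} applies.

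The only mildly substantive step is the coherent identification of biinvertibility with type-theoretic equivalence for functions between types in \(\UniverseType\); this is not really an obstacle but standard \HoTT{} bookkeeping, since both notions are propositions and logically equivalent. Everything else in the argument is just packaging of previously established facts.
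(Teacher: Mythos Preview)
Your proposal is correct and follows the same route the paper intends: apply \Cref{cor:set-level-univalent-cwf-split} after observing that the syntax QIIT is a set-level 1-cwf (hence 2-coherent by the examples following \Cref{def:2-coherent-cwfs}) and that a univalent universe cwf is a univalent 2-coherent wild cwf. The only difference is that you re-derive wild univalence of \(\UniverseCat\) from the univalence axiom via the biinvertible/equivalence comparison, whereas the paper has already recorded this as an example immediately after \Cref{def:univalent-wild-cwf}; citing that example would suffice.
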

\section{Discussion}\label{sec:discussion}

We have given a unified account of the cloven fibrational structure of 2-coherent internal models of homotopical dependent type theory, in such a way so as to include set-level models such as the syntax as well as the higher models given by universe types.

Now, the generalized algebraic presentation of 2-coherent wild cwfs straightforwardly yields an internally definable type of \emph{morphisms} of such, which sets up the possibility of internally studying ``transfers'' of constructions
between internal models; in particular, from the syntax to a universe type.
This is a large part of the motivation of the present work---specifically, the theory developed here is intended to provide a precise formal setting in which to investigate
\begin{enumerate*}[label=(\arabic*)]
  \item the construction of classifiers of semisimplicial and other Reedy fibrant inverse diagrams~\cite{kraus-sattler:17:space-diagrams} in (internal models of) homotopical type theory~\cite{chen-kraus:21:internal,chen-kraus:24:inverse-diagrams}, and
  \item the relation of this problem to that of the self-interpretation of \HoTT~\cite{shulman:14:hott-should-eat-itself}.
\end{enumerate*}

Separately from questions of infinite higher coherent constructions, we hope that our theory can still be useful by immediately specializing to yield notions of 1-truncated ``2-cwfs''.
For instance, by modifying the definition of a 2-coherent wild cwf to additionally require that
\begin{enumerate*}[label=(\arabic*)]
  \item the category of contexts \(\C\) is a precategory,
  \item the presheaf of \(\C\)-types is valued in 1-types, and
  \item the presheaf of \(\C\)-terms is set-valued,
\end{enumerate*}
we obtain a simple higher generalization of the notion of a 1-cwf, which conjecturally includes the container higher model of type theory~\cite{altenkirch-kaposi:21:container-model} as an instance.
Then via Rezk completion and our results, any instance of such a higher cwf should be equivalent to a split one.
A less na\"{i}ve approach would be to use univalent bicategory theory~\cite{ahrens+:21:bicat-univalent-foundations}, noting \Cref{prop:2-coh-wild-cat-prebicat}, to develop a full theory of 2-cwfs.

As a final remark, we have developed wild categories with families for their anticipated applicability to specific further internal constructions, but we also expect the study of wild natural models and wild comprehension categories to prove complementarily fruitful.




\pagebreak
\phantomsection
\printbibliography[heading=bibintoc]

\end{document}